\theoremstyle{plain}
\newtheorem{theorem}{Theorem}[section]
\newtheorem{lemma}[theorem]{Lemma}
\newtheorem{corollary}[theorem]{Corollary}
\theoremstyle{definition}
\newtheorem{definition}[theorem]{Definition}
\theoremstyle{remark}
\newtheorem{remark}[theorem]{Remark}
\newtheorem*{theorem*}{Theorem}
\newtheorem*{lemma*}{Lemma}
\newtheorem{fact}[theorem]{Fact}
\newtheorem*{fact*}{Fact}
\newtheorem*{proposition*}{Proposition}
\newtheorem*{corollary*}{Corollary}
\newtheorem*{hypothesis*}{Hypothesis}
\newtheorem*{conjecture*}{Conjecture}
\theoremstyle{definition}
\newtheorem*{definition*}{Definition}
\newtheorem*{construction*}{Construction}
\newtheorem*{example*}{Example}
\newtheorem*{question*}{Question}
\newtheorem*{assumption*}{Assumption}
\newtheorem*{problem*}{Problem}
\newtheorem{model}[theorem]{Model}
\newtheorem*{model*}{Model}
\theoremstyle{remark}
\newtheorem*{claim*}{Claim}
\newtheorem*{remark*}{Remark}
\newtheorem*{observation*}{Observation}
\newcommand{\paren}[1]{(#1)}
\newcommand{\Paren}[1]{\left(#1\right)}
\newcommand{\Brac}[1]{\left[#1\right]}
\newcommand{\Abs}[1]{\left\lvert#1\right\rvert}
\newcommand{\card}[1]{\lvert#1\rvert}
\newcommand{\Card}[1]{\left\lvert#1\right\rvert}
\newcommand{\Set}[1]{\left\{#1\right\}}
\newcommand{\Norm}[1]{\left\lVert#1\right\rVert}
\newcommand{\Snorm}[1]{\Norm{#1}^2}
\newcommand{\iprod}[1]{\langle#1\rangle}
\newcommand{\Esymb}{\mathbb{E}}
\DeclareMathOperator*{\E}{\Esymb}
\newcommand{\suchthat}{\;\middle\vert\;}
\newcommand{\sge}{\succeq}
\newcommand{\sle}{\preceq}
\newcommand{\super}[2]{#1^{\paren{#2}}}
\newcommand\bdot\bullet
\DeclareMathOperator{\Tr}{Tr}
\DeclareMathOperator{\poly}{poly}
\DeclareMathOperator{\polylog}{polylog}
\newcommand{\R}{\mathbb R}
\newcommand{\important}[1]{\texorpdfstring{\textup{\textsc{#1}}}{#1}}
\newcommand{\oracle}{{\important{oracle}} }
\newcommand{\cS}{\mathcal S}
\newcommand{\cT}{\mathcal T}
\newcommand{\cX}{\mathcal X}
\newcommand{\bbP}{\mathbb P}
\renewcommand{\leq}{\leqslant}
\renewcommand{\geq}{\geqslant}
\renewcommand{\ge}{\geqslant}
\let\epsilon=\varepsilon
\newcommand{\eps}{\epsilon}
\newcommand*{\Id}{\mathrm{Id}}
\DeclareMathOperator{\diag}{diag}
\newcommand*{\transpose}[1]{{#1}{}^{\mkern-1.5mu\mathsf{T}}}
\renewcommand{\ij}{{ij}}
\icmltitlerunning{Fast Algorithm for Beyond-Worst-Case Graph Clustering}
\begin{document}
\onecolumn
\icmltitle{A Near-Linear Time Approximation Algorithm for Beyond-Worst-Case Graph Clustering}



\icmlsetsymbol{equal}{*}

\begin{icmlauthorlist}
\icmlauthor{Vincent Cohen-Addad}{equal,google}
\icmlauthor{Tommaso d'Orsi}{equal,google,bocconi}
\icmlauthor{Aida Mousavifar}{equal,google}
\end{icmlauthorlist}

\icmlaffiliation{google}{Google Research}
\icmlaffiliation{bocconi}{BIDSA, Bocconi}

\icmlkeywords{Balanced cut, near-linear time, graph clustering, semi-random models, Machine Learning, matrix multiplicative weights, ICML}

\vskip 0.3in



\printAffiliationsAndNotice{\icmlEqualContribution} 

\begin{abstract}
We consider the semi-random graph model of \cite{DBLP:conf-stoc-MakarychevMV12}, where, given a random bipartite graph with $\alpha$ edges and an unknown bipartition $(A, B)$ of the vertex set, an adversary can add arbitrary edges inside each community and remove arbitrary edges from the cut $(A, B)$ (i.e. all adversarial changes are \textit{monotone} with respect to the bipartition). 
For this model, a polynomial time algorithm is known to approximate the Balanced Cut problem up to value $O(\alpha)$ \cite{DBLP:conf-stoc-MakarychevMV12} as long as the cut $(A, B)$ has size $\Omega(\alpha)$. However, it consists of slow subroutines requiring optimal solutions for logarithmically many semidefinite programs.
We study the fine-grained complexity of the problem and present the first near-linear time algorithm that
achieves similar performances to that of \cite{DBLP:conf-stoc-MakarychevMV12}. Our algorithm runs in time $O(|V(G)|^{1+o(1)} + |E(G)|^{1+o(1)})$ and finds a balanced cut of value $O(\alpha)\,.$
Our approach appears easily extendible to related problem, such as Sparsest Cut,  and also yields an near-linear time $O(1)$-approximation to Dagupta's objective
function for hierarchical clustering \cite{Dasgupta16} for the semi-random hierarchical stochastic
block  model inputs of \cite{cohen2019hierarchical}.
\end{abstract}


\section{Introduction}\label{section:introduction}
Graph clustering and partitioning problems are central in combinatorial optimization. Their study has led to a large variety of
key results, leading to new fundamental ideas and impactful practical outcomes.
The sparsest cut and balanced cut problems are iconic examples: On the one hand, they have served as a testbed for designing new breakthrough
algorithmic techniques, from the seminal paper of Leighton and Rao~\cite{DBLP:conf-focs-/LeightonR88} up to the results of Arora, Rao, and Vazirani~\cite{DBLP:conf-stoc-AroraRV04}
and Sherman~\cite{DBLP:conf-focs-Sherman09}. On the other hand, they are models for graph partitioning problems in various data mining
and unsupervised machine learning applications and have thus inspired widely-used heuristics in more applied fields.

\paragraph*{Beyond worst-case instances}
A frustrating gap exists between the impressive theoretical results obtained over the last three decades and the
success of heuristics used in practice. While poly-logarithmic approximation algorithms have been developed for balanced cut
and sparsest cut (and related problem such as minimum bisection~\cite{racke2008optimal}, multicut~\cite{garg1996approximate}, min uncut~\cite{goemans1995improved, agarwal2005log}), the algorithm design
community has had little success in obtaining constant factor approximation algorithms for these problems. In fact, the Unique
Games Conjecture even suggests that such bounds may be very hard to obtain~\cite{khot2007optimal, khot2015unique, raghavendra2008optimal, raghavendra2012reductions}. Thus, to be able to show good approximation bounds
and design algorithms that are tailored to real-world instances, one must shift the focus from the worst-case to the
so called \emph{beyond-worst-case} complexity of the problems.
  
This conclusion has seeded a long line of work  aimed at modeling
average instances encountered in practice and designing algorithms for these models~\cite{dyer1986fast, bui1987graph, boppana1987eigenvalues, feige2001heuristics, mcsherry2001spectral} (or analyzing existing algorithms in
these models~\cite{jerrum1993simulated, dimitriou1998go, bilu2012stable}). 
For the model to be relevant it should forbid pathological instances that are
extremely unlikely in practice while capturing the essence of the real-world instances without oversimplifying them.

While there has been a significant amount of work on inference in random and semi-random graph models,
the work of Makarychev, Makarychev and Vijayaraghavan~\cite{DBLP:conf-stoc-MakarychevMV12} is among the first to analyze the
approximability and complexity of the graph partitioning objectives mentioned above for extremely general families of semi-random graphs.
In their settings, the input is generated from a distribution over graphs that exhibit a cluster structure. Concretely, the graph consists of
two communities and a planted random cut between the communities,  the adversary can modify the graph in agreement with the cluster structure
by arbitrarily adding edges within the communites and / or sparsifying the random cut across communities,\footnote{These are often times referred to as \textit{monotone} perturbations. Such perturbations may have surprising effects on the statistical and computational aspects of the problem. For instance see \cite{moitra2016robust, liu2022minimax}.} see \cref{model:main} for a precise definition.
In this context, the goal is not to recover the underlying cluster structure -- which may be information-theoretically impossible --
but rather to provide a good approximation to the cut objectives.

The motivation for studying such models is the following. In practice, the graphs we aim at clustering have an
unknown underlying cluster structure that we would like to identify -- and that's why we are running a clustering algorithm in
the first place. In this context, on the one hand the intra-cluster topology may be very peculiar and so possibly adversarial (hence we
would like to let the adversary freely choose
the intra-cluster topology\footnote{We remark  this model is significantly more general than the stochastic block model, see \cref{section:related-research}}), on the other hand the inter-cluster topology is often more random, sometimes interpreted as noise between clusters and hence modeled as a random cut,
see also the discussion and motivating examples provided in~\cite{DBLP:conf-stoc-MakarychevMV12}.

Of course, allowing the adversary to make the 
planted cut denser -- and by doing so to smooth out the underlying cluster structure --
would bring us back to the worst-case setting; the semi-random model proposed above is thus a step in between.


Hence, with the idea of bridging the gap between worst-case complexity of the problems and real-world instances,
Makarychev, Makarychev, Vijayaraghavan~\cite{DBLP:conf-stoc-MakarychevMV12} developed a general algorithmic framework for 
graph partitioning problems in the above semi-random instances which
achieves an $O(1)$-approximation algorithm (for a wide array of parameters) for balanced cut and sparsest cut and related problems
such as multicut, min uncut and small set expansion.

While the result of~\cite{DBLP:conf-stoc-MakarychevMV12} is close to optimal in the sense that it achieves an $O(1)$-approximation
for several classic graph partitioning problems and a wide range of parameters, it relies on an heavy machinery that requires to iteratively solve multiple semi-definite programs.
In fact the running time is not stated in the paper and seem to require $\Omega(n^3)$ time for the rounding on top of the time it takes to obtain optimal solutions to
polylogarithmically many semi-definite programs with more than $\Omega(n^3)$ constraints.\footnote{We point out that the algorithm  \textit{requires} an \textit{actual} feasible solution with nearly optimal objective value and not a rounded solution.}
We initiate the study of the \emph{fine-grained} complexity of the problem and ask: \textit{How fast can we solve beyond-worst-case
instances (involving semi-random perturbations)?}


\subsection{Results}
Before providing our main theorem, we introduce the model of interest.
\begin{model}[Random cut with monotone perturbations]\label{model:main}
	We consider graphs over $n$ vertices generated through the following process. Let $a \in (0,1/2)$, $\eta(n)\in(0,1)$:
	\begin{enumerate}
		\item[(i)] The adversary partitions $[n]$ into sets $A, B$ satisfying $\card{A}, \card{B}\geq an$.
		\item[(ii)] Each edge between $A$ and $B$ is drawn randomly and independently with probability $\eta$.
		\item[(iii)] The adversary arbitrarily adds edges within $A$ and within $B$.
		\item[(iv)] The adversary arbitrarily removes edges between $A$ and $B$.
	\end{enumerate}
\end{model}

Our main result is an algorithm that, given an instance of \cref{model:main} with a $\Omega(n^2\cdot \eta)$-sized $(A,B)$ cut, returns a $O(1)$-approximation in almost linear time.\footnote{We write $o(1)$ to denote  real-valued functions tending to zero as $n$ grows.}
\begin{theorem}\label{theorem:main}
  Let $G$ be a graph over $n$ vertices generated through \cref{model:main} with parameters $a>0,\eta\geq \Omega(\frac{(\log  n)^2 \cdot (\log\log n)^2}{n})\,.$ 
  There exists an algorithm that on input $G$, with  probability $1-o(1)$, outputs an $\Omega(a)$-balanced cut of value at most $O(n^2\cdot \eta)$, namely a cut where each side has size at least $\Omega(a \cdot n)$.
  
  Moreover, the algorithm runs in time $O\Paren{\Card{V(G)}^{1+o(1)}+\Card{E(G)}^{1+o(1)}}$. 
\end{theorem}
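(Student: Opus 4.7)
The plan is to follow the blueprint of \cite{DBLP:conf-stoc-MakarychevMV12} -- namely, solve a semidefinite relaxation of balanced cut, round it, and iteratively clean the solution in order to absorb the adversarial perturbations of step~(iii) of \cref{model:main} -- but to replace every heavy primitive (exact SDP solves, eigendecompositions, relaxation rounds over $\Omega(n^2)$-constraint programs) by a matrix multiplicative weights (MMW) style subroutine that only needs sparse matrix--vector products and approximate top eigenvectors. The starting observation is that the planted bipartition $(A,B)$, composed with step~(iv), always witnesses an integer SDP solution of value $O(n^2\eta)$, so the relaxation has optimum $O(n^2\eta)$ no matter what the adversary does in~(iii).

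First I would set up the standard vector program for minimum balanced cut: assign unit vectors $v_1,\dots,v_n\in\bbR^n$ to minimize
$\tfrac14\sum_{\{i,j\}\in E(G)}\snorm{v_i-v_j}$
subject to the spreading constraint $\sum_{i<j}\snorm{v_i-v_j}\ge \Omega(n^2)$ and the $\ell_2^2$ triangle inequalities that are used by MMV's rounding. I would then solve this SDP to constant multiplicative accuracy in time $n^{1+o(1)}+m^{1+o(1)}$ by instantiating MMW: maintain a PSD iterate $X_t\propto \exp\bigl(-\lambda_t\sum_{s<t}C_s\bigr)$, where $C_s$ encodes a currently violated constraint identified via an approximate top eigenvector of $X_t$. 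Each iteration is implementable by $\polylog(n)$ sparse matrix--vector products against the Laplacian of $G$ and against low-rank corrections, each of which costs $O(m)$ time; after a standard width-reduction preprocessing (e.g.\ peeling vertices of degree $\gg n\eta$, which by a counting argument cannot straddle the cut without contradicting $(A,B)$ being $\Omega(a)$-balanced), the number of iterations needed is $n^{o(1)}$.

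From the output vectors I would first recover a real-valued embedding via a random projection (Johnson--Lindenstrauss into $O(\log n)$ dimensions), then obtain a balanced bipartition by a sweep cut along a random direction. A standard triangle-inequality argument then gives a cut of value $O(\mathrm{SDP})=O(n^2\eta)$ and balance $\Omega(a)$. Finally I would apply an MMV-style monotone cleanup: identify vertices currently placed on the ``wrong'' side (those whose reassignment strictly shrinks the cut), flip them, and repeat. Because perturbations are monotone with respect to $(A,B)$, the set of misplaced vertices shrinks geometrically and only $O(\log n)$ rounds suffice; each round can be executed in $O(n+m)^{1+o(1)}$ by maintaining each vertex's degrees to both sides.

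The main obstacle I expect is the \emph{compatibility} between the inexact MMW primal and the MMV rounding/cleanup analysis: MMV need a truly feasible near-optimal SDP solution, whereas MMW only guarantees the $\Omega(n^2)$-many spreading and triangle constraints up to additive slack, and the objective bound typically holds only in expectation over sampled subgradients. Bridging this will likely require either a structured projection of the MMW iterate onto the feasible set while controlling the objective, or a rewriting of the MMV rounding so that it is robust to constant-factor slack on each constraint. A related difficulty is keeping the MMW width at $n^{o(1)}$: a single adversarial high-degree vertex can inflate the spectral norm of the constraint matrices, so the preliminary peeling step above is not merely an optimization but an essential ingredient of the running-time bound.
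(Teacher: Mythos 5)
You correctly identify the right high-level blueprint (MMV's geometric carving plus MMW for speed) and, importantly, you correctly locate the central obstruction: the MMV heavy-vertex removal and rounding analysis require a \emph{feasible, near-optimal} SDP solution, whereas MMW yields only an approximately feasible iterate. But you leave that obstruction unresolved -- you say bridging it ``will likely require either a structured projection of the MMW iterate onto the feasible set while controlling the objective, or a rewriting of the MMV rounding so that it is robust to constant-factor slack'' -- and neither of those two suggestions is viable: projecting onto the feasible region of a program with $\Omega(n^3)$ triangle-inequality constraints is not cheaper than solving the SDP, and there is no generic way to make MMV's carving argument tolerate an iterate that violates a constant fraction of the constraints. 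The paper's resolution is genuinely different and you haven't anticipated it: it never attempts to make the MMW iterate feasible. Instead it builds a \emph{probabilistic separation oracle} around the heavy-vertex removal procedure (\cref{lemma:oracle-heavy-vertices}) with the following case analysis. If carving the approximate heavy balls cuts few edges, then progress has been made on the graph-partitioning side regardless of whether the iterate was feasible. If carving cuts too many edges, then -- and this is the crux, exploiting the \emph{geometric expansion} property (\cref{definition:geometric-expansion}, \cref{theorem:geometric-expansion-random-graphs}) -- one can show that, with constant probability over the random radius, either the objective $\langle L, X\rangle$ is too large (so a flow-based hyperplane of low width separates $X$), or a linear-in-$n$ number of triangle inequalities are simultaneously violated around the heavy balls (so a low-width combination of $T_p$'s separates $X$). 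Either way the oracle returns a bounded-width feedback matrix and MMW makes progress. Your proposal never invokes geometric expansion, which is the structural property that makes the no-case of this oracle work.

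Three further steps in your proposal are off-track. First, the final ``MMV-style monotone cleanup'' by flipping vertices that locally reduce the cut is local search; the paper explicitly points out (\cref{section:techniques} and the Perspective subsection) that local search is precisely the kind of technique that adversarial monotone perturbations are designed to fool, and the paper does not use it. The actual progress mechanism is iterated heavy-ball carving at geometrically decreasing scales $\delta^{(i)}$, which by geometric expansion decreases the residual optimum by a $O((\delta^{(i)})^2)$ factor each round, until the residual is small enough (by a $1/\log n$ factor) that a single near-linear-time $O(\sqrt{\log n})$-approximate flow-based partition (\cref{lemma:flow-oracle}, via \cite{DBLP:conf-focs-Sherman09,DBLP:journals-corr-abs-2203-00671}) suffices. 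Second, a JL projection followed by a sweep along a random direction gives roughly an $O(\log n)$ or $O(\sqrt{\log n})$ approximation, not $O(1)$; that loss is acceptable only because by the time the paper invokes Sherman's routine the residual objective has already been shrunk by the carving loop. Third, your width-control idea of ``peeling vertices of degree $\gg n\eta$'' is not sound here: the adversary in step~(iii) may add arbitrarily many intra-community edges, so high degree is no evidence of being on one side of the cut, and such vertices cannot be safely discarded; the paper controls width entirely through the structure of the feedback matrices (flow Laplacians and sums of $T_p$'s with small coefficients), not by degree truncation.
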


\cref{theorem:main} is a significant step toward bridging the gap between the theoretically-oriented work of~\cite{DBLP:conf-stoc-MakarychevMV12} and the practical motivation behind semi-random models.
The error guarantees of the underlying algorithm match those of \cite{DBLP:conf-stoc-MakarychevMV12}, but the running time is nearly linear.
Despite the fact that further steps remains to be taken to provide algorithmic solutions that both matches the theoretical guarantees of~\cite{DBLP:conf-stoc-MakarychevMV12} and whose running time is competitive with state-of-the-art Bisection heuristics, our algorithm is a \textit{first} example that general beyond-worst-case graph clustering can be done in near linear time.

Finally, we believe that understanding the fine-grained complexity of balanced cut and related problems beyond-the-worst case is an important line of work and the techniques presented here could lead to further improvements for other related problems for which the beyond-worst-case analysis has been studied (e.g.: Bilu-Linial stability for multicut~\cite{BiluL12,AngelidakisMM17}).

\paragraph*{Generalizations}
Our approach appears to also be easily extendable to other graph problems. As a concrete example, we consider the \textit{semi-random hierarchical stochastic block model} (henceforth HSM) of~\cite{cohen2019hierarchical}. In~\cite{cohen2019hierarchical}, the authors studied the celebrated objective
function for hierarchical clustering introduced by Dasgupta~\cite{Dasgupta16} and investigate how well it can be approximated
beyond-the-worst-case. Assuming the Small Set Expansion hypothesis \cite{raghavendra2010graph}, the problem cannot be approximated within any constant factor. The authors thus introduce a generative model for hierarchical clustering inputs called the \emph{hierarchical stochastic block model} that naturally generalizes the classic stochastic block model, and show that one can approximate Dasgupta's objective
up to a constant factor in that model and under semi-random perturbation (the precise definition of the model can be found in
\cref{section:semi-random-hsm}). In this paper, we significantly improve the complexity of the algorithm of~\cite{cohen2019hierarchical}.

 \begin{theorem}
\label{theo61}
Let $G$ be a graph generated from the HSM (Definition \ref{Definition51}) with $p_{min}= \Omega\left(\log n / n^{2/3} \right)$. Then, there exists a randomized algorithm that runs in time $O\Paren{\Card{V(G)}^{1+o(1)}+\Card{E(G)}^{1+o(1)}}$ with probability $1-o(1)$ outputs a tree $T$ such that
\begin{equation}
\label{eq:eq14}
    cost(T;G) = O(OPT(\bar{G})),
\end{equation}
where $OPT(\bar{G})$ denotes the value of the optimal tree for $\bar{G}$ and we note that $OPT(\bar{G})=cost(\widetilde{T};\bar{G})$, where $\widetilde{T}$ is the generating tree. Furthermore, the above holds even in the semi-random case, i.e., when an adversary is allowed to remove any subset of the edges from $G$.
\end{theorem}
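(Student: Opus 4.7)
The plan is to obtain \cref{theo61} by combining the recursive balanced-cut paradigm for hierarchical clustering from \cite{cohen2019hierarchical} with the fast balanced-cut subroutine of \cref{theorem:main}. Concretely, the algorithm constructs the output tree $T$ top-down: starting from the whole vertex set, it applies \cref{theorem:main} to produce a balanced cut, attaches this cut as an internal node of $T$, and recurses independently on the two sides until the subproblems become too small, at which point the recursion is closed off by an arbitrary tree.

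The first step I would take is to verify that every subproblem generated along the recursion fits \cref{model:main}. Fix any internal node $v$ of the HSM generating tree $\widetilde{T}$ and let $(A_v, B_v)$ be the bipartition of its descendants induced by its two children. By definition of the HSM, edges between $A_v$ and $B_v$ are drawn independently with some probability $p_v \geq p_{\min}$, while edges inside $A_v$ and inside $B_v$ come from the deeper generative process and can be regarded as arbitrary insertions for the purpose of analysing the cross-cut. The semi-random adversary is allowed to delete arbitrary cross-edges, which matches step (iv) of \cref{model:main}. Hence, provided the subproblem is large enough that the parameter condition $\eta = \Omega((\log n)^2 (\log \log n)^2 / n_{\mathrm{sub}})$ is satisfied with $\eta = p_v$, \cref{theorem:main} outputs an $\Omega(a)$-balanced cut of value $O(|A_v \cup B_v|^2 \cdot p_v)$ in near-linear time. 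The cost analysis of $\mathrm{cost}(T;G)$ then follows the charging scheme of \cite{cohen2019hierarchical}: each cut produced at an internal node of $T$ is within an $O(1)$ factor of the corresponding cut in $\widetilde T$ and hence of the contribution to $\OPT(\bar G)$, and summing over all $O(\log n)$ levels of $T$ (since the cuts are $\Omega(a)$-balanced) gives $\mathrm{cost}(T;G) = O(\OPT(\bar G))$.

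For the runtime, at each recursion level the sums of subproblem vertex and edge counts are bounded by $|V(G)|$ and $|E(G)|$ respectively, so applying \cref{theorem:main} to every subproblem at that level costs $O(|V(G)|^{1+o(1)} + |E(G)|^{1+o(1)})$ in aggregate; multiplying by the $O(\log n)$ depth still yields the claimed near-linear bound. The main obstacle I expect is controlling the transition from ``large'' to ``small'' subproblems: we need $p_{\min}$ large enough for \cref{theorem:main} to remain applicable at every recursion node whose size matters for the top-level cost, and we need to argue that once subproblems fall below this threshold the remaining contribution to Dasgupta's objective is already absorbed into $O(\OPT(\bar G))$. This is precisely the role of the assumption $p_{\min} = \Omega(\log n / n^{2/3})$, which keeps \cref{theorem:main} valid down to subproblems of size $\sim n^{2/3}$, matching the threshold used in \cite{cohen2019hierarchical} so that their black-box charging argument transfers directly. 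A final subtlety is a union bound over the $\poly(n)$ recursive subproblems to ensure the high-probability guarantee of \cref{theorem:main} holds simultaneously at every node; this is handled by standard amplification, which costs only a constant factor in the runtime.
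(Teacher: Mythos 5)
Your high‑level plan (recursive balanced cut with a fast subroutine, absorb small subproblems) matches the paper, but the way you invoke the fast balanced‑cut result has a genuine gap that the paper's proof is specifically designed to avoid.

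You propose to apply \cref{theorem:main} at every recursion node and argue that ``every subproblem generated along the recursion fits \cref{model:main}.'' That claim fails. \cref{model:main} is a generative model: the bipartition $(A,B)$ is fixed first, then the random cross‑edges are drawn, and only then does the adversary act. In the recursion, the vertex set $U$ of a subproblem is \emph{adaptively} determined by earlier cuts of the algorithm, which in turn depend on the realized random graph. The induced subgraph $G[U]$ together with its canonical cut is therefore \emph{not} a fresh draw from \cref{model:main}, and \cref{theorem:main} cannot be invoked on it as a black box. (Your remark that ``edges inside $A_v$ and inside $B_v$ can be regarded as arbitrary insertions'' conflates the subtree structure with adversarial freedom; the issue is not the intra‑side edges but the fact that $U$ is chosen after the coins are flipped.)

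The paper sidesteps exactly this by \emph{not} using \cref{theorem:main} in the recursion. It instead uses \cref{theorem:balanced-cut-technical}, whose hypotheses are purely structural: the existence of a balanced cut of value $\leq\alpha$ and geometric expansion of that cut at the relevant scale. These hypotheses are then certified, \emph{uniformly over all large subsets $U$}, by \cref{lemma65}, which is a ``for every $U$ of size $\geq n^{2/3}\sqrt{\log n}$'' statement holding with probability $1-o(1)$ over the draw of $G$. Because the guarantee quantifies over all candidate $U$, the adaptivity of the recursion is harmless. Your proposal is missing this uniform structural lemma, and without it the step ``apply the fast balanced‑cut algorithm at each node'' is unjustified. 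Relatedly, the cost‑accounting in your proposal (``each cut is within $O(1)$ of the corresponding cut in $\widetilde{T}$ and sum over $O(\log n)$ levels'') is not the charging scheme used; what is needed is \cref{theorem4}, which converts per‑node $\phi$‑approximate balanced cuts into an $O(\phi)$ Dasgupta approximation, together with the explicit threshold calculation $\mathrm{OPT}(\bar G)=\Omega(n^{7/3}\log n)$ versus the $O(n^{13/9}(\log n)^2)$ cost of arbitrary trees below size $n^{2/3}\log n$, which is what actually makes the $p_{\min}=\Omega(\log n/n^{2/3})$ hypothesis do its work.

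In short, you need to replace the appeal to \cref{theorem:main} with an appeal to \cref{theorem:balanced-cut-technical} plus a high‑probability geometric expansion statement over \emph{all} sufficiently large subsets (\cref{lemma65}); the generative model of \cref{model:main} does not survive adaptive restriction.
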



\subsection{Related Research}\label{section:related-research}
There has been extensive research on graph partitioning problems for random and semi-random models.
Perhaps the most extensively studied example is the stochastic block model (see \cite{abbe2017community} for a broad overview). In its simplest form, the model describes graphs where both the inter-community and the intra-community topologies are random. That is  the graph is randomly partitioned into two subsets $(A, B)$ of the same size such that every edge between the set $A$ and set $B$ exists with probability $\eta$, and edges inside communities $A$, and $B$ exists with probability $\mu\geq \eta$.\footnote{We remark that from both a computational and a statistical point of view, sharp phase transitions appear depending on the relation between the expected average degree  and the community bias. We omit a detailed discussion and refer the interested reader to the aforementioned survey.} 
Many algorithms are known to succesfully recover the partition for typical instances of the model \cite{decelle2011asymptotic,  massoulie2014community, mossel2015reconstruction, hopkins2017efficient, mossel2018proof}. 
In recent years, an ongoing line of work has aimed to extend these algorithmic techniques to more general semi-random models \cite{feige2001heuristics, moitra2016robust, montanari2016semidefinite,  ding2022robust, hua2023reaching}, first by introducing monotone perturbations \cite{feige2001heuristics, moitra2016robust, FeiC19, liu2022minimax} (a perturbation is monotone with respect to the bipartition $(A, B)$ if it adds edges inside the comunities or remove edges accross communities) and then by allowing a small but constant fraction of adversarially chosen edge
 \cite{ding2022robust} or vertex \cite{liu2022minimax, hua2023reaching}  perturbations. These results still crucially rely on the randomness of the intra-cluster topology and thus cannot work in the  significantly more general context of \cref{model:main}, where the structure inside $A$ and $B$ is arbitrarily and not  random. (We remark that random settings with $0<\mu\leq \eta$ have been investigated in the statistical physics literature.\footnote{The model is usually referred to as the \textit{antiferromagnetic model}.} These results also cannot be used in the presence of monotone perturbations.)
 
\cref{model:main} was extended in \cite{MakarychevMV14}, where the same set of authors introduced the so-called PIE model. Here the main assumption concerning the edges across the $(A, B)$ partition is that they were sampled from a permutationally invariant distribution (w.r.t to edges). Their error guarantees are comparable to those of \cite{DBLP:conf-stoc-MakarychevMV12} in the denser regime $\eta\geq O(\tfrac{\polylog(n)}{n})\,.$
We leave it as an open question to extend \cref{theorem:main} to this model.

Less general semi-random models, in which adversarial perturbations are applied before sampling the random edges, have also been studied. Interestingly, for these significantly weaker adversaries, spectral algorithms have been shown to achieve nearly optimal guarantees \cite{ChierichettiPRT22}. However, as already mentioned, these algorithms are known to be fragile to perturbations such as in \cref{model:main} and thus cannot be expected to lead to error guarantees  comparable to those in \cref{theorem:main}.

Further recent work \cite{peng2020robust} developed a sublinear robust algorithm for local reconstruction of noisy $(k, \phi_{\text{in}}, \phi_{\text{out}})$-clusterable graphs. Such graphs consists of $k$ expanders with inner conductance at least $\phi_{\text{in}}$ and outer conductance at most $\phi_{\text{out}}$ and an adversary is allowed to modify at most  $\epsilon$ fraction of edges within clusters. This result holds under the assumption $\phi_{\text{out}} \leq \frac{ \epsilon \cdot \phi_{\text{in}}}{\log n\cdot k^{O(1)}}$.


\section{Techniques}\label{section:techniques}

We present here the main ideas contained in the proof of \cref{theorem:main}.
Throughout the section let ${G}$ be a random graph sampled through steps (i) and (ii) of \cref{model:main} and let $G^\circ$ be the resulting graph after steps (iii) and (iv).  We let $\alpha\leq (1+o(1))\cdot n^2\cdot \eta$ be the number of edges in  ${G}$ and thus an upper bound on the optimal balanced cut in $G^\circ$ with high probability.

\paragraph*{Slow algorithms for balanced cut in the semi-random model} In order to present our techniques for obtaining a near linear time algorithm
with constant approximation factor for \cref{model:main}, it is necessary to first understand how the known \textit{slow} algorithm of \cite{DBLP:conf-stoc-MakarychevMV12} works.
Consider the random graph $G$ and let $v_1,\ldots, v_n\in \R^n$ be the embedding given by the returned
SDP solution, where $v_i$ corresponds to the embedding of the $i$-th vertex of ${G}$.\footnote{See \cref{section:mmw-algorithm} for a definition of the program.}
The algorithm of \cite{DBLP:conf-stoc-MakarychevMV12} is an iterative procedure that cycles over two subroutines: first, the algorithm solves the canonical balanced cut
SDP as in \cite{DBLP:conf-stoc-AroraRV04}; second, the algorithm \textit{carefully} removes clusters of vertices that are particularly
\textit{close} to each other in the embedding given by the found optimal SDP solution.

Concretely, the latter step identifies  so-called $(\delta, n)$-\emph{heavy} vertices, namely vertices $i$ such that its embedding $v_i$
in the SDP solution is at distance at most $\delta$ from at least $10\delta^2 n$ embeddings of other vertices in the SDP
solution\footnote{Notice that a ball of radius $2\delta$ centered at any heavy vertex $v$ contains at least $\delta^2n$ vertices.}.
Then, the algorithm \emph{carves out} a ball of radius $\delta$: It creates a cluster containing $i$ and all the vertices $j$ s.t. $v_j$ is
at distance at most $\delta$ from $v_i$. Here $\delta$ is a
parameter chosen appropriately. At the end of this process, the algorithm has removed a set  $H_\delta\subseteq V({G})$ of vertices from the
instance.

The crucial observation here is that, in \textit{every} feasible embedding of the random graph ${G}$, if the random cut is dense enough, then when \textit{restricted} to
the non-heavy vertices, it will satisfy a one-sided Chebyshev-like inequality of the form:
\begin{align*}
	\bbP_{\ij \overset{u.a.r.}{\sim}E({G}\setminus H_\delta)}\Paren{\Snorm{v_i-v_j}\leq \delta}\leq 1/\delta^2\,.
\end{align*}
That is, with high probability only a $O(\delta^2)$-fraction of the edges between non-heavy vertices is shorter than $\delta$ in the embedding.
This property is called \textit{geometric expansion}.

Now the crux of the argument is that, given a feasible embedding $v_1,\ldots, v_n \in \R^n$ of ${G}$, if by removing heavy vertices we don't cut more than $O(\alpha)$ edges, then the geometric expansion property guarantees that the minimum balanced cut in the \textit{remaining} graph has cardinality at most $O(\delta^2\cdot \alpha)$. Thus after several iterations of the algorithm we have decreased the value of the minimum balanced cut by at least a $O(1/\sqrt{\log n})$ factor (and in fact, we will need to decrease it by a $1/\log n$ factor in order to use the algorithm of \cite{DBLP:conf-focs-Sherman09} in near linear time) and so a simple application of the SDP rounding of \cite{DBLP:conf-stoc-AroraRV04} returns now a cut of optimal value $\alpha$.
Importantly, the geometric expansion property is robust to monotone changes -- namely to the changes that the adversary can make to the graph
at the last two steps of the generative model (\cref{model:main}) and thus, the exact same reasoning applies for the graph  $G^\circ$ as well.

\paragraph*{Roadblocks to speeding up the algorithms via the matrix multiplicative weights framework}
While semidefinite programs are computationally expensive to solve, there is by now a rich literature on fast algorithms to approximately solve them (e.g. see \cite{DBLP:conf-stoc-AroraK07, DBLP:conf-focs-Sherman09, DBLP:conf-soda-Steurer10}, see \cref{section:background} for a comprehensive description).
These results rely on the matrix multiplicative weight method (henceforth MMW) \cite{DBLP:conf-stoc-AroraK07}.\footnote{This can be seen as an application of the mirror descent algorithm with the von Neumann negative entropy as the chosen mirror map.} 
The framework aims at obtaining a "feasible enough" solution to the SDP so that the desired rounding argument works out. 
On a high level, the approach  is based on the following steps: \textit{(i)} find an assignment of the program variables that is only approximately feasible, in the sense that only a subset of constraints is approximately satisfied, \textit{(ii)} round this infeasible solution into a feasible, integral solution.
The key underlying idea is that, for many problems, if the subset of constraints that gets satisfied is chosen carefully, then the rounding algorithm works even though the starting assignment is far from being  feasible.
The running time improvement is obtained by designing  an \oracle algorithm that efficiently answers \textbf{yes}, if the candidate solution is feasible enough, or otherwise answers \textbf{no} and exhibits constraints that are violated by the current solution. 
These will then be used to pick the direction of movement in the underlying mirror descend algorithm.
The running time of the oracle depends on the so-called oracle's \textit{width} (see \cref{section:background}) and thus the challenge is usually to design oracles of bounded width.

In the context of balanced cut for arbitrary graphs, this approach has been extremely successful, leading to an $O(\sqrt{\log n}/\epsilon)$ approximation algorithm running in time $O(n^{1+\eps}\log n+m)$ when combining \cite{DBLP:conf-focs-Sherman09} and maximum flow algorithm of \cite{DBLP:journals-corr-abs-2203-00671}.

With respect to our settings, a natural question to ask is whether  this matrix multiplicative framework may be used at each iteration of the slow algorithm above to improve its running time. While this intuition is --in principle-- correct, to obtain significant running time speed ups,  additional fundamental challenges need to be solved. 
First, the heavy vertices removal procedure, that requires to find out all the heavy vertices of the graph -- or in other words, all the particularly dense balls in the SDP solution-- is slow.   Second and most important, this subroutine crucially relies on having access to an \textit{optimal} SDP solution. Hence the procedure cannot work with the infeasible solutions computed by the MMW framework. 

To overcome these obstacles we need to deviate from the canonical matrix multiplicative weights paradigm.

\paragraph*{Approximate heavy vertices removal}
Our first improvement, thus, consists of designing a faster algorithm to replace the subroutine  that  identifies all the dense balls of the optimal SDP solution. Identifying dense balls in high-dimensional
Euclidean spaces (say of dimension $\Omega(\eps^{-2} \log n)$) is a well studied problem. We make use of subsampling techniques
to \emph{approximately} solve this problem. Namely, our procedure will recover all the heavy vertices but may yield false positives, namely
balls that are almost dense -- up to a constant factor away from the target density. Concretely, we ask for balls that contain at least $10\delta^2 n$
vertices but are of radius $\sqrt{2}\delta$ instead of $\delta$.
While we can achieve this in time $\tilde{O}(\Card{V({G})})$, we now have to modify the next steps of the rounding to take the false positive
into account in the rounding.

\paragraph*{Main challenge: rounding and the matrix multiplicative weights framework}
The second challenge is more significant. Our idea is the introduction of a probabilistic oracle of small width that leverages the geometric expansion of the planted cut.
Concretely, recall in the previous paragraph we introduced an approximate heavy vertex removal procedure with the property that, with constant probability, only a few edges will be cut \emph{if} applied to a feasible solution of small objective value. 
It is important to notice that if the procedure
cuts few edges even when applied to an infeasible solution  of high objective value,  we would still be satisfied
since we would be making good progress in the graph partitioning at low cost. 
That is, even if a candidate solution is overall far from being feasible, it turns out to be sufficiently good for us if it is close to being feasible on the heavy vertices and their neighborhoods.
We thus mainly have to deal with the case where too many edges get cut. The crux of the argument then is that if the procedure cuts too many edges,  we can show how
with reasonable probability there exists a hyperplane of \emph{small width} separating our solution from the set of feasible solutions of small objective value, and moreover we can 
identify it efficiently. This means we can make progress and obtain a better solution through applying a step of the  matrix multiplicative weights framework.

The intuition behind the \textbf{no}-case of this probabilistic oracle is that, \textit{on average}, the probability that an edge with exactly one endpoint in these heavy balls is cut throughout the removal procedure must be larger than for feasible embeddings with small objective value. Indeed otherwise we could have expected our procedure to cut fewer edges.
Thus we can conclude that with constant probability, either the current solution has significantly larger objective value, or several triangle inequalities must be violated at the same time and thus we can provide a feedback matrix of small width.  

To ensure that our heavy vertices removal procedure would have indeed cut fewer edges if given a feasible solution, we repeat this process poly-logarithmically many times.

\begin{remark}[On the minimum edge density $\eta$ in the cut]
	As already briefly discussed, \cref{theorem:main} requires $\eta\geq \Omega(\frac{(\log  n)^2 \cdot (\log\log n)^2}{n})$. In comparison \cite{DBLP:conf-stoc-MakarychevMV12} only requires a lower bound of $\Omega(\frac{(\sqrt{\log n} \cdot (\log\log n)^2}{n})$. This discrepancy comes from the fact that a near linear time budget only allows us to only obtain a $O(\log n)$ approximation to balanced cut using \cite{DBLP:conf-focs-Sherman09, DBLP:journals-corr-abs-2203-00671}.
	We offset these worse guarantees leveraging stronger geometric expansion properties, which may not hold for $\eta < O(\frac{(\log  n)^2 \cdot (\log\log n)^2}{n})\,.$ What density is necessary for constant approximation to be possible, remains a fascinating open question. 
\end{remark}

\subsection{Perspective}
There is also a second perspective from which we may see \cref{theorem:main}.
The last decade has seen tremendous advancements in the design of algorithms for inference problems that are robust to adversarial corruptions (among many we cite  \cite{d2020sparse, ding2022robust,liu2022minimax, bakshi2022robustly, guruswami2022algorithms, hua2023reaching}, see also the survey \cite{diakonikolas2019recent}).
The emerging picture, which unfortunately appears hard to formalize, is that certain algorithmic techniques --such as semidefinite programming-- appear robust while others --such as low-degree polynomials or local search-- can be fooled by carefully chosen perturbations. 
In particular, the use of optimal solutions for semidefinite programs have been a fundamental tool behind these results.
Unfortunately, the computational budget required to find such objects is often large, making comparable results  hard to achieve in near linear time.
\cref{theorem:main} is a first significant example in which one can retain this \textit{"robustness property"} while working with a near-linear time computational budget, hence providing a first step towards more practical robust algorithms.

\section{Organization and notation}
The rest of the paper is organized as follows.
In \cref{section:mmw-algorithm} we present the algorithm. We provide necessary background on the matrix multiplicative framework and on other notions used throughout the paper in \cref{section:background}.
We present our \oracle in \cref{section:oracle}, which combined with the results in \cref{section:mmw-algorithm} yields  \cref{theorem:main}.
We then study the hierarchical stochastic block  in \cref{section:semi-random-hsm}.

We hide multiplicative factors \textit{poly-logarithmic} in $n$ using the notation $\tilde{O}(\cdot)\,, \tilde{\Omega}(\cdot)$. Similarly, we hide absolute constant multiplicative factors using the standard notation $O(\cdot)\,, \Omega(\cdot)\,, \Theta(\cdot)$. Often times we use the letter $C$ to denote universal constants independent of the parameters at play. 
Given a function $g:\R\rightarrow \R$, we write $o(g)$ for the set of real-valued functions $f$ such that $\lim_{n\rightarrow \infty}\frac{f(n)}{g(n)}=0$. Similarly, we write $g\in \omega(f)$ if $f\in o(g)$.
Throughout the paper, when we say "an algorithm runs in time $O(q)$" we mean that the number of basic arithmetic operations involved is $O(q)$. That is, we ignore bit complexity issues.

\paragraph*{Vectors and matrices}We use $\Id_n$ to denote the $n$-by-$n$ dimensional matrix and $\mathbf{0}$ to denote the zero matrix. For matrices $A, B \in \R^{n\times n}$ we write $A\sge B$ if $A-B$ is positive semidefinite. For a matrix $M$, we denote its eigenvalues by  $\lambda_1(M)\,,\ldots, \lambda_n(M)$; we simply write $\lambda_i$ when the context is clear. We denote by $\Norm{M}$ the spectral norm of $M$.
Let $\cS_n\subset\R^n$ be the set of real symmetric $n$-by-$n$ matrices and let $\Delta_n(r):=\Set{X\in \cS_n\suchthat \Tr X\leq r\,, X\sge 0}$. For $X\in \cS_n$, the matrix exponential is $\exp(X)=\sum_{i=0}^{\infty}\frac{X^i}{i!}$. We remark that $\exp(X)$ is positive semidefinite for all symmetric $X$ as $\exp(X)=\transpose{\Paren{\exp(\tfrac{1}{2}X)}}\exp(\tfrac{1}{2}X)$.
For a vector $v\in \R^n$, we write $v\geq \mathbf{0}$ if all entries of $v$ are non-negative. We use $\mathbb{S}^n\subseteq \R^n$ to denote the unit sphere.

\paragraph*{Graphs} We denote graphs with the notation $G(V, E)$. We use $V(G)$ to denote the set of vertices in $G$ and similarly $E(G)$ to denote its set of edges. For a graph $G$  we write $L_G$ for the  associated combinatorial Laplacian, which is a matrix with rows and columns indexed by the nodes of $G$ such that $(L_G)_{ii} = \sum_{\ij\in E(G)} 1$, i.e. the  degree of node $i$, and for $i\neq j\,$ $(L_G)_\ij$ is $-1$ if $\ij \in E(G)$ and $0$ otherwise. When the context is clear we drop the specification of $G$. Unless specified otherwise, we use $n$ to denote $\Card{V(G)}$. For a partition  $(A, B)$ of the vertices of $G$, we write $E(A, B)\subseteq E$ for the set of edges in the $A$-$B$ cut. We say that a partition $(A,B)$ is $a$-balanced if $\Card{A}/\Card{B}\geq a$ assuming $\Card{A}\leq \Card{B}$.

\section{A fast algorithm for semi-random balanced cut}\label{section:mmw-algorithm}

We present here our main theorem which implies  \cref{theorem:main}.
To solve the balanced cut problem we  consider its basic SDP relaxation. 
Given a graph $G$, the  relaxation for the $a$-balanced cut problem is:
\begin{equation}\label{eq:balanced-cut-sdp}
	\Set{
		\begin{aligned}
			\min &\sum_{ij\in E} c_\ij \Snorm{v_i-v_j}\\
			&\Snorm{v_i}=1 &\forall i \in [n] & \\
			&\Snorm{v_i-v_j}+\Snorm{v_j-v_k}&&\\
			&\qquad\qquad\quad\geq \Snorm{v_i-v_k}&\forall i,j,k,\in[n] &\\
			&\sum_{i,j \in [n]} \Snorm{v_i-v_j}\geq 4an^2&&
		\end{aligned}
	}
\end{equation} 
We refer to the first constraint as the unit norm constraint, to the second as the triangle inequality constraint and to the third as the balance constraint.
\cref{eq:balanced-cut-sdp} may be rewritten in its canonical form
\begin{equation}\label{eq:canonical-balanced-cut-primal-sdp}
	\Set{
		\begin{aligned}
			\min &\iprod{L,X}\\
			&X_{ii}=1&\forall i \in [n] & \\
			&\iprod{T_p, X}\geq 0&\forall \text{ paths $p$ of length 2} &\\
			&\iprod{K_V, X}\geq 4an^2&&
		\end{aligned}
	}
\end{equation} 
where $L$ is the combinatorial Laplacian of the graph, $K_S$ is the Laplacian of the complete graph over vertex set $S$ and, for a path $p$, $T_p$ is the difference between the Laplacian of $p$ and the Laplacian of the single edge connecting its endpoints. 
Notice that $v_1,\ldots,v_n$ are the Gram vectors  of $X$. To ease the reading we will sometimes use the vectors representation and others the matrix representation. We denote by $\alpha$ the optimal value for a given instance of \cref{eq:canonical-balanced-cut-primal-sdp}. In particular, we say a graph $G$ has optimal cut $\alpha$ if minimum solutions to \cref{eq:balanced-cut-sdp} have objective value $\alpha$.  Notice that for graphs generated as in \cref{model:main}, with high probability we have $\alpha \leq (1+ o(1))n^2\cdot \eta\,.$

Before stating the main theorem we require a couple of definitions concerning the embedding of graphs. These are based on  \cite{DBLP:conf-stoc-MakarychevMV12}.

\begin{definition}[Heavy vertex]\label{definition:heavy-vertex}
	Let $\delta,  n,n'>0$.
	Let $G(V, E)$ be a graph on $n$ vertices and let $X$ be the Gram matrix of an embedding $v_1,\ldots,v_n\in\R^n$  of $G$ onto $\R^n$. A vertex $i\in V$ is said to be $(\delta, n')$-heavy if
	\begin{align*}
		\Card{\Set{j \in V(G)\suchthat \Snorm{v_i-v_j}\leq \delta}} \geq \delta^2\cdot n'\,.
	\end{align*}
	We denote the set of $(\delta, n')$-heavy vertices in the embedding $X$ by $H_{\delta, n'}(X, V)$.
	For a subset of vertices $V'$ we let $H_{\delta, n'}(X, V')$ be the set of vertices that are $(\delta, n')$ heavy in the subgraph induced by $V'$.
\end{definition}

In other words, a vertex is $(\delta, n')$-heavy if it is close to $\delta^2n'$  other vertices in the given embedding. The next structural property of graphs is what will separate semirandom instances from worst-case instances.

\begin{definition}[Geometric expansion]\label{definition:geometric-expansion}
	A graph $G(V,E)$ on $n$ vertices satisfies the geometric expansion property at scale $(\delta, n', \alpha)$ if, for every feasible solution $X$ to \cref{eq:balanced-cut-sdp} on input $G$ and every subset $V'\subseteq V$ such that $H_{\delta, n'}(X, V') = \emptyset$, it holds
	\begin{align*}
		\Card{\Set{\ij \in E\cap (V'\times V')\suchthat \Snorm{v_i-v_j}\leq \delta}}\leq 10\cdot \delta^2 \cdot \alpha\,.
	\end{align*}
\end{definition}

That is,  a graph is geometrically expanding if the uniform distribution over the edges of non-heavy vertices satisfies a one-sided Chebyshev's inequality. 
For simplicity, we say that a graph $G$ is geometrically expanding up to scale $(100^{-z}, n, \alpha)$ if it is geometrically expanding at scale $(100^{-i}, n, \alpha)$ for all $1\leq i\leq z\,.$ 
We remark that \cref{definition:geometric-expansion} is equivalent to the geometric expansion property defined in \cite{DBLP:conf-stoc-MakarychevMV12}.

We are now ready to present the main theorem of the section.

\begin{theorem}[Main theorem]\label{theorem:balanced-cut-technical}
	There exists a randomized algorithm that on input $a, \alpha >\Omega(1)\,, \kappa \,,  \delta > 1/\log n$ and a graph $G$ such that:
	\begin{enumerate}
		\item there exists an  $a$-balanced partition $(A,B)$ with $\Card{E(A, B)}\leq \alpha\,,$
		\item $G(V, E(A, B))$ is a geometric expander up to scale $(100\delta, n, \alpha)$, 
	\end{enumerate}
	returns an $\Omega(a)$-balanced partition $(S,T)$ with cut $$\Card{E(S, T)}\leq O(\alpha ) (1+\delta \cdot \kappa\cdot \sqrt{\log n})$$ with probability $1-o(1)$.
	Moreover, the algorithm runs in time  $\tilde{O}\Paren{\Card{V(G)}^{1+O(1/\kappa^2)+o(1)}+\Card{E(G)}^{1+O(1/\kappa^2)+o(1)}}$. 
\end{theorem}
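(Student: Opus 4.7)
The plan is to couple the matrix multiplicative weights (MMW) framework for the canonical SDP in \cref{eq:canonical-balanced-cut-primal-sdp} with an approximate version of the heavy-vertex removal procedure of \cite{DBLP:conf-stoc-MakarychevMV12}, iterated over a geometric schedule of scales $\delta_i = 100^{-i}$ for $i=1,\ldots,\Theta(\log(1/\delta))$. At each scale we argue that, with constant probability, \emph{either} the algorithm removes a set of vertices cutting only $O(\delta_i^2 \alpha)$ edges (using geometric expansion at scale $\delta_i$), \emph{or} the MMW oracle identifies a violated convex combination of triangle inequalities of bounded width. After a polylogarithmic number of MMW iterations at each scale, the residual instance has its balanced-cut value decreased by a constant factor, and after $\Theta(\kappa\sqrt{\log n})$ total scale reductions we feed the remaining graph to Sherman's \cite{DBLP:conf-focs-Sherman09} $O(\sqrt{\log n})$-approximation for balanced cut, yielding the stated cut bound $O(\alpha)(1 + \delta\kappa\sqrt{\log n})$.

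The first step is to implement the approximate heavy-vertex removal in near-linear time. Projecting the columns of the current candidate Gram matrix $X$ onto $\R^d$ with $d=\Theta(\epsilon^{-2}\log n)$ (Johnson--Lindenstrauss) and using bucketing / approximate nearest-neighbor structures, I can, in time $\tilde{O}(n)$, return for every vertex a superset of its $(\delta_i,n)$-neighborhood contained in its $\sqrt{2}\delta_i$-neighborhood. This suffices to detect all $(\delta_i,n)$-heavy vertices and to carve balls of radius $\sqrt{2}\delta_i$ around them, at the price of a constant blow-up in the geometric-expansion constants. The key structural fact I will invoke from \cref{definition:heavy-vertex}--\cref{definition:geometric-expansion} is that, once all (approximately) heavy vertices have been removed, the fraction of remaining edges of length $\le 100\delta_i$ is at most $O(\delta_i^2)$.

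The second and central step is the probabilistic MMW oracle. Given an $X\in\Delta_n(n)$ produced by MMW, I run the approximate heavy-vertex carving above and count the number $N(X)$ of edges with exactly one endpoint in the carved-out region. If $N(X)=O(\delta_i^2\alpha)$ the oracle accepts and hands the carving to the outer loop. Otherwise the oracle has a witness that $X$ violates many triangle inequalities: on average over the carving process, an edge incident to a heavy ball is cut with probability larger than what \cref{eq:balanced-cut-sdp} would allow, and averaging the corresponding triangle-inequality constraints $T_p$ over the observed cut edges produces a feedback matrix $M$ with $\iprod{M,X}$ bounded away from $0$ and $\normop{M}=O(1)$. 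This $M$ is exactly what the MMW update of \cite{DBLP:conf-stoc-AroraK07} consumes, so after $O(\kappa^2\log n)$ iterations (each dominated by a rank-$d$ matrix-exponential oracle of Sherman / Steurer~\cite{DBLP:conf-focs-Sherman09, DBLP:conf-soda-Steurer10}) either a carving is accepted or the MMW certifies that no feasible solution of value $\le \alpha$ exists at the current scale. The width being $O(1)$ yields an overall runtime of $n^{1+O(1/\kappa^2)+o(1)} + m^{1+O(1/\kappa^2)+o(1)}$ per outer iteration.

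The main obstacle I expect is analyzing the \textbf{no}-branch of the oracle when $X$ is \emph{not} feasible. Unlike the slow algorithm, I cannot appeal directly to \cref{definition:geometric-expansion}. Instead I will argue in two steps: first, by independent repetition of the randomized carving $\polylog(n)$ times, any $X$ for which the expected number of cut edges is $\omega(\delta_i^2\alpha)$ will with high probability expose $\Omega(\delta_i^2 \alpha)$ simultaneously violated triangle inequalities on edges incident to heavy balls; second, averaging these constraints gives the small-width feedback matrix required by MMW. Once both branches are proved correct, telescoping the guarantees over the $\Theta(\log(1/\delta))$ scales, combining with Sherman's final $O(\sqrt{\log n})$-approximation, and tracking balance losses (each carving removes at most an $O(\delta_i^2)$ fraction of vertices, so $\Omega(a)$-balance is preserved) yields \cref{theorem:balanced-cut-technical}.
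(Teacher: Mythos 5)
Your plan matches the paper's proof in its overall architecture: iterate over a geometric schedule of scales, at each scale run approximate MMW with a carving-based oracle that either accepts a small cut or produces a small-width feedback matrix, and finish with Sherman's flow-based $O(\sqrt{\log n})$-approximation. The subsampling/JL trick for approximate heavy-vertex detection, the $\polylog$-fold repetition of the randomized carving to get a reliable oracle, and the telescoping of $\alpha^{(i)} \le O(\delta_{i-1}^2)\,\alpha^{(i-1)}$ are all present and align with \cref{algorithm:balanced-cut}, \cref{lemma:procedure-find-heavy-vertices}, \cref{lemma:result-heavy-vertices-removal-procedure} and \cref{lemma:oracle-heavy-vertices}.

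However, the \textbf{no}-branch of your oracle is analyzed too coarsely, and as written it would not produce a valid separating hyperplane in all cases. You propose to take the edges cut by the carving and ``average the corresponding triangle-inequality constraints $T_p$ over the observed cut edges.'' This only works when the cut edges are \emph{short} in the embedding and the per-edge cut probability observed exceeds the $\Snorm{v_i-v_j}/(2\delta)$ bound of \cref{lemma:heavy-vertex-removal-procedure-probability-cut} --- exactly the scenario where a triangle inequality must be violated. But a candidate $X \in \Delta_n(n)$ produced by MMW may be infeasible in a different way: it can have $\omega(\alpha/\delta)$ edges of length $\ge \delta$, or the carving may cut many edges whose total Laplacian weight $\sum f_{ij}\Snorm{v_i-v_j}$ is large even though each edge is shorter than $\delta$. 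In neither of those situations does any triangle inequality need to be violated, so averaging $T_p$'s gives nothing. The paper handles both by an additional \emph{flow-based} feedback matrix of the form $\tfrac{\alpha}{|V|}\Id - D$, where $D$ is the Laplacian of the flow-decomposition across the offending cut (this is the initial ``bound $|E^*|$'' step and Case~1 inside the proof of \cref{lemma:oracle-heavy-vertices}); your sketch omits this entirely. Without it, the oracle cannot be shown to be a $\gamma$-separation oracle, and \cref{theorem:result-mmw}/\cref{lemma:robust-oracle} cannot be invoked. A secondary imprecision: the outer loop has $\Theta(\log(1/\delta))$ iterations as you first say (not $\Theta(\kappa\sqrt{\log n})$), and the number of MMW iterations per scale is $\tilde{O}(\epsilon^{-2}\log n)$ with $\epsilon = \tilde{\Theta}(\gamma\alpha/(\zeta r))$; the $\kappa$-dependence lives in the exponent of the per-call cost of Sherman's flow oracle, not in the iteration count.
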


\cref{theorem:main} essentially follows from \cref{theorem:balanced-cut-technical} observing that graphs generated through \cref{model:main} are good geometric expanders.
To show this first observe that random bipartite graphs are good geometric expanders.

\begin{theorem}[Geometric expansion of random graphs, \cite{DBLP:conf-stoc-MakarychevMV12}]\label{theorem:geometric-expansion-random-graphs}
	Let $t>0$. Let ${G}$ be a graph over $n$ vertices generated through the first two steps (i), (ii) of \cref{model:main} with parameters $a, \eta>0\,.$ 
	Then, with probability $1-n^{-\Omega(1)}$, ${G}$ is geometrically expanding up to scale $\Paren{100^{-t}, n, \Theta(n^2\cdot \eta + 100^{t}\cdot n\cdot t^2)}$.
\end{theorem}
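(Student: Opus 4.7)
The plan is to follow the strategy of \cite{DBLP:conf-stoc-MakarychevMV12}, which reduces the geometric expansion claim to a concentration inequality for the random bipartite edges. The key deterministic observation is this: for any fixed feasible embedding $v_1, \dots, v_n \in \R^n$ satisfying the constraints of \cref{eq:balanced-cut-sdp}, and any subset $V' \subseteq V$ with $H_{\delta, n}(X, V') = \eset$, each vertex $i \in V'$ has at most $\delta^2 n$ vertices $j \in V'$ with $\Snorm{v_i - v_j} \leq \delta$. Summing, the number of ordered pairs $(i,j) \in (A \cap V') \times (B \cap V')$ at squared distance at most $\delta$ is bounded by $\delta^2 n^2$. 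Since each such pair is an edge of ${G}$ independently with probability $\eta$, the expected number of short random edges inside $V'$ is at most $\eta \delta^2 n^2 \leq \delta^2 \alpha$, comfortably below the target $10 \delta^2 \alpha$.

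To upgrade this expectation bound into a high-probability uniform statement, I would first apply a Bernstein-type inequality to a fixed $(X, V')$ pair. Writing $N(X, V')$ for the number of short random edges inside $V'$, we have
\begin{equation*}
	\Pr\Bigl[N(X, V') > 10 \delta^2 \alpha\Bigr] \leq \exp\bigl(-\Omega(\delta^2 \alpha)\bigr).
\end{equation*}
At the smallest scale $\delta = 100^{-t}$ this deviation is $\Omega(\delta^2 \cdot 100^{t} n t^2) = \Omega(\delta n t^2)$, and I would then union-bound over (a) the $O(t)$ scales $\delta_s = 100^{-s}$ for $1 \leq s \leq t$, (b) the $2^n$ subsets $V'$, and (c) a net $\mathcal{N}_\delta$ over feasible embeddings fine enough to preserve the short-edge graph up to a constant factor in $\delta$ (which is why the conclusion allows the looser constant $10$ in front of $\delta^2 \alpha$).

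The main obstacle is the construction and cardinality bound for the net $\mathcal{N}_\delta$: a naive volume-based net on PSD Gram matrices of trace $n$ is far too large, and the structural constraints of \cref{eq:balanced-cut-sdp} (unit norms, triangle inequality on squared distances, balance) must be leveraged to cut the effective complexity. My intended approach is to discretize via the metric induced by the squared distances: cluster vertices into representative positions at resolution $\sqrt{\delta}/100$ and net over the resulting discrete assignments, noting that the bounded-degree property inherited from non-heaviness restricts the set of attainable short-edge graphs. The additive correction $100^{t} n t^2$ in $\alpha$ is precisely calibrated so that the Chernoff deviation $\Omega(\delta^2 \alpha)$ dominates $\log |\mathcal{N}_\delta| + n \log 2 + \log t$, which is what the union bound requires. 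A cleaner alternative, which I would try in parallel, is to reformulate $N(X, V')$ as a quadratic form $\langle M - \eta J, \, S_\delta(X) \odot (\mathbf{1}_{V'} \mathbf{1}_{V'}^\top)\rangle$ where $M$ is the random bipartite adjacency, $J$ its expectation, and $S_\delta(X)$ the short-edge indicator matrix, and to control it via the operator norm of $M - \eta J$; this would avoid the explicit net at the cost of a slightly different quantitative tradeoff.

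Combining the three union bounds, the failure probability at every scale $s \leq t$ is $n^{-\Omega(1)}$, and the stated geometric expansion property holds uniformly up to scale $(100^{-t}, n, \alpha)$ with probability $1 - n^{-\Omega(1)}$. The hardest step is, as noted, bounding $|\mathcal{N}_\delta|$ sharply enough to take advantage of the $100^t n t^2$ slack in $\alpha$ without losing additional logarithmic factors.
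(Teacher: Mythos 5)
The paper itself does not prove this theorem; it is imported verbatim from \cite{DBLP:conf-stoc-MakarychevMV12}, so the comparison is against the argument in that paper rather than against a proof in this one.

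Your deterministic step and the per-embedding Chernoff bound are sound: non-heaviness on $V'$ forces the number of cross-$(A,B)$ pairs at squared distance $\le\delta$ to be at most $\delta^2 n^2$, each such pair is independently an edge with probability $\eta$, so the mean count is at most $\eta\delta^2 n^2 \le \delta^2\alpha$ and the tail $\Pr[N>10\delta^2\alpha]\le\exp(-\Omega(\delta^2\alpha))$ follows. The genuine gap is exactly where you flag it: the union bound over embeddings. Neither of your two candidate routes closes it. A resolution-$\sqrt{\delta}/100$ net over unit vectors in $\R^n$ has size $\exp(\Omega(n\log(1/\delta)))$ per point, so netting over all $n$ assignments costs $\exp(\Omega(n^2\log(1/\delta)))$; restricting cluster centres to the data points themselves only brings this down to $n^n=\exp(n\log n)$, which still exceeds the available exponent $\delta^2\alpha=100^{-2s}\cdot\Theta(n^2\eta+100^t n t^2)$ at the finer scales, e.g.\ at $s=t$ this is $\Theta(100^{-t}nt^2)$, which is $o(n\log n)$ once $t\ge 2$. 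The triangle-inequality and balance constraints do not obviously shrink the net enough without a substantially more careful argument, and you do not supply one.

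The ``cleaner alternative'' via the operator norm of $M-\eta J$ does not work quantitatively. Writing $S$ for the symmetric $0/1$ short-pair indicator restricted to $V'$, you would bound $\langle Z,S\rangle\le\normop{Z}\,\normn{S}$ with $\normop{Z}=O(\sqrt{n\eta})$ and, via Cauchy--Schwarz on eigenvalues, $\normn{S}\le\sqrt{n}\,\normf{S}\le\sqrt{n}\cdot\delta n=\delta n^{3/2}$. This yields a deviation $O(\sqrt{n\eta}\cdot\delta n^{3/2})=O(\delta\sqrt{\eta}\,n^2)$, and requiring this to be $\le 10\delta^2\alpha$ with $\alpha=\Theta(n^2\eta)$ reduces to $\delta\gtrsim 1/\sqrt{\eta}$, which is vacuous for $\eta<1$. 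So the spectral route fails by a polynomial margin at every scale of interest and cannot replace the net. Since both of the routes you propose for the step you yourself identify as the crux fail, the proposal as written does not establish the theorem. The argument in \cite{DBLP:conf-stoc-MakarychevMV12} circumvents the naive net over embeddings by exploiting the $\ell_2^2$ structure and the heavy-vertex decomposition far more aggressively; reproducing that counting, rather than a generic net-plus-Chernoff scheme, is what makes the $\Theta(n^2\eta+100^t n t^2)$ calibration actually close.
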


Second, observe that geometric expansion in  bipartite graphs is a property that is to some extent robust to changes monotone with respect to the bipartition.

\begin{fact}[Robustness of geometric expansion, \cite{DBLP:conf-stoc-MakarychevMV12}]\label{fact:robustness-geometric-expansion}
	Let ${G}$ be a graph over $n$ vertices generated through the first two steps (i), (ii) of \cref{model:main} and let $G^\circ$ be a graph obtained after applying steps (iii), (iv). If ${G}$ is geometrically expanding up to scale $\Paren{\delta, n, \tau}$, then so  is $G^\circ(V, E(A, B))$.
\end{fact}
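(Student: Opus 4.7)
The plan is to observe that the proof is essentially a monotonicity argument: the feasibility constraints of the SDP \cref{eq:balanced-cut-sdp} (unit norm, triangle inequality and balance) reference only the embedding $v_1,\ldots,v_n$ and the scalar $a$, never the graph's edge set. Hence the feasible set for the SDP on $G$ and on $G^\circ(V, E(A,B))$ coincides. The same is true for heaviness: inspecting \cref{definition:heavy-vertex}, whether a vertex $i$ is $(\delta,n')$-heavy depends only on how many other vertices have embeddings within squared distance $\delta$ of $v_i$; in particular the set $H_{\delta, n'}(X, V')$ is a function of $X$ and $V'$ alone and does not depend on which edges are present.

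Next, I would record the simple inclusion on edge sets given by \cref{model:main}. Step (iii) only inserts edges inside $A$ or inside $B$, so it produces no new edges of the cut; step (iv) only deletes cut edges. Consequently $E(A,B)$ in $G^\circ$ is a subset of $E(A,B)$ in $G$, and in particular it is a subset of $E(G)$.

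Combining these two observations yields the fact. Fix any feasible solution $X$ to \cref{eq:balanced-cut-sdp} on input $G^\circ(V, E(A,B))$ and any subset $V'\subseteq V$ with $H_{\delta,n'}(X, V')=\emptyset$. By the first observation, $X$ is also feasible for \cref{eq:balanced-cut-sdp} on input $G$, and the same $V'$ satisfies $H_{\delta,n'}(X,V')=\emptyset$ in $G$. By the second observation and nonnegativity of counts,
\begin{align*}
    &\Card{\Set{\ij \in E(G^\circ(V,E(A,B)))\cap (V'\times V')\suchthat \Snorm{v_i-v_j}\leq \delta}} \\
    &\qquad\leq \Card{\Set{\ij \in E(G)\cap (V'\times V')\suchthat \Snorm{v_i-v_j}\leq \delta}} \leq 10\cdot \delta^2\cdot \tau\mcom
\end{align*}
where the last inequality uses the assumed geometric expansion of $G$ at scale $(\delta,n,\tau)$. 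Since this holds for every feasible $X$ and every admissible $V'$, $G^\circ(V, E(A,B))$ is geometrically expanding at scale $(\delta, n, \tau)$, and running the same argument scale by scale gives the "up to scale" version.

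There is no real obstacle here beyond being careful with the definitions: the only things one needs to verify are (a) that feasibility of the SDP and the heaviness predicate are both intrinsic to the embedding and the vertex set (not the edges), and (b) that the semi-random adversary of \cref{model:main} never inserts new cut edges. Both are immediate from the definitions provided.
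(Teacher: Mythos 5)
Your proof is correct, and it is exactly the argument one would expect; the paper itself does not spell it out (it is stated as a Fact with a citation to \cite{DBLP:conf-stoc-MakarychevMV12}), so there is no in-text proof to compare against, but the two key observations you isolate — that feasibility of \cref{eq:balanced-cut-sdp} and the heaviness predicate of \cref{definition:heavy-vertex} depend only on the embedding and the vertex set, not on the edge set, and that the monotone perturbations of \cref{model:main} can only shrink the set of cut edges — are precisely what makes the statement immediate. One small point worth making explicit, which you use implicitly: since $G$ is produced by steps (i) and (ii) alone, $E(G)$ already consists entirely of $A$-$B$ cut edges, so $E\bigl(G^\circ(V, E(A,B))\bigr) \subseteq E(G)$ with no further qualification needed. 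With that noted, the per-scale inequality you write down and the remark that it holds scale by scale completes the claim.
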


This statement above implies that for $\eta\geq \Omega\Paren{\frac{(\log n)^2 \cdot (\log\log n)^2}{n}}$, with high probability $G^\circ(V, E(A, B))$ is a good geometric expander.
Now \cref{theorem:main} immediately follows combining \cref{theorem:balanced-cut-technical} \cref{theorem:geometric-expansion-random-graphs} and \cref{fact:robustness-geometric-expansion}.

\subsection{The algorithm}

We present here the algorithm behind \cref{theorem:balanced-cut-technical}. Since we will work using the matrix multiplicative framework (see \cref{section:background} for the necessary definitions), our main challenge is that of designing an appropriate oracle.
For simplicity, we split \oracle in three parts, the first two are due to  \cite{DBLP:conf-stoc-AroraK07,DBLP:conf-focs-Sherman09} the third part is our crucial addition and the main technical contribution of this work. Recall we denote by $\alpha$ the minimum objective of the program at hand.

\begin{lemma}[\cite{DBLP:conf-stoc-AroraK07}]\label{lemma:basic-oracle}
	Let $a>\Omega(1)$.
	There exists a  $\tilde{O}(\alpha/n)$-bounded, $\Theta(\log n)^2$-robust, $\Theta(1)$-separation oracle that, given a candidate solution to \cref{eq:balanced-cut-sdp}  with input graph $G$ on $n$ vertices and $a$-balanced cut of value at most $\alpha$, outputs \textbf{no} if one of the following conditions are violated. Let $W:=\Set{i\in [n]\,|\, \Snorm{v_i}>2} \subseteq [n]$ and $S:= [n]\setminus W$. 
	\begin{itemize}
	    \item Flatness: $\Card{W}< \frac{n}{(\log n)^{100}}\,.$
	    \item Balance: $\sum_{i,j \in S} \Snorm{v_i-v_j}\geq 2an$.
	\end{itemize}
	Moreover, the oracle is $\cT$-lean for some $\cT\leq O\Paren{(\card{V(G)}+\card{E(G)}}$
\end{lemma}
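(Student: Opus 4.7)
The plan is to realize the oracle directly on the rank-$r$ Gram factorization $X = \transpose{V} V$ with $V \in \R^{r \times n}$ ($r = O(\log n)$) that the MMW scheme already maintains, so that both checks reduce to a single linear scan over the columns of $V$ followed by the construction of one feedback matrix $M$ of the standard Arora--Kale form $M = \sum_k y_k A_k$, with nonnegative multipliers $y_k$ placed on the violated primal constraints.

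Step 1 (Flatness). I compute $X_{ii} = \Snorm{v_i}$ in $O(nr)$ time and read off $W = \{i : X_{ii} > 2\}$ in one pass. If $\card W \ge n/(\log n)^{100}$ the oracle answers \textbf{no} with witness
\[
M_1 \;=\; -\frac{1}{\card W}\sum_{i\in W} e_i e_i^{\top}\mcom \qquad b_1 \;=\; -1 \mcom
\]
which averages the violated norm constraints $X_{ii} \le 1$. Every feasible $X$ has $X_{ii} = 1$ and hence $\iprod{M_1,X} = b_1$, whereas the candidate attains $\iprod{M_1,X} \le -2$, giving $\Theta(1)$-separation. Moreover $\Normop{M_1} = 1/\card W \le (\log n)^{100}/n \le \tilde O(\alpha/n)$, and the averaging over $\card W$ coordinates is precisely what absorbs the $(\log n)^2$ robustness slack allowed on each individual $X_{ii}$.

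Step 2 (Balance). Using the identity
\[
\sum_{i,j\in S}\Snorm{v_i - v_j} \;=\; 2\,\card S \sum_{i\in S}\Snorm{v_i} \;-\; 2\,\Bignorm{\sum_{i\in S} v_i}^2\mcom
\]
I compute the balance quantity in $O(nr)$ time and, if it falls below the threshold, return $M_2 = K_S / c$ with $c = \Theta(n^2/\alpha)$, so that $\Normop{M_2} = \card S / c \le \tilde O(\alpha/n)$, paired with a threshold $b_2$ matching the check. A feasible $X$ passes with slack because its SDP balance $\iprod{K_V, X} \ge 4an^2$, restricted to the $(1-o(1))$-fraction of vertices in $S$, still gives $\iprod{K_S, X} = \Omega(a n^2)$; the candidate violates by the definition of the failure branch, which yields the required constant separation after normalization.

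Step 3 (Leanness and main obstacle). Both witnesses admit compact representations: $M_1$ is $\card W$-sparse diagonal, and $K_S = \card S \sum_{i\in S} e_i e_i^{\top} - \bigl(\sum_{i\in S} e_i\bigr)\bigl(\sum_{i\in S} e_i\bigr)^{\top}$ is diagonal-plus-rank-one, so every matrix-vector product used in the subsequent MMW update costs $O(n)$, well within the $\cT = O(\card{V(G)}+\card{E(G)})$ leanness budget. The step I expect to be the real delicate one is the joint calibration of the three oracle parameters: the width must fit $\tilde O(\alpha/n)$, the separation must remain $\Theta(1)$ after the $(\log n)^2$-robustness slack is spent, and the threshold on $\card W$---which ultimately drives the flatness guarantee exploited by \cref{theorem:balanced-cut-technical}---cannot be pushed much below $n/(\log n)^{O(1)}$ without degrading either the width or the robustness. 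Fixing the exponent at $100$ is the single nontrivial design decision, and once it is made the remainder is an arithmetic check that closely tracks the original Arora--Kale analysis.
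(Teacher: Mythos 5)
The paper does not prove \cref{lemma:basic-oracle}; it imports the statement verbatim from Arora--Kale (\cite{DBLP:conf-stoc-AroraK07}) and says explicitly, ``We omit the proof of \cref{lemma:basic-oracle} as it can be found in \cite{DBLP:conf-stoc-AroraK07}.'' So there is no in-paper proof to compare against. Judged on its own terms, your reconstruction captures the right high-level plan --- detect a violated primal constraint and emit a dual witness --- but the witnesses you write down do not satisfy the framework's own definition of a $\gamma$-separation oracle. That definition requires a matrix $M$ with $\iprod{M,X'}\geq 0$ for every $X'\in\cX^*$ and $\iprod{M,X}< -\gamma\alpha$ for the candidate, and moreover it asks that $M$ arise as $\sum_j A_j y_j - F$ with $\iprod{b,y}\geq\alpha$. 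Your $M_1=-\tfrac{1}{\card W}\sum_{i\in W}e_ie_i^\top$ gives $\iprod{M_1,X'}=-1$, not $\geq 0$; the separating margin is roughly $1$, not $\Theta(\alpha)$; and the implied dual assignment sums to $-1$, not $\geq\alpha$. The standard fix is to add $(\alpha/n)\Id$ (i.e.\ set $x_i=\alpha/n$ uniformly, as the paper's heavy-vertices oracle proof does) and then scale the penalty on $W$ by $\alpha$ --- with that correction the width and separation both land where the lemma claims. The balance witness $K_S/c$ has the same scaling defect plus a structural one: the SDP constraint matrix is $K_V$, not $K_S$, and $K_S$ is not obviously a nonnegative combination of the $A_j$'s minus an $F\preceq L$; you would need to write $K_S = K_V - (K_V - K_S)$ and argue that the difference can be absorbed into $F$ or into the $\diag(x)$ part, which you do not do.

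The robustness claim is also not addressed at the right level. ``Averaging over $\card W$ absorbs the $(\log n)^2$ slack'' is not what $d$-robustness (\cref{definition:robust-oracle}) asserts; that notion is a probability bound over Gaussian sketches $\Phi W$. The actual argument is concentration: with $d=\Theta((\log n)^2)$, each sketched norm $\snormt{\Phi v_i}$ lies within a constant factor of $\snormt{v_i}$ with probability $1-n^{-\omega(1)}$, so for any feasible $X$ the sketched solution still has $W=\emptyset$ (hence passes flatness) and $\iprod{K_V,\cdot}$ deviates by at most a constant factor (hence passes balance) with the failure probability demanded by \cref{definition:robust-oracle}. That calculation is precisely what pins down the exponent $100$ in the flatness threshold, which you correctly identify as the one nontrivial design decision but leave unverified.
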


We omit the proof of \cref{lemma:basic-oracle} as it can be found in \cite{DBLP:conf-stoc-AroraK07}.
If both the flatness and the balance condition are satisfied, then we apply the following oracle,  due to \cite{DBLP:conf-focs-Sherman09}.

\begin{lemma}[\cite{DBLP:conf-focs-Sherman09}]\label{lemma:flow-oracle}
	Let $\kappa, a>0\,, 0< \delta<1/200, a >\Omega(1)$.
	There exists a  $\tilde{O}(\alpha/ n)$-bounded, $O(\log n)^{2}$-robust, $\Theta(1)$-separation oracle that, given a candidate solution to \cref{eq:canonical-balanced-cut-primal-sdp} with input graph $G$ on $n$ vertices and  $a$-balanced cut of value at most $\alpha$, 
	outputs \textbf{yes} only if it finds an $\Omega(a)$-balanced partition $(P, P')$ of $V(G)$ satisfying
	\begin{align*}
		\sum_{i \in P\,, j \in P'\,, \ij \in E(G)} \Snorm{v_i-v_j}&\leq O(\alpha)\\
		\Card{E(P, P')}&\leq O(\alpha\cdot \kappa)\cdot \sqrt{\log n}\,.
	\end{align*}
	Moreover, the oracle is $\cT$-lean for some $\cT\leq \tilde{O}\Paren{\Card{V(G)}^{1+O(1/\kappa^2)+o(1)}+\Card{E(G)}^{1+O(1/\kappa^2)+o(1)}}$.
\end{lemma}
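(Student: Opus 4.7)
The plan is to invoke the near-linear time ARV-based rounding of Sherman \cite{DBLP:conf-focs-Sherman09} and recast its inner loop as a separation oracle compatible with the MMW framework of \cref{section:background}. On a candidate Gram matrix $X$ (after the flatness/balance conditions of \cref{lemma:basic-oracle} have been verified), the oracle proceeds in two phases: a \emph{rounding phase} and a \emph{flow phase}. First, project the Gram embedding $v_1,\ldots,v_n$ onto $O(\kappa^2 \log n)$ random unit directions to extract, via an ARV-style thresholding, two sets $L,R\subseteq V(G)$ of size $\Omega(an)$ that are $\Omega(1/(\kappa\sqrt{\log n}))$-separated in the $\ell_2^2$ metric induced by $X$ --- conditional on $X$ being an approximately feasible embedding. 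Then run Sherman's single-commodity max-flow/min-cut subroutine (implemented via the almost-linear time max-flow of \cite{DBLP:journals-corr-abs-2203-00671}) to attempt to route $\Omega(an)$ units of flow between $L$ and $R$ in $G$ with congestion $\tilde{O}(\kappa \sqrt{\log n}/\alpha)$.

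If the flow routine succeeds, output \textbf{yes} and return the sparsest level-set cut $(P,P')$ of the flow potential: $|E(P,P')|\leq O(\alpha\kappa\sqrt{\log n})$ follows from the congestion bound together with the total flow value, and the SDP-length bound $\sum_{i\in P, j\in P', ij\in E(G)}\|v_i-v_j\|^2\leq O(\alpha)$ is inherited from $\iprod{L,X}\leq O(\alpha)$ since the cut edges are a subset of $E(G)$. If instead the flow fails, Sherman's certificate provides a "matching cover" witnessing that a positive fraction of pairs $(i,j)\in L\times R$ admit an intermediate vertex $k$ with $\|v_i-v_k\|^2+\|v_k-v_j\|^2 < \|v_i-v_j\|^2$, i.e.\ a collection of triangle inequalities jointly violated by $X$. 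Aggregating these into $M=\sum_p \lambda_p T_p$ with $\sum_p \lambda_p = 1$ yields the feedback hyperplane: $\iprod{M,X}\leq -\Theta(1)$ on the current solution while $\iprod{M,X^\star}\geq 0$ on any feasible $X^\star$, establishing the $\Theta(1)$-separation property.

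For the width and runtime parameters: each $T_p$ is a difference of small-support Laplacians, so $\|T_p\|_{\mathrm{op}}\leq O(1)$, and combined with $\|L\|_{\mathrm{op}}\leq \tilde{O}(\alpha/n)$ (using the flatness condition $\mathrm{Tr}(X)\leq O(n)$ and the bounded degree structure on $E(G)$) the oracle is $\tilde{O}(\alpha/n)$-bounded. The $O(\log n)^2$-robustness is absorbed by inflating Sherman's congestion threshold by the same polylogarithmic factor, which only changes hidden constants in the approximation. The total runtime is dominated by $O(\log n)$ invocations of the almost-linear max-flow on an instance of size $O(|V(G)|+|E(G)|)$, each incurring an $n^{o(1)}$ overhead, times the projection dimension $O(\kappa^2\log n)$, giving the claimed $\tilde{O}(|V(G)|^{1+O(1/\kappa^2)+o(1)}+|E(G)|^{1+O(1/\kappa^2)+o(1)})$ leanness bound.

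The main obstacle will be verifying that Sherman's flow-failure certificate, which is originally phrased as a statement about the embedding's geometry rather than about the SDP, can be converted into a feedback matrix that is \emph{exactly} a convex combination of the triangle Laplacians $T_p$ appearing in \cref{eq:canonical-balanced-cut-primal-sdp} --- and that this conversion preserves both the $\Theta(1)$-separation margin and the spectral norm bound. This amounts to showing that the matching cover localizes the infeasibility onto $O(n)$ two-hop paths each of which gives a genuine violated triangle constraint; the argument follows the well-separated pairs analysis of ARV applied to the low-dimensional projection, but the bookkeeping needed to ensure the width bound does not blow up with $\kappa$ is the delicate point.
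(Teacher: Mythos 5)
The paper does not reprove this lemma; it cites Sherman~\cite{DBLP:conf-focs-Sherman09} wholesale and only adds the one-line observation that the leanness bound improves upon swapping in the almost-linear-time max-flow of~\cite{DBLP:journals-corr-abs-2203-00671} (\cref{theorem:max-flow-linear-time}). Your sketch follows Sherman's overall design (random projection, ARV-style well-separated sets, single-commodity flow, and feedback via violated triangle inequalities on flow failure), so the route is the same as in the cited source, and comparing against the paper itself there is nothing to compare: there is no in-paper proof.

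That said, two concrete pieces of your outline do not hold up as written. First, the runtime accounting is wrong: $O(\log n)$ flow invocations, each with an $n^{o(1)}$ overhead, times a projection dimension of $O(\kappa^2\log n)$, is only a polylogarithmic-times-$n^{o(1)}$ overhead and cannot produce the $|V(G)|^{1+O(1/\kappa^2)}$ term in the leanness bound. In Sherman's construction, the $n^{O(1/\kappa^2)}$ factor arises from the size of the auxiliary ``chaining'' flow network obtained by repeatedly composing matchings (the expander-flow/matching-chaining step), not from the projection dimension or the number of independent max-flow calls. Second, and as you flag yourself, you leave unproven the conversion of Sherman's flow-failure certificate into a feedback matrix that is a convex combination of the triangle Laplacians $T_p$ with operator norm $\tilde{O}(\alpha/n)$; this conversion is precisely the substantive mathematical content of the oracle lemma (it is what makes the oracle a $\Theta(1)$-separation, $\tilde{O}(\alpha/n)$-bounded oracle rather than just a rounding scheme), so until that step is carried through the sketch is a plan rather than a proof.
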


The proof of \cref{lemma:flow-oracle} can be found in \cite{DBLP:conf-focs-Sherman09}. 
The improvement on the time complexity follows by \cref{theorem:max-flow-linear-time}. 
The next result is the \textit{crucial} addition we need to the oracle of \cite{DBLP:conf-stoc-AroraK07,DBLP:conf-focs-Sherman09}.  We prove it in \cref{section:heavy-vertices-removal}.

\begin{lemma}\label{lemma:oracle-heavy-vertices}
	Let $0<\ell\leq 1\,, \alpha, a >0$ and $0< \delta \leq 1/200$.
	Let $G$ be a graph on $\ell \cdot n$ vertices such that
	\begin{itemize}
		\item it has a $a$-balanced partition $(A, B)$ with $\Card{E(A, B)}\leq \alpha\,,$
		\item $G(V, E(A, B))$ is geometrically expanding up to scale $(\delta, n, \alpha)$. 
	\end{itemize}
	
	There exists a $\tilde{O}(\alpha/\ell\cdot n)$-bounded, $O(\log n)^{100}$-robust, $\Theta(1/\log n)$-separation oracle that, given a candidate solution to \cref{eq:canonical-balanced-cut-primal-sdp} with input graph $G$, 
	either outputs \textbf{no}, or outputs a set of edges $E^*\subseteq E(G)$ of cardinality $O(\alpha/\delta)$ and partition $(P_1, P_2, V')$ of $V(G)$ such that
	\begin{enumerate}
		\item 
		$\Card{E(P_1, P_2, V')\setminus E^*}\leq O\Paren{ \frac{\alpha}{\delta}\Paren{1+\frac{\ell}{\delta}}}\,.$
		\item  $\Abs{\Card{P_1} - \Card{P_2}} \leq a\cdot n/2\,.$ 
		\item 
		$\forall \ij \in E(G)\setminus E^*$ with $i,j \in V'$ it holds $\Snorm{v_i-v_j}\leq \delta\,.$
		\item $H_{\delta, n}(X, V')=\emptyset\,.$ 
	\end{enumerate}
	Moreover the oracle is $\cT$-lean for some $\cT\leq \tilde{O}\Paren{\Card{V(G)}+\Card{E(G)}}$ and $1-O(\log n)^{-50}$-reliable.
\end{lemma}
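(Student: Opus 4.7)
The plan is to follow the blueprint in the Techniques section: design a randomized ball-carving subroutine that runs in near-linear time on the candidate embedding, invoke it $\Theta(\log n)^C$ times for a sufficiently large constant $C$, and either extract a valid partition $(P_1,P_2,V')$ satisfying conditions~1--4 or certify infeasibility by combining the failure into a small-width feedback matrix built from triangle inequalities.

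The first step is an approximate heavy-vertex detection procedure. Given the Gram vectors $v_1,\dots,v_{\ell n}$, I would project to dimension $\Theta(\log n)$ via a Johnson--Lindenstrauss sketch and then bucket the projections into a grid of side length $\Theta(\delta/\sqrt{\log n})$. Counting vectors in each vertex's neighborhood of cells yields, in total time $\tilde O(\ell n)$, a superset $\widehat H \supseteq H_{\delta,n}(X,V)$ in which every detected vertex is at least $(\sqrt{2}\delta,n)$-heavy. The mild radius slack only affects the constants in conditions~1 and~3, and the data structure supports output-sensitive ball queries, which is crucial for the lean-complexity bound.

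The second step is the ball carving itself, essentially the classical \cite{DBLP:conf-stoc-MakarychevMV12} scheme: iteratively pick any surviving $k \in \widehat H$, sample a radius $r\in[\delta, 2\delta]$ from the appropriate shell distribution, form the cluster $C_k=\{j : \Snorm{v_k-v_j}\leq r\}$, remove $C_k$, and repeat until $\widehat H$ is empty. Assign the carved clusters greedily to $P_1$ and $P_2$ to maintain $\Abs{\Card{P_1}-\Card{P_2}}\leq an/2$ (condition~2 is easy since each $C_k$ has size $O(\delta^2 n) \ll an$), and let $V'$ be the leftover. Put into $E^*$ the long cut edges of the carving and all long intra-$V'$ edges, which gives $\Card{E^*} = O(\alpha/\delta)$ by charging each long edge against the $\delta$ it contributes to the objective. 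Condition~3 is then immediate, condition~4 holds because $\widehat H$ contained all genuine heavy vertices, and condition~1 follows in the \textbf{yes} case from a standard shell-counting bound of $O((\alpha/\delta)(1+\ell/\delta))$ on the expected number of short edges that cross partition boundaries.

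The third step---the main technical obstacle---is the \textbf{no} case. If across the $\Theta(\log n)^C$ independent repetitions the average number of violating cut edges exceeds the condition~1 budget, I would construct the feedback matrix $M$ as follows. For every edge $ij$ of $E(G)$ detected as intra-$V'$ and long (i.e.\ violating condition~3) in some repetition, there must exist a vertex $k\in\widehat H$ produced by the carving with $\Snorm{v_i-v_k}, \Snorm{v_j-v_k} \leq \delta$, since otherwise the carving would have placed $ij$ inside $E^*$. The triple $(i,k,j)$ is then a violated triangle inequality $T_{ikj}$, which is satisfied by every feasible $X^*$. Averaging these $T_{ikj}$ over repetitions and violating edges, with the correct normalization, produces a matrix $M$ of spectral norm $O((\log n)^{100})$ (robustness) with $\iprod{M, X}\leq -\Omega(1/\log n)$ (separation), while $\iprod{M, X^*} \geq 0$ for any feasible $X^*$. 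Geometric expansion (\cref{definition:geometric-expansion}) enters in the contrapositive to guarantee that a feasible small-objective $X^*$ would have produced far fewer violations in expectation, and reliability $1-O(\log n)^{-50}$ follows by standard concentration across the repetitions. The delicate part will be coupling the random choices of the carving across the candidate $X$ and a hypothetical feasible $X^*$, so that the empirical excess in violations translates cleanly into a deterministic separating hyperplane of the required width and separation; implementing the JL sketch plus grid hashing within the lean-time budget is a secondary obstacle that follows standard techniques.
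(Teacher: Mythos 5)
Your first two steps (approximate heavy-vertex detection, ball carving) are on the right track and roughly match the paper's Lemma~\ref{lemma:procedure-find-heavy-vertices} and \cref{algorithm:heavy-vertex-removal-procedure}, though you propose a JL+grid-hash detection subroutine where the paper uses random sampling with a coupon-collector argument; both are legitimate near-linear-time ways to get a one-sided approximation to $H_{\delta,n}(X,V)$. The structure of your \textbf{no} case is also partially aligned with the paper's Case~2 (triangle-inequality feedback). However, there is a genuine gap: you only account for one of the two ways the candidate $X$ can be infeasible, and in doing so you implicitly assume $\iprod{L,X}\leq O(\alpha)$, which the MMW framework does not guarantee.

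Concretely, two places in your argument silently use an objective bound that may not hold. First, your claim that $\card{E^*}=O(\alpha/\delta)$ ``by charging each long edge against the $\delta$ it contributes to the objective'' assumes the objective value of the current $X$ is $O(\alpha)$; if $X$ is an infeasible MMW iterate with a large $\iprod{L,X}$, this bound fails, and the oracle must instead output \textbf{no} with a certificate. The paper handles this by a preliminary max-flow step (repeated $O(\log n)^{100}$ times for the reliability bound): if long edges are too numerous, a $d$-regular flow between a random $a$-balanced bipartition is large, and the demand Laplacian $D$ of that flow yields the feedback matrix $\frac{\alpha}{\card{V(G)}}\Id-D$. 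Second, and more importantly, after the carving fails, the paper distinguishes Case~1 (the flow over the cut $(P_1\cup P_2, V')$, weighted by $\Snorm{v_i-v_j}$, is already $\Omega(\alpha(1+\ell/\delta))$) from Case~2 (that weighted flow is small). Only in Case~2 does the overcutting force violated triangle inequalities and hence a valid triangle-inequality feedback matrix. In Case~1 the cut edges may all respect the triangle inequality but still carry too much SDP mass; the separating hyperplane there is again the flow-demand matrix $\frac{\alpha}{\card{V(G)}}\Id-D$, not a sum of $T_p$'s. Your proposal jumps directly to the triangle-inequality conclusion without first ruling out Case~1, so the asserted $\iprod{M,X^*}\geq 0$ for feasible small-objective $X^*$ need not hold, and the constructed $M$ is not necessarily a separating hyperplane. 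You need to add the flow-based branch (both at the start, to bound $\card{E^*}$, and after the carving, to split into the two cases) for the proof to close.
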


Before presenting the algorithm that uses the oracle above, let's briefly discuss its meaning.
Notice the \textit{heavy vertices} condition (4). This  ensures that in the subgraph $G(V', E\setminus E^*)$ any feasible embedding has weight at most $10\alpha\cdot \delta^2$ on the edges in $(E(A,B)\setminus E^*)\cap (V'\times V')$. In other words, after paying the edges in the cut of the partition $(P_1, P_2, V')$, geometric expansion of the underlying graph guarantees that the minimum objective value of \cref{eq:balanced-cut-sdp}  now decrease by a $10\delta^2$ factor. 

Next we present the algorithm behind \cref{theorem:balanced-cut-technical} and prove its correctness. We denote by \oracle a combination of the oracles in \cref{lemma:basic-oracle}, \cref{lemma:flow-oracle} and \cref{lemma:oracle-heavy-vertices} obtained applying them sequentially (in this order).

\begin{algorithm}[ht]
   \caption{Fast and robust algorithm for balanced cut}
   \label{algorithm:balanced-cut}
\begin{algorithmic}
   \STATE {\bfseries Input:}  A graph $G$ with minimum $a$-balanced cut of value at most $\alpha$, $T\,, \kappa, d, \delta>0$.
   \STATE
   \STATE Set $\super{\delta}{0}=1/200$.
    \FOR{$i=1$ {\bfseries to} $T$} 
    \STATE Let $\super{G}{i}$ be the current remaining graph with optimal cut value $\super{\alpha}{i}$ and $\Card{V(\super{G}{i})}=:\super{n}{i}$.
    \STATE Run the approximate matrix multiplicative weights algorithm (\cref{algorithm:approx-mmw}) for program \ref{eq:canonical-balanced-cut-primal-sdp} using  \oracle (with parameter $\super{\delta}{i}$).
    \STATE Let $W^*\in \R^{d, \times n}$ be the returned embedding, $\super{E}{i}$ the set of edges found and $(\super{P}{i}, \super{P'}{i})\,, $ $(\super{P_1}{i}\,, \super{P_2}{i}\,, \super{V'}{i})$ the partitions found by  \oracle.
    \STATE Remove the edges in $\super{E}{i}$.
    \IF {$\Card{E(\super{P}{i}, \super{P'}{i})}\leq O(\alpha\cdot(1 + \delta\cdot \kappa \cdot \sqrt{\log n}))$}
    \STATE Exit the loop.
    \ELSE 
    \STATE Remove vertices in $\super{P_1}{i}$ and in $\super{P_2}{i}$. Set $\super{\delta}{i+1}$ to $\super{\delta}{i}/100$.
    \ENDIF
    \ENDFOR
    \STATE Let $(P,P')$ be the bipartition found by  \oracle in its last iteration.
    \STATE Arbitrarily assign sets $\super{P}{1}_1\,, \super{P_2}{1}\,, \ldots\,, \super{P_1}{i}\,, \super{P_2}{i}$ removed in previous iterations to $P$ or $P'$, keeping the two sides $a$-balanced. 
    \STATE {\bfseries Return} the resulting bipartition.
\end{algorithmic}
\end{algorithm}

\begin{proof}[Proof of \cref{theorem:balanced-cut-technical}]
	We set $T$ such that $\super{\delta}{T}=100\delta$ where $\super{\delta}{0}=1/200$.
	By construction of \oracle and \cref{corollary:running-time-approx-mmw}
	the algorithm runs in time $\tilde{O}\Paren{\Card{V(G)}^{1+O(1/\kappa^2)+o(1)}+\Card{E(G)}^{1+O(1/\kappa^2)+o(1)}}$.
	
	

	Now consider a fixed iteration $i$, we assume $\super{\alpha}{0}=\alpha\,, \super{\ell}{0}=1$. Let $\super{\alpha}{i}$ be the cost of the minimum feasible solution on the remaining graph $\super{G}{i}$ on $\super{\ell}{i}\cdot n$ vertices. Let $\super{E}{i}$ be the set of edges removed at iteration $i$ and $(\super{P}{i}_1, \super{P}{i}_2, \super{V'}{i})$ the partition at iteration $i$.
	Notice that if  at some point $\super{\alpha}{i}\leq O(\alpha\cdot (1 + \delta\cdot \kappa \cdot \sqrt{\log n}))$ then the algorithm breaks the cycle and returns a balanced partition.
	
	So we may assume that at the current iteration $i$, $\super{\alpha}{i}\geq \omega(\alpha\cdot ((1 + \delta\cdot \kappa \cdot \sqrt{\log n}))$. 
	Now the result follows by showing that, at each step, it holds
	\begin{align}\label{eq:geometric-expansion}
		\super{\alpha}{i}\leq 10\cdot \super{\alpha}{i-1}\cdot \Paren{\super{\delta}{i-1}}^2\,.
	\end{align} 
	Indeed suppose the claim holds. By construction all the edges in the final cut are in
	\begin{align*}
		\Paren{\underset{i <T}{\bigcup} \super{E}{i}} \cup \Paren{\underset{i< T}{\bigcup} E(\super{P}{i}_1, \super{P}{i}_2, \super{V'}{i})\setminus \super{E}{i}} \cup E(\super{P}{T}, \super{P'}{T})\,.
	\end{align*}
	By \cref{eq:geometric-expansion} we can bound the first term as $\card{\underset{i\leq T}{\bigcup} \super{E}{i}}\leq O(\alpha)\,.$
	For the second term:
	\begin{align*}
		\Card{\underset{i\leq T}{\bigcup} E(\super{P}{i}_1, \super{P}{i}_2, \super{V'}{i})}&\leq 
		\sum_{i\leq T} O\Paren{\frac{\super{\alpha}{i}}{\super{\delta}{i}}\Paren{1+\frac{\super{\ell}{i}}{\super{\delta}{i}}}}\\
		&\leq O\Paren{\sum_{i\leq T} \frac{\super{\alpha}{i}}{\super{\delta}{i} } + \frac{\super{\alpha}{i}\cdot \super{\ell}{i}}{\Paren{\super{\delta}{i}}^2}}\\
		&\leq O\Paren{\sum_{i\leq T}\super{\alpha}{i} \Paren{\super{\delta}{i}+\super{\ell}{i+1}}}\\
		&\leq O\Paren{\sum_{i\leq T}\super{\alpha}{i}\cdot \Paren{\super{\delta}{i}+\super{\ell}{i}}}\\
		&\leq \alpha\cdot O\Paren{\sum_{i\leq T} \super{\delta}{i}+\super{\ell}{i}}\\
		&\leq O(\alpha)\,,
	\end{align*}
	where in the second step we used the inequalities
	\begin{align*}
		\frac{\super{\alpha}{i}}{\super{\delta}{i}}\leq 10^3\cdot \super{\alpha}{i-1}\cdot \super{\delta}{i-1}\,,&\\
		\frac{\super{\alpha}{i}\cdot \super{\ell}{i}}{\Paren{\super{\delta}{i}}^2}\leq  10^5\cdot \super{\alpha}{i-1}\cdot \super{\ell}{{i}}\,,&
	\end{align*}
	both following from \cref{eq:geometric-expansion}.
	For the third term  we have
	$\super{\alpha}{T}\leq O\Paren{\alpha\cdot \delta}$ by construction.
	Thus by  \cref{lemma:flow-oracle} we get $\Card{E(\super{P}{T}, \super{P'}{T})}\leq O\Paren{\alpha\cdot \kappa\cdot \delta \sqrt{\log n}}\,.$
	
	It remains to prove \cref{eq:geometric-expansion}.
	At each iteration $i$, the set $V(\super{G}{i})$ does not contain edges of length more than $\super{\delta}{i}$ in the embedding as well as $(\super{\delta}{i}, n)$ heavy vertices. Thus by \cref{definition:geometric-expansion}, the set $V(\super{G}{i})$ has a $\Omega(a)$-balanced cut of value $O(\super{\alpha}{i})$ . Then \cref{eq:geometric-expansion} follows as desired.
\end{proof}

\section*{Acknowledgments}
Tommaso d’Orsi is partially supported by the project MUR FARE2020 PAReCoDi.

\bibliography{custom}
\bibliographystyle{icml2024}

\newpage
\appendix

\section{Background}\label{section:background}
We introduce here background notion used throughout the paper.

\paragraph*{Maximum flow}
Let $G(V, E)$ be a graph.
For a flow which assigns  value $f_p$ to path $p$ define $f_e$ to be the flow on edge $e\in E(G)$, i.e. $f_e :=\sum_{p\ni e} f_p$. Define  $f_\ij$ to be the total flow between nodes $i\,, j$,  i.e. $f_\ij = \sum_{p\in P_\ij}f_p$, where $P_\ij$ is the set of paths from $i$ to $j$. Similarly, define $f_i$ to be flow from node $i$. That is, $f_i=\sum_{j\in[n]}f_{ij}\,.$ A valid $d$-regular flow is one that satisfies the capacity constraints: $\forall e\in E\,:\, f_e\leq 1$ and $\forall i \in V\,:\, f_i\leq d$.
For a partition $(A, B)$ of $G$, the maximum $d$-regular flow between $A$ and $B$ is the  maximum $d$-regular flow between vertices $s$ and $t$ in the graph obtained from $G$ as follows: (1) connect all vertices in $A$ to a new vertex $s$ by edges of capacity $d$, (2) connect all vertices in $B$ to a new vertex $t$ by edges of capacity $d$.
 
Through the paper we always assume the capacities $d$ to be integral and bounded by $O(\poly(n))$. We assume the algorithm used to compute the maximum flow is the near linear time algorithm in \cite{DBLP:journals-corr-abs-2203-00671}, captured by the result below:
 
 \begin{theorem}[Maximum flow in almost linear time \cite{DBLP:journals-corr-abs-2203-00671}]\label{theorem:max-flow-linear-time}
 	Let $G$ be a graph on $n$ vertices and let $d$ be  integral of value at most $O(\poly(n))$. There exists an algorithm computing the maximum $d$-regular flow between two vertices in time at most $O(\Card{E(G)}^{1+o(1)})$.
 \end{theorem}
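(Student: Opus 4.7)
The theorem is cited verbatim from Chen, Kyng, Liu, Peng, Probst~Gutenberg, and Sachdeva (2022), so the plan is to reduce the $d$-regular max-flow instance to a standard (edge-capacitated) max-flow instance and then invoke their result as a black box. The $d$-regular constraint imposes both $f_e \leq 1$ on edges and $f_i \leq d$ on vertices. I would handle the vertex capacities by the textbook \emph{vertex-splitting} reduction: for each internal vertex $i \in V(G)$, replace $i$ by two copies $i_{\text{in}}, i_{\text{out}}$ joined by a directed edge of capacity $d$; redirect every incoming arc of $i$ to $i_{\text{in}}$ and every outgoing arc of $i$ to $i_{\text{out}}$. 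The source and sink are left unsplit (or split with infinite internal capacity). The augmented graph has $\Theta(\card{V(G)})$ vertices and $\card{V(G)} + \card{E(G)}$ edges, and $s$-$t$ flows in it are in bijection with $d$-regular $s$-$t$ flows in the original.

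On the augmented graph I would call the Chen~et~al.\ algorithm, which solves integral max-flow with polynomially bounded capacities (here $d \leq O(\poly(n))$ and edge capacities equal to $1$, so the precondition is met) in time $m^{1+o(1)}$, where $m$ is the number of edges. Since the augmentation only blows up the edge count by an additive $\card{V(G)}$ term, this gives $\tilde{O}\paren{\card{V(G)}+\card{E(G)}}^{1+o(1)} = O(\card{E(G)}^{1+o(1)})$ whenever the graph is connected, which is all that is needed for the applications of the theorem in this paper. The output is an integral optimal flow, from which a maximum cut certificate can be read off by a standard reachability computation on the residual graph in $\tilde O(\card{E(G)})$ time.

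The main obstacle in a self-contained proof would lie entirely inside the black box: Chen~et~al.\ design a preconditioned $\ell_1$ interior-point method for min-cost flow that runs in $\tilde{O}(m^{1/2})$ outer iterations, each needing a \emph{min-ratio cycle} on the current residual instance. The hard part is supporting these queries in amortized $m^{o(1)}$ time per iteration under adversarial updates to edge resistances. They do so by maintaining a random hierarchy of low-stretch rooted trees (``tree chains'') together with a sketch-based min-ratio cycle oracle on each tree, and arguing via a potential/amortization argument that the total cost of reconstructions along the IPM trajectory is $m^{1+o(1)}$. For our purposes none of this needs to be re-derived: the only content of the proof is to verify that the reduction above preserves both the structure and the capacity bounds required by the cited theorem.
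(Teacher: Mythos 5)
Your proposal is correct, and in fact more detailed than the paper itself, which states \cref{theorem:max-flow-linear-time} purely as a citation to the Chen~et~al.\ almost-linear max-flow result without spelling out the vertex-capacitated reduction. The vertex-splitting gadget you describe is the standard and correct way to encode the $f_i \leq d$ constraints as edge capacities, and your observation that the augmented instance has $O(\card{V(G)} + \card{E(G)})$ edges, with polynomially bounded integral capacities, is exactly what is needed to invoke the black box; the caveat about needing $\card{V(G)} = O(\card{E(G)})$ for the stated $O(\card{E(G)}^{1+o(1)})$ bound (rather than $O((\card{V(G)}+\card{E(G)})^{1+o(1)})$) is a fair nitpick that the paper glosses over but which is harmless in the applications here.
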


\subsection{The matrix multiplicative weights method for SDPs}\label{mmw-background}
We  recall here how the matrix multiplicative method can be used to approximately solve semidefinite programs.  \cite{DBLP:conf-stoc-AroraK07, DBLP:conf-soda-Steurer10}. As most of the notions presented here already appeared in \cite{DBLP:conf-stoc-AroraK07, DBLP:conf-soda-Steurer10}, we encourage the knowledgeable reader to skip this section, move directly to \cref{section:mmw-algorithm} and come back when needed.
We focus on minimization problems although the same framework applies to maximization problems.

A primal semidefinite program over $n^2$ variables (i.e. the $n$-by-$n$ matrix variable $X$) and $m$ constraints can be written in its canonical form as
\begin{equation}\label{eq:canonical-primal-sdp}
	\Set{
		\begin{aligned}
		&\min &\iprod{L, X}\\
		&\forall j \in [m]\,, &\iprod{A_j, X}\geq b_j\\
		&&X\sge \mathbf{0}
		\end{aligned}
	}
\end{equation} 
Here $A_1,\ldots, A_m, L$ are symmetric matrices. 
We denote the feasible set of solutions by $\cX$ and the optimal objective value by $\alpha$. For simplicity we assume  that $A_1=-\Id_n$ and $b_1 =-r$. This serves to bound the trace of the solution so that $\cX\subseteq \Delta_n(r)$. The associated dual, with variables $y_1,\ldots, y_m$, is the following program

\begin{equation}\label{eq:canonical-dual-sdp}
	\Set{
		\begin{aligned}
		&\max &\iprod{b, y}\\
		&&\sum_{j\in [m]} A_jy_j\sle L\\
		&&y\geq \mathbf{0}
		\end{aligned}
	}
\end{equation} 
where $b$ is the $m$-dimensional vector with entries $b_1,\ldots,b_m$.

For a convex set $\cX^*\subseteq\Delta_n(r)$ (think of $\cX^*$ as the set of feasible solution to a program of the form \cref{eq:canonical-primal-sdp} with objective value close to the optimum) a $\gamma$-\textit{separation} \oracle is an algorithm that , given a candidate matrix $X$, outputs one of the following:

\begin{itemize}
	\item \textbf{yes}:  the \oracle determines $X$ is "close" (the precise notion of closeness is problem dependent) to $\cX^*$. 
	\item \textbf{no}: the \oracle finds a hyperplane that separates $X$ from $\cX^*$ by a $\gamma$-margin. That is, it outputs a symmetric matrix $M$ such that for all $X'\in \cX^*$ we have $\iprod{M,X'}\geq 0$ while $\iprod{M, X} < -\gamma \alpha$.
\end{itemize}

A $\gamma$-separation \oracle is said to be $\zeta$-bounded if $\Norm{M}\leq \zeta$ for any hyperplane $M$ found by the \oracle.
The boundedness of the \oracle will be relevant for the running time of our algorithms. It is important to notice that the parameters $\zeta, \gamma$ are not independent, in particular one may increase $\gamma$ by scaling up the corresponding matrix $M$. We keep them distinct for  convenience .

Concretely, given a program of the form \cref{eq:canonical-primal-sdp} and a candidate solution $X$, we will consider \oracle algorithms that, in the \textbf{no} case, find a pair $(y, F)$ where $F$ is a matrix in $\cS_n$ satisfying $F \sle L$ and $y$ is a candidate solution\footnote{Not necessarily feasible.}  for the dual program \cref{eq:canonical-dual-sdp} such that
$y\in \Set{y \suchthat \iprod{b, y}\geq \alpha\,, y\geq 0}$ and 
\begin{align*}
	\iprod{\sum_{j\in [m]}A_j y_j-F, X} \leq - \gamma\cdot \alpha\,.
\end{align*}
It is easy to see that this is indeed a separating hyperplane as for any feasible solution $X'$ with objective value less than $\alpha(1+\gamma)$
\begin{align*}
	\iprod{\sum_{j\in [m]}A_jy_j-F, X'}\geq \sum_{j\in [m]}b_jy_j- \iprod{L, X'}> \alpha-(1+\gamma)\alpha = -\gamma\cdot  \alpha \\,.
\end{align*}

We will use our oracle algorithms in the following  framework. 

\begin{algorithm}[ht]
   \caption{Matrix multiplicative weights algorithm for SDPs}
   \label{algorithm:mmw}
\begin{algorithmic}
    \STATE {\bfseries Input:}  A program of the form \cref{eq:canonical-primal-sdp} with optimal value $\alpha$, a $\zeta$-bounded $\gamma$-separation \oracle, parameters $T,\epsilon, r$.
	\STATE
    \STATE Set $\super{X}{1} = \frac{r}{n}\Id_n$.
    \FOR{$t=1$ {\bfseries to} $T$} 
    \STATE Run the \oracle with candidate solution $\super{X}{t}$.
    \IF{the \oracle outputs \textbf{yes}}
    \STATE {\bfseries Return} $\super{X}{t}$.
    \ELSE
    \STATE Let $(\super{y}{t}, F)$ be the pair generated by \oracle. Set $\super{Y}{t} = \Paren{\sum_{j\in [m]}A_j\super{y_j}{t}-F+\zeta\Id_n}/2\zeta$.
    \ENDIF
    \STATE Compute $\super{X}{t+1} = r\cdot \exp\Paren{\eps\sum_{t'\leq t}\super{Y}{t'}}/\Tr\exp\Paren{\eps\sum_{t'\leq t}\super{Y}{t'}}$
    \ENDFOR
\end{algorithmic}
\end{algorithm}

The choice of the iterative updates in step 4 is based on the matrix multiplicative weights method. In particular, this allows one to obtain the following crucial statement.

\begin{theorem}[\cite{DBLP:conf-stoc-AroraK07}]\label{theorem:result-mmw}
	Consider \cref{algorithm:mmw}.
	Let $\eps\leq \gamma\alpha/(2\zeta\cdot r)$ and $T\geq 2\eps^{-2} \log n$. If there exists a feasible solution with value at most $\alpha(1+\gamma)$, then \oracle will output \textbf{yes} within $T$ iterations.
\end{theorem}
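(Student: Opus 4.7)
The plan is to argue by contradiction using the standard matrix multiplicative weights regret bound. Suppose that \oracle outputs \textbf{no} in all $T$ iterations; I will show that this is inconsistent with the existence of a feasible solution of value at most $\alpha(1+\gamma)$.

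First I would verify that each $\super{Y}{t}$ in step 4 of \cref{algorithm:mmw} lies in $[\mathbf{0}, \Id_n]$ in the \Lowner order. This is because the $\zeta$-boundedness of \oracle gives $\Normop{\super{M}{t}}\leq \zeta$ for $\super{M}{t}:=\sum_j A_j\super{y_j}{t}-\super{F}{t}$, so $\super{M}{t}+\zeta\Id_n\sge \mathbf{0}$ has spectral norm at most $2\zeta$, and the division by $2\zeta$ places $\super{Y}{t}$ in the required range. This is exactly the hypothesis needed to invoke the exponential-weights regret analysis.

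The heart of the argument is the matrix multiplicative weights inequality. Tracking the potential $\Phi_t:=\Tr\exp(\eps\sum_{t'\leq t}\super{Y}{t'})$ via Golden-Thompson together with the operator inequality $\exp(\eps Y)\sle \Id_n+(\exp(\eps)-1)Y$ valid for $\mathbf{0}\sle Y\sle \Id_n$, one obtains the multiplicative upper bound $\Phi_{t+1}\leq \Phi_t\cdot\Paren{1+(\exp(\eps)-1)\iprod{\super{Y}{t+1},\super{X}{t+1}}/r}$. Telescoping gives an upper bound on $\log\Phi_T$; combining with the lower bound $\log\Phi_T\geq \iprod{\eps\sum_t\super{Y}{t}, X^*}/r$ from the Gibbs variational principle (valid for any $X^*\in \Delta_n(r)$ with $\Tr X^*=r$) yields, after rearrangement, the standard regret inequality
\[
	\sum_{t=1}^T \iprod{\super{Y}{t},\super{X}{t}}\geq (1-\eps)\sum_{t=1}^T \iprod{\super{Y}{t}, X^*}-\frac{r\log n}{\eps}\mper
\]

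To close the argument I would take $X^*$ to be a feasible solution of value at most $\alpha$ (the existence of such a witness follows from the hypothesis, with the $(1+\gamma)$ slack absorbed into the constants below). On the one hand, the \textbf{no}-case of \oracle yields $\iprod{\super{M}{t},\super{X}{t}}\leq -\gamma\alpha$ and, combined with $\Tr\super{X}{t}=r$, gives $\iprod{\super{Y}{t}, \super{X}{t}}\leq r/2-\gamma\alpha/(2\zeta)$. On the other hand, dual admissibility $\iprod{b, \super{y}{t}}\geq \alpha$ together with $\super{y}{t}\geq \mathbf{0}$ and the feasibility constraints $\iprod{A_j, X^*}\geq b_j$ yield $\iprod{\sum_j A_j\super{y_j}{t}, X^*}\geq \alpha$, while $\super{F}{t}\sle L$ and $X^*\sge \mathbf{0}$ give $\iprod{\super{F}{t}, X^*}\leq \iprod{L, X^*}\leq \alpha$, so that $\iprod{\super{Y}{t}, X^*}\geq r/2$. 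Substituting these bounds into the regret inequality and simplifying yields $\gamma\alpha/\zeta\leq \eps r + 2r\log n/(T\eps)$, and the hypotheses $\eps\leq \gamma\alpha/(2\zeta r)$ and $T\geq 2\eps^{-2}\log n$ each force one of the summands on the right to be at most $\gamma\alpha/(2\zeta)$, producing the desired contradiction. The main technical subtlety is the careful bookkeeping of constants in the matrix-exponential regret argument, and in particular verifying that the $\epsilon r$ and $\log n/\epsilon$ terms are exactly balanced by the chosen thresholds; once the range of $\super{Y}{t}$ and the two evaluations of $\iprod{\super{Y}{t}, \cdot}$ are in hand, the substitution is routine.
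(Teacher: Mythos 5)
Your proof is correct and reproduces the standard Arora--Kale matrix multiplicative weights argument; the paper cites \cite{DBLP:conf-stoc-AroraK07} for this theorem without supplying a proof, so this is exactly the intended route. The ingredients you use --- the operator bound $\mathbf{0}\sle \super{Y}{t}\sle \Id_n$ from $\zeta$-boundedness, the Golden--Thompson potential recursion, the Gibbs lower bound, and the two complementary evaluations $\iprod{\super{Y}{t},\super{X}{t}}\leq r/2-\gamma\alpha/(2\zeta)$ (\textbf{no}-case) and $\iprod{\super{Y}{t},X^*}\geq r/2$ (dual admissibility and $F\sle L$) --- are precisely the components of the cited proof.

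Two small points worth tightening. First, at the boundary choices $\eps=\gamma\alpha/(2\zeta r)$ and $T=2\eps^{-2}\log n$, your final inequality $\gamma\alpha/\zeta\leq \eps r + 2r\log n/(T\eps)$ collapses to equality rather than a contradiction; you should carry through the \emph{strict} inequality $\iprod{\super{M}{t},\super{X}{t}}<-\gamma\alpha$ that the oracle's \textbf{no}-case actually provides (and which the paper's definition states), which makes the left-hand side strictly smaller and yields a genuine contradiction. Second, your parenthetical about ``absorbing the $(1+\gamma)$ slack into constants'' does not quite do what you suggest: if one only has $\iprod{L,X^*}\leq(1+\gamma)\alpha$, the lower bound degrades to $\iprod{\super{Y}{t},X^*}\geq r/2-\gamma\alpha/(2\zeta)$ and the two sides exactly cancel, giving a vacuous inequality. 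The argument really does require $\iprod{L,X^*}\leq\alpha$. This is fine here because the paper defines $\alpha$ to be the optimal value of \cref{eq:canonical-primal-sdp}, so an exact optimizer exists and is the correct choice of $X^*$; the $(1+\gamma)$ in the theorem statement is the tolerance in the \textbf{yes}-case, not a relaxation of the witness you compare against.
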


\subsubsection{Approximate matrix exponentiation, robust and reliable oracles}\label{section:approximate-matrix-exponentiation}

There are two issues with \cref{theorem:result-mmw} if one aims for near linear running time: first, already writing down $\super{X}{t}$ requires time quadratic in $n$; second, algorithms known to compute the matrix exponentiation are slow . One can circumvent these obstacles computing the exponentiation only approximately while also keeping only an approximate representation of $\super{X}{t}$.
To formalize this we introduce additional notation. For a positive semidefinite $n$-by-$n$ matrix $M$, we let $P_{\leq p}(M)$ be the degree-$p$ approximation of the matrix exponential $\exp(X)$:
\begin{align*}
	P_{\leq p}(M):= \sum_{i\leq p}\frac{1}{i!} M^i\,.
\end{align*}
Recall that for a matrix $M$, the Gram decomposition of the exponential $\exp(M)$ is  $\exp(M)=\transpose{\exp(\tfrac{1}{2}M)}\exp(\tfrac{1}{2}M)$ thus we may see  $P_{\leq p}(\tfrac{\eps}{2}\sum_{t'\leq t}\super{Y}{t'})$ as a a matrix having as columns low-degree approximations of the Gram vectors of $\exp(M)$.
One can then embed these vectors in a low dimensional space, without distorting their pair-wise distance by projecting them onto a random $d$-dimensional subspace:

\begin{lemma}[\cite{johnson1984extensions}]\label{lemma:jl}
	Let $\Phi$ be a $d$-by-$n$ Gaussian matrix, with each entry independently chosen from $N(0, 1/d)$. Then, for every vector $u \in \R^n$ and every $\eps \in(0,1)$
	\begin{align*}
		\bbP\Paren{\Norm{\Phi u} = \Paren{1\pm \eps}\Norm{u}}\geq 1-e^{-\Omega(\eps^2 d)}\,.
	\end{align*}
\end{lemma}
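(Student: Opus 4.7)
The plan is to reduce the statement to a standard concentration inequality for chi-squared random variables. By homogeneity of the Euclidean norm, it suffices to prove the bound for any fixed unit vector $u\in\mathbb{S}^{n-1}$, since $\Phi(\lambda u) = \lambda \Phi u$ and both sides of the asserted inequality scale the same way under $u\mapsto \lambda u$.

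With $\|u\|=1$ fixed, observe that the $i$-th coordinate $(\Phi u)_i = \sum_{j} \Phi_{ij} u_j$ is a linear combination of independent Gaussians with total variance $\tfrac{1}{d}\sum_j u_j^2 = \tfrac{1}{d}$, so $(\Phi u)_i \sim N(0,1/d)$, and the coordinates are independent across $i$ because the rows of $\Phi$ are independent. Therefore
\begin{align*}
    d\cdot \Norm{\Phi u}^2 \;=\; \sum_{i=1}^{d} (\sqrt{d}\,(\Phi u)_i)^2
\end{align*}
is a sum of $d$ independent squared standard normals, i.e.\ a $\chi^2_d$ random variable with mean $d$. So the problem reduces to showing that a $\chi^2_d$ random variable $Z$ concentrates around its mean: $\Pr(Z \notin [(1-\eps)^2 d, (1+\eps)^2 d]) \leq e^{-\Omega(\eps^2 d)}$.

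For this I would use the standard moment generating function computation. For $\lambda < 1/2$, $\E[\exp(\lambda Z)] = (1-2\lambda)^{-d/2}$. Applying the Chernoff bound to the upper tail with $\lambda = \eps/(2(1+\eps))$ gives
\begin{align*}
    \Pr\bigl(Z \geq (1+\eps)^2 d\bigr)\;\leq\; \exp\bigl(-\tfrac{d}{2}\bigl((1+\eps)^2 - 1 - 2\log(1+\eps)\bigr)\bigr) \;\leq\; e^{-\Omega(\eps^2 d)}\,,
\end{align*}
where the last inequality uses the Taylor expansion $2\log(1+\eps) \leq 2\eps - \eps^2 + O(\eps^3)$ for $\eps \in (0,1)$, which leaves a positive $\Omega(\eps^2)$ term in the exponent. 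The lower tail $\Pr(Z \leq (1-\eps)^2 d) \leq e^{-\Omega(\eps^2 d)}$ is handled symmetrically by applying Chernoff to $-Z$ with $\lambda < 0$. A union bound over the two tails yields $\Pr(\|\Phi u\|^2 \notin [(1-\eps)^2, (1+\eps)^2]) \leq 2 e^{-\Omega(\eps^2 d)}$, and taking square roots gives the claimed event $\|\Phi u\| = (1\pm \eps)\|u\|$.

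There is no genuine obstacle here: the proof is essentially the MGF of a chi-squared distribution plus an elementary Taylor expansion. The only place to be slightly careful is getting clean constants in the $\Omega(\eps^2 d)$ exponent across the full range $\eps \in (0,1)$ (as opposed to just small $\eps$), which is why one keeps both the $(1+\eps)^2 - 1$ and $-2\log(1+\eps)$ terms rather than linearising prematurely; for $\eps$ bounded away from $0$ the bound follows a fortiori from the small-$\eps$ analysis.
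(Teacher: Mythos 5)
Your proof is correct. The paper does not supply its own argument for this lemma; it simply cites \cite{johnson1984extensions} and uses the statement as a black box, so there is nothing internal to compare against. The reduction you give — homogeneity to reduce to $\|u\|=1$, observation that $d\|\Phi u\|^2\sim\chi^2_d$ because the rows of $\Phi$ are independent and each inner product $\langle \Phi_{i\cdot},u\rangle\sim N(0,1/d)$, then two-sided Chernoff via the $\chi^2$ moment generating function — is the standard modern proof of the distributional Johnson–Lindenstrauss lemma (in the Indyk–Motwani / Dasgupta–Gupta style). It is worth noting that the original reference actually argues via concentration of measure on the sphere for Lipschitz functions of an orthogonal projection rather than via Gaussian matrices and $\chi^2$ tails, so your route is genuinely different from the cited source; it is arguably more elementary and is the one most commonly taught today, and it directly matches the Gaussian-$\Phi$ formulation the paper actually uses.

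One small slip worth fixing: the exponent you display,
\begin{align*}
\exp\!\Bigl(-\tfrac{d}{2}\bigl((1+\eps)^2 - 1 - 2\log(1+\eps)\bigr)\Bigr)\,,
\end{align*}
is what results from the \emph{optimal} Chernoff parameter $\lambda^{\ast}=\tfrac12\bigl(1-(1+\eps)^{-2}\bigr)$, not from the $\lambda=\eps/(2(1+\eps))$ you state. With your stated $\lambda$ one instead gets $\exp\bigl(-\tfrac{d}{2}(\eps+\eps^2-\log(1+\eps))\bigr)$, which is also $e^{-\Omega(\eps^2 d)}$, so the conclusion is unaffected; just make the choice of $\lambda$ and the resulting exponent consistent. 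You should also state the Taylor inequality in a clean one-sided form, e.g.\ $\log(1+\eps)\leq \eps-\tfrac{\eps^2}{2}+\tfrac{\eps^3}{3}$ for $\eps>0$, rather than with an unsigned $O(\eps^3)$, so that the lower bound $(1+\eps)^2-1-2\log(1+\eps)\geq \tfrac{4}{3}\eps^2$ for $\eps\in(0,1)$ is airtight.
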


We will follow this strategy to speed up \cref{algorithm:mmw}.

\begin{algorithm}[ht]
   \caption{Approximate matrix multiplicative weights algorithm for SDPs}
   \label{algorithm:approx-mmw}
\begin{algorithmic}
    \STATE {\bfseries Input:}  A program of the form \cref{eq:canonical-primal-sdp} with optimal value $\alpha$, a $\zeta$-bounded $\gamma$-separation \oracle, parameters $T,\epsilon, r, d, p$, a $d$-by-$n$ random matrix  $\Phi$ with  i.i.d entries from $N(0, 1/d)$. 
    \STATE
    \STATE Set $\super{W}{1} = \frac{r}{n}(\Phi\Id_n)/\Tr(\Phi\Id_n)$.
    \FOR{$t=1$ {\bfseries to} $T$} 
    \STATE Run the \oracle with candidate solution $\super{W}{t}$.
    \IF{the \oracle outputs \textbf{yes}}
    \STATE {\bfseries Return} $\super{W}{t}$.
    \ELSE
    \STATE Let $(\super{y}{t}, F)$ be the pair generated by \oracle. Set $\super{Y}{t} = \Paren{\sum_{j\in [m]}A_j\super{y_j}{t}-F+\zeta\Id_n}/2\zeta$.
    \ENDIF
    \STATE Sample a $d$-by-$n$ random matrix $\Phi$ with  i.i.d entries from $N(0, 1/d)$.
    \STATE Compute $\super{W}{t} = r\cdot \Phi P_{\leq p}(\tfrac{\eps}{2}\sum_{t'\leq t}\super{Y}{t'}) /\Tr \Paren{\Phi P_{\leq p}(\tfrac{\eps}{2}\sum_{t'\leq t}\super{Y}{t'})}$.
    \ENDFOR
\end{algorithmic}
\end{algorithm}

Observe that $P_{\leq p}(\tfrac{\eps}{2}\sum_{t'\leq t}\super{Y}{t'}) $ corresponds to a low degree approximation of the Gram vectors of the matrix $\super{X}{t}$ in step 4 of \cref{algorithm:mmw}. We then compute $\super{W}{t}$ by embedding these vectors in a random $d$-dimensional space.

The   statement below shows that in many cases we can compute such matrices $\super{W}{t}$ very efficiently.

\begin{lemma}[\cite{DBLP:conf-soda-Steurer10}]\label{lemma:fast-computation-approx-mmw}
	Suppose we can perform matrix-vector multiplication with the matrices $\super{Y}{t}$ in time $\cT$. Then, for every $t$, we can compute $\super{W}{t}$ in time $O(t\cdot  p \cdot d \cdot \cT)$.
\end{lemma}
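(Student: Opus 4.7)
The plan is to compute $\super{W}{t}$ without ever materializing the $n\times n$ matrix $M := \tfrac{\eps}{2} \sum_{t' \leq t} \super{Y}{t'}$, any of its powers $M^i$, or the full polynomial $P_{\leq p}(M)$. Up to a scalar normalization, $\super{W}{t}$ is just a $d \times n$ matrix obtained by projecting a truncated series,
$$ \Phi\, P_{\leq p}(M) \;=\; \sum_{i=0}^{p} \frac{1}{i!}\, \Phi M^i, $$
and producing this small object together with a single scaling factor is all that is needed. The running time savings come from threading $\Phi$ \emph{through} the polynomial, rather than building $P_{\leq p}(M)$ and multiplying by $\Phi$ only at the end.

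First I would build the sequence of $d \times n$ matrices $B_0, B_1, \ldots, B_p$ defined iteratively by $B_0 := \Phi$ and $B_{i+1} := B_i M$. Since each $\super{Y}{t'}$ admits a matrix-vector product in time $\cT$ and $M$ is a scalar multiple of $\sum_{t' \leq t} \super{Y}{t'}$, a single matrix-vector product $v \mapsto M v$ can be evaluated in time $O(t \cdot \cT)$ by forming the $t$ individual products $\super{Y}{t'} v$, scaling, and summing them. Each $\super{Y}{t'}$ is symmetric, so $M$ is symmetric as well, and hence the $k$-th row of $B_{i+1}$ is $M$ applied to the $k$-th row of $B_i$ viewed as a column vector. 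Thus passing from $B_i$ to $B_{i+1}$ requires $d$ matrix-vector products with $M$ and takes time $O(d \cdot t \cdot \cT)$.

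Iterating this for $i = 0, \ldots, p-1$ produces all the $B_i$'s at a total cost of $O(p \cdot d \cdot t \cdot \cT)$. Accumulating the weighted sum $\sum_{i=0}^{p} B_i / i!$ on the fly requires $O(p\, d\, n)$ additions, which is absorbed into the same bound (matrix-vector multiplication with a nontrivial $\super{Y}{t'}$ inherently costs $\Omega(n)$, giving $dn \leq d\,t\,\cT$). The final normalization --- whether one interprets the denominator in \cref{algorithm:approx-mmw} as the Frobenius norm squared of $\Phi P_{\leq p}(M)$, which is a Johnson--Lindenstrauss estimator for $\Tr \exp(\eps \sum_{t' \leq t} \super{Y}{t'})$ justified by \cref{lemma:jl}, or as a trace of the stored $d \times n$ matrix --- is a further $O(dn)$ operation, again dominated by the iterative step. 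Combining these terms yields the claimed $O(t \cdot p \cdot d \cdot \cT)$ bound.

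The only conceptually nontrivial point, and the one I expect to be the main obstacle to a careless implementation, is the order of operations: $\Phi$ must be applied first and the resulting $d \times n$ slice propagated forward, rather than first computing the powers $M^i$ and projecting at the end. A naive implementation that forms $M^i$ or $P_{\leq p}(M)$ explicitly would pay $\Omega(n^2)$ per power and destroy the near-linear running time. Once this design choice is made, the proof reduces to a straightforward accounting of matrix-vector products.
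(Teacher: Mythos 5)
Your proof is correct and is the standard argument underlying the cited result: thread the projection $\Phi$ through the truncated Taylor series by iterating $B_{i+1} = B_i M$, evaluating each of the $d$ rows via a matrix-vector product with $M$ (which in turn decomposes into $t$ products with the individual $\super{Y}{t'}$'s), and accumulate the weighted sum on the fly. The paper cites \cite{DBLP:conf-soda-Steurer10} without reproducing a proof, and your reconstruction matches the intended reasoning, including the correct observation that one must never materialize $M^i$ or $P_{\leq p}(M)$ explicitly and that the accumulation and normalization steps are dominated by the iterative matrix-vector work since $\cT = \Omega(n)$.
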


A priori it is not clear whether \cref{algorithm:approx-mmw} can provide the same guarantees of \cref{algorithm:mmw}. However, the next result show this is the case under reasonable circumstances. 

\begin{definition}[$d$-robust oracle, extension of  \cite{DBLP:conf-soda-Steurer10}]\label{definition:robust-oracle}
	We say that a $\zeta$-bounded $\gamma$-separation oracle is $d$-robust if for every matrix $X\in  \Delta(r)$ with $X = \transpose{W}W$ 
	\begin{align*}
		\bbP_{\Phi\sim N(0,1/d)^{d\times n}}&\Paren{\text{\oracle outputs \textbf{no} on input $\transpose{(\Phi W)}\Phi W$ and $\iprod{\super{Y}{t}, X}\geq -\frac{3}{4}\gamma\alpha$}}\\&\leq \frac{(\gamma\alpha /\zeta r)^2}{(\log n)^{10}}\,.
	\end{align*}
\end{definition}

\begin{lemma}[\cite{DBLP:conf-soda-Steurer10}]\label{lemma:robust-oracle}
	Consider \cref{algorithm:approx-mmw}. Let $\eps\leq \gamma\alpha/(2\zeta\cdot r)$, $T\geq 2\eps^{-2} \log n$ and $p\geq 10\eps^{-1}\log n$. Suppose we have a $d$-robust $\zeta$-bounded $\gamma$-separation \oracle.
	If there exists a feasible solution with value at most $\alpha(1+2\gamma)$, then \oracle will output \textbf{yes} within $T$ iterations with probability at least $1-O(\log n)^{-10}$.
\end{lemma}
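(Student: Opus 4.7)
The plan is to mimic the standard matrix multiplicative weights regret analysis of \cref{theorem:result-mmw}, but carried out on the \emph{ideal} iterates $\super{X}{t}\propto \exp(\eps\sum_{t'<t}\super{Y}{t'})$ that would be produced by \cref{algorithm:mmw} with exact matrix exponentiation and no JL projection, and then to use the $d$-robustness of \oracle to translate its \textbf{no}-answers on the approximate iterates $\super{W}{t}$ actually computed by \cref{algorithm:approx-mmw} into honest separations against the ideal $\super{X}{t}$. Assuming toward contradiction that \oracle says \textbf{no} at every one of the $T$ iterations, I would then apply the MMW regret bound for the normalized certificates $\tilde Y^{(t)}=(\super{Y}{t}+\zeta\Id)/(2\zeta)\in[0,\Id]$ against any feasible $X^{*}\in\Delta_n(r)$ with $\iprod{L,X^{*}}\le\alpha(1+2\gamma)$, and show that the resulting inequalities are numerically inconsistent under the parameter settings $\eps\le\gamma\alpha/(2\zeta r)$ and $T\ge 2\eps^{-2}\log n$.

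The technical bridge between $\super{W}{t}$ and the ideal $\super{X}{t}$ consists of two standard approximation steps. Writing $\super{M}{t}=\tfrac{\eps}{2}\sum_{t'\le t}\super{Y}{t'}$, \cref{algorithm:approx-mmw} substitutes $P_{\le p}(\super{M}{t})$ for $\exp(\super{M}{t})$. Using $\tilde Y^{(t)}\sle\Id$ and $T\le O(\eps^{-2}\log n)$, the operator norm $\normop{\super{M}{t}}$ is at most $\eps T/2\le \eps^{-1}\log n$, so the factorial-decay estimate $\normop{\exp(\super{M}{t})-P_{\le p}(\super{M}{t})}\le \normop{\super{M}{t}}^{p+1}/(p+1)!$ combined with $p\ge 10\eps^{-1}\log n$ forces the truncation error to be $n^{-\omega(1)}$. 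Combined with \cref{lemma:jl}, this implies that the rows of $\super{W}{t}$ form a $d$-dimensional JL embedding of the Gram square root $W^{(t)}_{\star}$ of $\super{X}{t}$, up to negligible perturbation.

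With this identification in hand I would invoke \cref{definition:robust-oracle} for $X=\super{X}{t}$ and $W=W^{(t)}_{\star}$. The robustness property bounds, by $(\gamma\alpha/(\zeta r))^2/(\log n)^{10}$, the probability (over the fresh projection $\Phi$) of the event that \oracle outputs \textbf{no} on $\transpose{(\Phi W^{(t)}_{\star})}(\Phi W^{(t)}_{\star})$ while simultaneously the returned certificate fails to separate the true $\super{X}{t}$ with margin $\tfrac{3}{4}\gamma\alpha$. A union bound over the $T=O((\zeta r/\gamma\alpha)^2\log n)$ iterations costs only $O(\log n)^{-9}$ in total failure probability. Hence, with the claimed success probability, we may assume that the certificate at every \textbf{no}-iteration separates $\super{X}{t}$ with margin at least $\tfrac{3}{4}\gamma\alpha$. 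Plugging this into the multiplicative MMW regret inequality $\sum_{t}\iprod{\tilde Y^{(t)},\super{X}{t}}\ge (1-\eps)\sum_{t}\iprod{\tilde Y^{(t)},X^{*}}-r\log n/\eps$, and using that the certificate property gives $\iprod{\super{Y}{t},X^{*}}\ge -2\gamma\alpha$ whenever $X^{*}$ is feasible with value $\le\alpha(1+2\gamma)$, yields a numerical contradiction in the stipulated parameter regime.

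The main obstacle is verifying that the two approximation errors (truncation of the exponential and the random JL projection) can both be absorbed within the $\tfrac{1}{4}\gamma\alpha$ slack built into \cref{definition:robust-oracle}: robustness only guarantees separation $-\tfrac{3}{4}\gamma\alpha$ rather than the ideal $-\gamma\alpha$, and the looser $\alpha(1+2\gamma)$ factor in the hypothesis compared to the $\alpha(1+\gamma)$ of the exact \cref{theorem:result-mmw} is exactly what this loss buys us. Establishing rigorously that the approximate $\super{W}{t}$ can be substituted for the exact JL projection of $W^{(t)}_{\star}$ without violating the robustness margin requires the explicit operator-norm estimates above, together with careful bookkeeping of how \oracle interacts with its approximate input.
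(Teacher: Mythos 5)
The paper does not actually supply a proof of \cref{lemma:robust-oracle}; it is stated with a citation to \cite{DBLP:conf-soda-Steurer10} and no argument is given, so there is no ``paper's proof'' to compare against. That said, your reconstruction follows the route that the cited work and \cref{theorem:result-mmw} make natural: run the MMW regret bookkeeping on the ideal iterates $\super{X}{t}\propto\exp(\eps\sum_{t'<t}\super{Y}{t'})$, use \cref{definition:robust-oracle} to convert a \textbf{no}-answer on the approximate input $\super{W}{t}$ into a genuine separation of $\super{X}{t}$ with margin $\tfrac34\gamma\alpha$ (union-bounding over the $T$ iterations), and close by observing that the resulting per-step inequalities are inconsistent with the existence of a feasible $X^*$ of value $\le\alpha(1+2\gamma)$ under $\eps\le\gamma\alpha/(2\zeta r)$ and $T\ge 2\eps^{-2}\log n$. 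This is the intended decomposition, and the slack accounting you highlight ($-\tfrac34\gamma\alpha$ versus $-\gamma\alpha$ on the candidate side, $\alpha(1+2\gamma)$ versus $\alpha(1+\gamma)$ on the feasible side) is exactly what buys room for the two approximation errors.

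Two points that deserve more care if the argument were to be written out fully. First, \cref{definition:robust-oracle} quantifies over exact Gram square roots $W$ with $X=\transpose{W}W$ and a fresh Gaussian projection $\Phi$, whereas the algorithm feeds the oracle $\Phi P_{\le p}(\cdot)$, i.e.\ a JL projection of a \emph{truncated} square root. Your operator-norm bound $\normop{\exp(M)-P_{\le p}(M)}\le n^{-\omega(1)}$ (for $\normop{M}\le\eps^{-1}\log n$ and $p\ge 10\eps^{-1}\log n$) is the right estimate, but you still need to argue that the oracle's output on the two inputs agrees, or alternatively apply the robustness definition with $X$ taken to be the (slightly perturbed) Gram matrix of $P_{\le p}(\cdot)$ and then transfer the margin to the ideal $\super{X}{t}$ at a cost absorbed by the same slack. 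Either is routine but should be spelled out, since the definition as stated does not directly cover the truncation step. Second, the arithmetic in the union bound gives per-iteration failure $(\gamma\alpha/\zeta r)^2/(\log n)^{10}\approx 4\eps^2/(\log n)^{10}$ and $T\approx 2\eps^{-2}\log n$, yielding total failure $O((\log n)^{-9})$ rather than the $O((\log n)^{-10})$ stated in the lemma; this is a harmless off-by-one in the exponent (the paper uses $(\log n)^{10}$ loosely in several places, cf.\ \cref{corollary:running-time-approx-mmw}), but it is worth flagging so that the constants in \cref{definition:robust-oracle} are chosen consistently with the claimed bound.
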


We can combine \cref{lemma:robust-oracle}, \cref{lemma:fast-computation-approx-mmw} and \cref{lemma:fast-matrix-vector-multiplication} to obtain a user-friendly statement concerning the running time of \cref{algorithm:approx-mmw}.
We introduce two additional definitions.

\begin{definition}[$\cT$-lean oracle]\label{definition:lean-oracle}
	We say that a $\zeta$-bounded $\gamma$-separation $d$-robust is $\cT$-lean if:
	\begin{itemize}
		\item the oracle compute its outputs in time at most $O(\cT)$.
		\item If the oracle outputs $\textbf{no}$, the matrix-vector multiplication between an arbitrary vector and the feedback matrix $\Paren{\sum_{j\in [m]}A_j y_j-F+\zeta \Id_n}/2\zeta$ can be computed in time $O(\cT)$.
	\end{itemize}
\end{definition}

We remark that one can upper bound the time needed for matrix-vector multiplication by the number of non-zero entries in the matrix of interest.

\begin{fact}\label{lemma:fast-matrix-vector-multiplication}
	Let $M\in \R^{n\times n}$ be a matrix with $m$ non-zero entries and let $v\in \R^{n}$. There exists an algorithm that computes $Mv$ in time $O(m+n)$.
\end{fact}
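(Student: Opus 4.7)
The plan is to exploit a standard sparse representation of $M$ (for instance, a list of triples $(i,j,M_{ij})$ for each non-zero entry, or equivalently compressed-sparse-row format). Under any such representation, we can enumerate the $m$ non-zero entries of $M$ in time $O(m)$.

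First, I would allocate and initialize the output vector $y \in \R^n$ to the all-zeros vector; this takes $O(n)$ time. Next, I would iterate once over the list of non-zero entries: for each triple $(i,j,M_{ij})$, perform the update $y_i \leftarrow y_i + M_{ij}\cdot v_j$. Each such update requires a single multiplication, a single addition, and $O(1)$ memory accesses, contributing $O(1)$ work per non-zero entry. Summing over all non-zero entries yields $O(m)$ work for this loop. After processing every non-zero entry, $y_i = \sum_{j : M_{ij}\neq 0} M_{ij}\,v_j = (Mv)_i$ for every $i \in [n]$, so $y = Mv$ as required.

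Combining the two phases gives a total running time of $O(m+n)$, matching the claim. The only non-trivial point, which I would address with a brief remark, is the implicit assumption on the input representation: if $M$ is given as a dense $n\times n$ array, merely scanning it already costs $\Theta(n^2)$, so the statement should be understood as holding whenever $M$ is provided in any reasonable sparse format (list of non-zeros, CSR, adjacency lists for the underlying bipartite structure, etc.), which is indeed the setting in which the lemma is invoked elsewhere in the paper (e.g.\ when multiplying by Laplacians $L_G$ having $O(\card{E(G)})$ non-zeros). There is no real obstacle here; the lemma is essentially definitional once one fixes the input model.
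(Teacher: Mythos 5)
Your proof is correct and is the standard argument for sparse matrix--vector multiplication; the paper itself states this as a \emph{Fact} without providing any proof, so there is nothing to compare against. Your remark about the input representation (sparse format vs.\ dense array) is a worthwhile clarification that the paper leaves implicit.
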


The next definition formalizes the idea of oracles that may find a separating hyperplane only with certain probability. 

\begin{definition}[$q$-reliable]\label{definition:q-reliable}
	We say that a $\zeta$-bounded, $\gamma$-separation, $d$-robust, $\cT$-lean oracle is $q$-reliable if the probability (over random bits) that it outputs \textbf{no}  for any feasible solution with objective value at most $(1+2\gamma)\alpha$ is at most $1-q$.
\end{definition}

For oracles that are $1$-reliable we omit mentioning their reliability.
We are ready to present a user-friendly running time statement, which we will use as a black box. 

\begin{corollary}[Running time of \cref{algorithm:approx-mmw}]\label{corollary:running-time-approx-mmw}
	Let $\oracle$ be a $\zeta$-bounded, $\gamma$-separation, $d$-robust, $\cT$-lean  $q$-reliable oracle.  Then, for $\eps\leq \gamma\alpha/(2\zeta\cdot r)$, $T\geq 2\eps^{-2} \log n$ and $p\geq 10\eps^{-1}\log n$, with probability at least $1-O(\log n)^{-10}-(1-q)T$ over random bits, \cref{algorithm:approx-mmw} terminates in time $O\Paren{T^2\cdot \cT\cdot d\cdot p}\,.$
	\begin{proof}
		The Corollary follows immediately from \cref{lemma:fast-computation-approx-mmw}, \cref{lemma:robust-oracle}, \cref{definition:lean-oracle} and \cref{definition:q-reliable}.
	\end{proof}
\end{corollary}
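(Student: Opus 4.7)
The plan is to assemble the stated bound directly from the four ingredients listed just above: the matrix exponential approximation (\cref{lemma:fast-computation-approx-mmw}), the robust convergence guarantee (\cref{lemma:robust-oracle}), leanness (\cref{definition:lean-oracle}), and reliability (\cref{definition:q-reliable}). There is no genuinely new argument to make; the corollary is just a bookkeeping of constants and failure probabilities.

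First I would control the number of iterations. By \cref{lemma:robust-oracle}, under the stipulated choices of $\eps$, $T$, and $p$, the $d$-robust oracle inside \cref{algorithm:approx-mmw} will encounter a \textbf{yes} certificate within $T$ iterations with probability at least $1 - O(\log n)^{-10}$, provided that at each of those iterations the oracle does not spuriously output \textbf{no} on what is in fact a sufficiently feasible candidate. The latter event, for a single iteration, fails with probability at most $1-q$ by $q$-reliability, so a union bound over the (at most) $T$ iterations gives a total failure probability of at most $O(\log n)^{-10} + (1-q)T$, which matches the bound claimed in the statement.

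Next I would account for the per-iteration cost. At iteration $t$ the algorithm does two things: it runs \oracle on the current candidate $\super{W}{t}$, which takes $O(\cT)$ time by leanness; and, if the answer is \textbf{no}, it forms the new iterate $\super{W}{t+1}$ using $P_{\le p}(\tfrac{\eps}{2}\sum_{t'\le t}\super{Y}{t'})$ multiplied on the left by a fresh $d\times n$ Gaussian $\Phi$. Leanness guarantees that every matrix--vector product $\super{Y}{t'} v$ can be performed in $O(\cT)$ time (the feedback matrix is exactly what leanness allows us to multiply quickly, and scaling by $2\zeta$ and translating by $\zeta\Id_n$ preserve this). \cref{lemma:fast-computation-approx-mmw} then gives a cost of $O(t \cdot p \cdot d \cdot \cT)$ to compute $\super{W}{t+1}$, since the degree-$p$ truncated exponential applied to the JL sketch requires $p$ rounds, each performing $d$ matrix--vector multiplications against the current $\sum_{t'\le t}\super{Y}{t'}$, each of the latter in turn being a sum of $t$ multiplications of cost $O(\cT)$.

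Summing the per-iteration costs over $t=1,\ldots,T$ yields
\begin{equation*}
    \sum_{t=1}^{T} O\!\Paren{t \cdot p \cdot d \cdot \cT} \;=\; O\!\Paren{T^{2}\cdot \cT\cdot d\cdot p}\mcom
\end{equation*}
which is the claimed bound. The only step that deserves any caution is making sure that the ``matrix--vector multiplication with $\super{Y}{t}$'' hypothesis of \cref{lemma:fast-computation-approx-mmw} is actually supplied by leanness in the required form; this is the content of the second bullet in \cref{definition:lean-oracle}, so no additional work is needed. Everything else is a direct substitution, and I do not anticipate any non-trivial obstacle beyond the already-proved robustness guarantee of \cref{lemma:robust-oracle}.
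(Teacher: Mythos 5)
Your proposal is correct and follows exactly the route the paper intends: the paper's proof is a one-line citation of \cref{lemma:fast-computation-approx-mmw}, \cref{lemma:robust-oracle}, \cref{definition:lean-oracle}, and \cref{definition:q-reliable}, and your write-up simply unpacks that citation into the iteration-count bound, the union bound over reliability failures, the per-iteration oracle cost from leanness, and the $\sum_{t=1}^{T} O(t\cdot p\cdot d\cdot\cT) = O(T^{2}\cdot\cT\cdot d\cdot p)$ arithmetic. Nothing is missing and no extra ideas are introduced.
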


\section{The heavy vertices removal oracle}\label{section:oracle}

We prove here \cref{lemma:oracle-heavy-vertices}.
In \cref{section:heavy-vertices-removal} we introduce a procedure that the oracle uses to find either the partition or a separating hyperplane. Then in \cref{section:bounded-oracle-heavy-vertices} we prove the Lemma.
Throughout the section we consider the following parameters range:
\begin{align}\label{eq:parameters}
	n, \alpha> 0\,, \Omega(1)\leq a\leq 1\,, 0< \ell \leq 1\,,  \Omega(1\log n)\leq \delta \leq 1/200\,.
\end{align}

\subsection{The fast heavy vertices removal procedure}\label{section:heavy-vertices-removal}

We introduce  the main procedure used by  \oracle.
The central tool of the section is the following statement. 

\begin{lemma}\label{lemma:result-heavy-vertices-removal-procedure}
	Consider the parameter settings of \cref{eq:parameters}.
	Let $G$ be a graph on $\ell \cdot n$ vertices with $a$-balanced cut of value at most $\alpha$ that is geometrically expanding up to scale $(\delta , n, \alpha)$. 
	
	Let $X$ be a feasible solution for \cref{eq:balanced-cut-sdp}  on input $G$, with objective value $O(\alpha)$.
	There exists a randomized procedure (\cref{algorithm:heavy-vertex-removal-procedure}) that outputs a set of edges $E^*\subseteq E(G)$ of cardinality $O(\alpha/\delta)$ and  a partition $(P_1, P_2, V')$ of $V(G)$ satisfying $(2)\,, (3)\,, (4)$ in \cref{lemma:oracle-heavy-vertices} and such that
	\begin{align*}
		\E \Brac{\Card{E(P_1, P_2, V')\setminus E^*}}\leq C\cdot\frac{\alpha}{\delta} \Paren{1+\frac{\ell}{\delta }}\,,
	\end{align*}
	where $C>0$ is a universal constant.	
	Moreover, if the solution is given in the form of $v_1,\ldots, v_n\in \R^{O(\polylog n)}$,  the procedure runs in time $\tilde{O}( \Card{V(G)}+\Card{E(G)})$.
\end{lemma}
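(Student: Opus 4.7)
The plan is to combine an approximate heavy-vertex detection subroutine with a randomized ball-carving procedure in the style of ARV. Because the SDP solution is represented as vectors $v_1,\dots,v_n \in \R^{O(\polylog n)}$, any pairwise squared distance is computable in polylogarithmic time. I first identify an inflated set of heavy vertices via subsampling: for each candidate $i$, sample a uniform set $S \subseteq V$ of size $\tilde{O}(1/\delta^2)$ and mark $i$ as heavy when $\Card{\Set{j \in S : \Snorm{v_i-v_j} \leq \sqrt{2}\delta}}$ exceeds a suitable threshold. A Chernoff bound guarantees with high probability that every genuine $(\delta,n)$-heavy vertex is marked, while no vertex whose $\sqrt{2}\delta$-ball has size much smaller than $\delta^2 n$ is; the total time is $\tilde{O}(n)$.

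\textbf{Carving and partition.} Processing the detected heavy vertices in order, I carve a ball $B_i = \Set{j : \Snorm{v_i-v_j} \leq r_i}$ around each remaining heavy vertex with radius $r_i$ drawn uniformly from $[\delta/2,\delta]$, skipping vertices already placed in a previous ball. Each ball is assigned greedily to $P_1$ or $P_2$ by placing the next ball on the currently smaller side, and $V'$ collects the remaining vertices. I define $E^*$ as the set of long edges, $\Set{(u,v) \in E : \Snorm{v_u-v_v} > \delta}$. Since the objective on $X$ is $O(\alpha)$, Markov yields $\Card{E^*} = O(\alpha/\delta)$. This immediately gives condition (3), because every surviving edge has squared length at most $\delta$; and condition (4) follows because every originally $(\delta,n)$-heavy vertex has been removed into $P_1 \cup P_2$, and restricting to $V' \subseteq V$ only decreases the count of close neighbors.

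\textbf{Cut-edge bound.} The bound on $\E\Brac{\Card{E(P_1,P_2,V') \setminus E^*}}$ uses a standard random-radius computation. For each short edge $(u,v)$ and each ball center $i$, the triangle inequality implies that the set of radii $r_i \in [\delta/2,\delta]$ for which $B_i$ separates $u$ from $v$ has Lebesgue measure $O(\Snorm{v_u-v_v})$; dividing by $\delta/2$ and summing over edges gives, via linearity of expectation, an expected contribution of order $\frac{1}{\delta}\sum_{(u,v) \in E}\Snorm{v_u-v_v} = O(\alpha/\delta)$ from edges connecting $P_1 \cup P_2$ to $V'$. For edges whose endpoints lie in two different balls, I use that at most $O(\ell n/\delta^2)$ balls are carved (each has volume $\Omega(\delta^2 n)$ out of the $\ell n$ vertices) combined with an analogous random-radius argument on each ball, producing the extra $O(\alpha\ell/\delta^2)$ term and thus the claimed $O\!\left(\tfrac{\alpha}{\delta}\bigl(1+\tfrac{\ell}{\delta}\bigr)\right)$.

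\textbf{Main obstacle.} The trickiest point is maintaining the balance condition (2) jointly with the cut-edge budget: a single carved ball might already contain more than $an/2$ vertices, so the greedy assignment cannot satisfy $\Abs{\Card{P_1}-\Card{P_2}} \leq an/2$ directly. My plan is to split any oversized ball by a secondary random-radius cut inside it; because every intra-ball edge has squared length $O(\delta)$, the same Lebesgue-measure calculation absorbs the extra cost into the overall $O(\alpha/\delta\cdot(1+\ell/\delta))$ budget. The running time then falls into place: sampling-based heavy-vertex detection is $\tilde{O}(n)$, and since the carved balls are disjoint, assignment and edge accounting touch each vertex and each edge only $O(\polylog n)$ times using the low-dimensional embedding, totalling $\tilde{O}(\Card{V(G)}+\Card{E(G)})$.
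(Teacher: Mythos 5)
Your overall plan—subsample to approximately detect heavy vertices, then do randomized ball-carving, with $E^*$ the long edges—matches the paper's high-level strategy, but your carving step diverges in a way that breaks the key probability bound.

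The paper's Algorithm~4 proceeds in \emph{rounds}: in each round it extracts a maximal set $U$ of heavy vertices that are pairwise at squared distance $\geq 10C\delta$, carves balls of random squared radius $2r\delta$ ($r\sim[1,2]$, shared across the round) around every center in $U$ simultaneously, removes those balls, and then \emph{recomputes} heavy vertices on the residual graph; a separate subroutine (Algorithm~5, based on connected components of a heavy-vertex contact graph) handles the terminal phase once fewer than $a/(C\delta)$ well-separated centers remain. The well-separation is the crux of the per-edge bound: since any two ball centers in a round are at squared distance $\geq 10C\delta > 2000\delta$ and balls have squared radius $\leq 4\delta$, a short edge ($\snorm{v_i-v_j}\leq\delta$) can lie in the range of at most one center per round, so the per-round cut probability is exactly the single random-radius quantity $\leq\snorm{v_i-v_j}/(2\delta)$ (Lemma~5.5). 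Multiplying by the $O(\ell/\delta)$ rounds (each round strips at least $(a/(C\delta))\cdot 10\delta^2 n$ vertices) plus the one terminal pass gives the stated $O\bigl(\tfrac{\snorm{v_i-v_j}}{\delta}(1+\tfrac{\ell}{\delta})\bigr)$ per edge.

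Your version instead processes heavy centers one at a time with independent radii $r_i\sim[\delta/2,\delta]$, skipping already-placed vertices, and with no separation requirement on the centers. This loses the ``one center per edge'' property: an edge $(u,v)$ can lie near many heavy centers at squared distance close to $\delta$ from both endpoints, and the chance of eventually separating $u$ from $v$ can be far larger than $O(\snorm{v_u-v_v}/\delta)$. Concretely, if $\snorm{v_{h_k}-v_u}=\delta-\epsilon$ and $\snorm{v_{h_k}-v_v}=\delta-\epsilon+s$ for many unplaced $h_k$ (consistent with pairwise center separation $>\delta/2$), the conditional probability that a given $h_k$ separates the pair given that it hits either is $\Theta(\min(s,\epsilon)/\epsilon)$, and the sequential accumulation $q/p$ can be $\Theta(1)$ when $\epsilon\ll\delta$ even though $s/\delta$ is tiny. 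Your sentence ``summing over edges'' quietly assumes each edge is relevant to only one carved ball, which is exactly the property the paper's well-separated round structure enforces and your one-at-a-time pass does not. The same issue afflicts your second term: you claim $O(\ell/\delta^2)$ carved balls (you wrote $O(\ell n/\delta^2)$, a typo) times a per-ball random-radius argument, but a naive union bound over balls gives $O(\ell/\delta^2)\cdot O(\snorm{v_u-v_v}/\delta)=O(\ell\snorm{v_u-v_v}/\delta^3)$, an extra $1/\delta$ over the target; you need something like ``cut probability per \emph{round}'' rather than ``per \emph{ball},'' and you have no notion of round.

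Finally, your patch for condition (2)—secondary random cuts inside oversized balls—would need its own edge-cut accounting and its own recursion on the balance guarantee, and it is asserted rather than argued; the paper instead appeals (briefly) to the SDP spreadness constraint. To repair your proof you essentially have to reintroduce the paper's structure: form rounds of pairwise-separated centers with a shared random radius, re-detect heavy vertices between rounds, bound the number of rounds by $O(\ell/\delta)$, and add a terminal connected-components pass once no large separated set exists.
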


The first building block towards a proof of \cref{lemma:result-heavy-vertices-removal-procedure} is the result below, which introduces a subroutine to identify heavy vertices.

\begin{lemma}\label{lemma:procedure-find-heavy-vertices}
  Consider the settings of \cref{lemma:result-heavy-vertices-removal-procedure}. Let $\rho \ge 2$.
  There exists a randomized procedure that
  outputs with probability at least $1-1/n$ a set of vertices $V^*$ and a mapping $f:V\rightarrow V^*\cup \Set{(*)}$ such that
  \begin{enumerate}
  \item Each vertex $i$ of $V^*$ satisfies $\Card{\Set{j\in V \suchthat \Snorm{v_i-v_j}\leq \rho\delta}}\geq 10\delta^2n$; and
  \item The set $W := \{i \mid f(i) = (*) \}$ does not contain a vertex $i$ such that
    $\Card{\Set{j\in V \suchthat \Snorm{v_i-v_j}\leq \delta}}\geq 10\delta^2n$.   
  \item $f(i)= j$ if there exists some $j\in V^*$ with $\Snorm{v_i-v_j}\leq \rho\delta\,,$
  \item $f(i)=(*)$ otherwise.
  \end{enumerate}  
  Moreover, if the solution is given in the form of $v_1,\ldots, v_n\in \R^{O(\polylog n)}$,
  the procedure runs in time $O\Paren{\frac{1}{\delta^2} \cdot \Card{V(G)} \polylog n}$. 
\end{lemma}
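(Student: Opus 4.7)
The plan is to solve this by a single round of uniform random sampling. I would first draw a set $S\subseteq V$ of size $m=\Theta(\delta^{-2}\log n)$ uniformly at random. For each $s\in S$ I then compute $N_s := \Card{\Set{j\in V : \Snorm{v_s-v_j}\leq \rho\delta}}$ exactly; this is cheap because the embedding is stored in $\R^{O(\polylog n)}$, so each squared distance costs only polylogarithmic time and the full pass costs $\tilde O(n)$ per sample point. Setting $V^* := \Set{s\in S : N_s \geq 10\delta^2 n}$ immediately enforces condition~$(1)$. For the mapping, for every $i\in V$ I scan through $V^*$ and set $f(i)=s$ for the first $s\in V^*$ with $\Snorm{v_i-v_s}\leq \rho\delta$, and $f(i)=(*)$ if no such $s$ exists. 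Conditions~$(3)$ and~$(4)$ are then automatic.

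The crux is verifying condition~$(2)$. Fix a vertex $i$ with $\Card{\Set{j : \Snorm{v_i-v_j}\leq \delta}}\geq 10\delta^2 n$ and let $B_i$ denote this close-neighbour set. Since $\card{B_i}/n \geq 10\delta^2$, the probability that $S$ misses $B_i$ is at most $(1-10\delta^2)^m \leq \exp(-10\delta^2 m)$, which, for a suitable constant in $m$, is $\leq n^{-10}$. A union bound over the at most $n$ potentially heavy vertices then shows that, with probability at least $1-1/n$, every such $i$ admits some $s^\ast \in S\cap B_i$. For any $j\in B_i$, the Euclidean triangle inequality gives $\norm{v_{s^\ast}-v_j}\leq \norm{v_{s^\ast}-v_i}+\norm{v_i-v_j}\leq 2\sqrt{\delta}$, hence $\Snorm{v_{s^\ast}-v_j}\leq 4\delta\leq \rho\delta$ once $\rho\geq 4$, so $N_{s^\ast}\geq \card{B_i}\geq 10\delta^2 n$ and $s^\ast\in V^*$. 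Since moreover $\Snorm{v_i-v_{s^\ast}}\leq \delta\leq \rho\delta$, the scan in the previous paragraph will find some element of $V^*$ within radius $\rho\delta$ of $i$ and assign $f(i)\in V^*$ rather than $(*)$, which is exactly condition~$(2)$.

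For the running time, step~2 costs $|S|\cdot n\cdot O(\polylog n) = O(n\delta^{-2}\polylog n)$ and step~4 costs $n\cdot |V^*|\cdot O(\polylog n)\leq O(n\delta^{-2}\polylog n)$, matching the claim. The one cosmetic gap is that the Euclidean triangle inequality above gives the argument only for $\rho\geq 4$, while the statement allows the more aggressive $\rho\geq 2$; one closes this gap either by invoking the squared-distance triangle inequality $\Snorm{v_a-v_b}+\Snorm{v_b-v_c}\geq \Snorm{v_a-v_c}$ (approximately) enforced on the candidate solution, or by readjusting the universal constants in the heaviness thresholds. I expect this constant-factor bookkeeping to be the most delicate part of the write-up but not a conceptual obstacle; the whole proof is essentially a Chernoff/union-bound argument on top of a single pass through $V\times S$.
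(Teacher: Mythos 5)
Your proposal is correct and essentially the same as the paper's proof: sample $\Theta(\delta^{-2}\log n)$ points, retain those whose $\rho\delta$-ball contains at least $10\delta^2 n$ vertices, map each vertex to a nearby retained sample or to $(*)$, and finish with a hitting-probability union bound. The $\rho\geq 2$ threshold in the paper relies on the squared-distance triangle inequality $\Snorm{v_a-v_b}+\Snorm{v_b-v_c}\geq\Snorm{v_a-v_c}$ enforced by the SDP (the embedding in \cref{lemma:result-heavy-vertices-removal-procedure} is a feasible solution), which is precisely the route you identify for closing the $\rho\geq4$ gap.
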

\begin{proof}
  We propose and analyze the following algorithm:
  \begin{enumerate}
  \item $S \gets$  Sample $100\delta^{-2} \log n$ points uniformly at random.
  \item For each point $i \in S$, compute $N(v)= \Set{j\in W \suchthat \Snorm{v_i-v_j}\leq \rho\delta}$.
  \item $V^* \gets S \setminus \Set{i \in S \suchthat \Card{N(v)} < 10\delta^2 n}$
  \item For each vertex $i$, if there exists $j \in V^*$ such that $\Snorm{v_i-v_j}\leq \rho\delta$ then
    $f(i) = j$ otherwise $f(i) = (*)$.
  \end{enumerate}
  Clearly the above procedure runs in time $O(\Card{V(G)} \cdot \Card{S}\cdot \polylog n)$ as desired, where the
  bulk of the work is done in the second and fourth steps.
  By the definition of the procedure, the first, third and fourth bullets of the theorem statement are satisfied. We thus need to
  show that the set $W := \{i \mid f(i) = (*) \}$ does not contain a vertex $i$ such that
    $\Card{\Set{j\in W \suchthat \Snorm{v_i-v_j}\leq \delta}}\geq 10\delta^2n$.
  
  A simple coupon collector argument implies that with probability at least $1-1/n^2$, for each
  vertex $i$, if $\Card{\Set{j\in V(G) \suchthat \Snorm{v_i-v_j}\leq \delta}}\geq 10\delta^2n$, then
  $\Set{j\in V(G) \suchthat \Snorm{v_i-v_j}\leq \delta} \cap S \neq \emptyset$. Thus, let $j^*$ be a vertex in
  $\Set{j\in V(G) \suchthat \Snorm{v_i-v_j}\leq \delta} \cap S \neq \emptyset$. Then, since $\rho \ge 2$, we have that
  $\Card{\Set{j\in V(G) \suchthat \Snorm{v_{j^*}-v_j}\leq \rho\delta}} \ge 10\delta^2 n$ and so $j^* \in V^*$ and
  $f(i) \neq (*)$ as desired. It remains to take a union bound over the probability of failure for each individual vertex,
  and we conclude that the overall failure probability is at most $1/n$.

\end{proof}  

We use the procedure in \cref{lemma:procedure-find-heavy-vertices} as a subroutine of the one presented next, which  for feasible embeddings finds a paritition satisfying $(2)\,, (3)\,, (4)$ in \cref{lemma:oracle-heavy-vertices} and $(1)$ in expectation.

\begin{algorithm}[ht]
   \caption{Fast heavy vertex removal procedure}
   \label{algorithm:heavy-vertex-removal-procedure}
\begin{algorithmic}
    \STATE {\bfseries Input:} A graph $G$ on $\ell\cdot n$ vertices, a candidate solution $X$ to
    \cref{eq:balanced-cut-sdp} on input $G$, parameters $a, \delta>0 \,, C >200\,.$
    \STATE
    \STATE Remove all edges of length at least $\delta$ in the embedding. Let $E^*$ be the set of such edges.   
    \STATE Find the set  $V^*$ via the subroutine in \cref{lemma:procedure-find-heavy-vertices} with $\rho=2$.
    \STATE Pick a maximal set $U$ of vertices in $V^*$ at pairwise squared distance at least $10\cdot C\delta$ in the embedding.
    \IF{$\Card{U}\geq \frac{a}{C\cdot \delta}\,:$}
    \STATE Pick $r\overset{u.a.r.}{\sim}[1, 2]$.
    \STATE For each $i\in U$, remove $i$ and all vertices at distance $\leq 2r\cdot \delta$ in the embedding. Let $U_i$ be the set of removed vertices via $i$. 
    \STATE {\bfseries Repeat} the algorithm on the remaining graph.
    \ELSE
    \STATE Run the subroutine \cref{algorithm:heavy-vertex-removal-last-iteration-subroutine} on the remaining graph and obtain additional sets $U_i$'s.
    \ENDIF
    \STATE  Distribute evenly the vertices in the $U_i$'s among two sets $P_1, P_2$ so that if $j,k \in U_i $ then $j,k $ are in the same set. Let $V' = V\setminus(P_1\cup P_2)\,.$
    \STATE {\bfseries Return} the partition $(P_1, P_2, V')\,.$
\end{algorithmic}
\end{algorithm}

\begin{fact}\label{fact:running-time-heavy-vertex-removal-procedure}
	\cref{algorithm:heavy-vertex-removal-procedure} runs in time $\tilde{O}\Paren{ \Card{V(G)}+ \Card{E(G)}}\,.$
	\begin{proof}
	Step 1 requires $O(E)$ time.
	The steps 2-4 can be repeated at most $C\cdot\ell/(a\cdot \delta)$ times. Indeed no vertex can be in both $U_i$ and $U_{i'}$ at the same time (even if $X$ does not satisfy the triangle inequality constraints) and since by definition each $U_i$ contains at least $10\delta^2 \cdot n$ vertices, in $ C\cdot \ell/(a \cdot \delta)$ iterations we will have removed all vertices form the graph.
	For each of these iterations, step 2 requires time $\tilde{O}\Paren{\frac{1}{\delta^2} \Card{V(G)}}$ and step 3 requires time $O(\Card{V(G)}\cdot \poly(1/a \delta)\,.$ Step 4 runs in time at most $O(\Card{V(G)}/a\delta)\,.$
	
	As we show in \cref{fact:running-time-heavy-vertex-removal-last-iteration-subroutine}, \cref{algorithm:heavy-vertex-removal-last-iteration-subroutine} also runs in time $\tilde{O}\Paren{\Card{V(G)}+\Card{E(G)}}$.
	Step 6 can be done in time $\tilde{O}(\Card{V(G)})$  after ordering the sets $U_i$'s.
	Thus the statement follows as $\delta\geq 1/\poly\log(n)\,, a\geq \Omega(1)\,.$
	\end{proof}
\end{fact}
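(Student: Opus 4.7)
The proof reduces to bounding (a) the number of executions of the outer loop (steps 2--4), (b) the cost of each step in one such iteration, and (c) the one-shot cost of step 1, the last-iteration subroutine (step 5), and the distribution step 6. Step 1 is a single scan over $E(G)$ testing $\Snorm{v_i-v_j}\geq \delta$ against the given polylogarithmic-dimensional embedding, hence $\tilde{O}(\Card{E(G)})$.

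For (a), the critical claim is that each non-terminating outer iteration peels off $\Omega(a\,\delta\, n)$ vertices. By Lemma~\ref{lemma:procedure-find-heavy-vertices} invoked with $\rho=2$, every $i\in U\subseteq V^*$ satisfies $\Card{\Set{j\suchthat \Snorm{v_i-v_j}\leq 2\delta}}\geq 10\delta^2 n$, so each $U_i$ extracted in step 4 has size at least $10\delta^2 n$ (since $2r\delta\geq 2\delta$). These sets are pairwise disjoint across $i\in U$: the centers in $U$ are selected at pairwise squared distance at least $10 C\delta$, and if some $v_j$ lay within squared distance $2r\delta\leq 4\delta$ of two centers $v_i,v_{i'}\in U$ then the ordinary Euclidean triangle inequality would force $\Snorm{v_i-v_{i'}}\leq 16\delta < 10C\delta$ for $C\geq 200$. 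Crucially, this uses only that the iterate is an honest Euclidean embedding (which is always the case for the output of~\cref{algorithm:approx-mmw}) and not the SDP triangle constraints, which may be violated. Since the algorithm recurses only when $\Card{U}\geq a/(C\delta)$, each such iteration removes at least $(a/(C\delta))\cdot 10\delta^2 n = \Omega(a\,\delta\, n)$ vertices, so the loop body executes at most $O(\ell/(a\delta))=\polylog(n)$ times before $V(G)$ is exhausted or control passes to step 5.

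For (b) and (c), within one iteration: step 2 costs $\tilde{O}(\Card{V(G)})$ by Lemma~\ref{lemma:procedure-find-heavy-vertices} since $1/\delta^2=\polylog(n)$; the subsample $V^*$ produced there has size only $\tilde{O}(1)$, so step 3's greedy maximal-set selection (pairwise comparisons in the polylogarithmic-dimensional embedding) is $\tilde{O}(1)$; and step 4 sweeps the $\Card{V(G)}$ remaining vertices against the $\Card{U}=O(1/(a\delta))$ centers, contributing $\tilde{O}(\Card{V(G)})$. Multiplying by the $\polylog(n)$ iteration bound from (a), the loop totals $\tilde{O}(\Card{V(G)})$. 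The one-shot invocation at step 5 costs $\tilde{O}(\Card{V(G)}+\Card{E(G)})$ by Fact~\ref{fact:running-time-heavy-vertex-removal-last-iteration-subroutine}, and step 6 — sorting the $\polylog(n)$ many $U_i$'s by size and assigning them greedily to $P_1,P_2$ to maintain balance — is $\tilde{O}(\Card{V(G)})$. Summing and using $a\geq\Omega(1)$, $\delta\geq 1/\polylog(n)$ yields the claimed bound. The only delicate point is the disjointness argument in (a): because the MMW iterate need not satisfy the triangle constraint of~\cref{eq:balanced-cut-sdp}, one must argue disjointness purely from the Euclidean geometry of the explicit $d$-dimensional embedding, which is precisely what the squared-distance lower bound on $U$ buys.
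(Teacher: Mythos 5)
Your proof is correct and follows essentially the same structure as the paper's: bound the number of outer iterations by the number of vertices removed per round (using that the $U_i$'s are disjoint and each has size $\geq 10\delta^2 n$), then bound per-step cost. You usefully spell out the disjointness claim — the paper only asserts parenthetically that it holds even without the SDP triangle constraints, whereas you give the explicit ordinary-Euclidean-triangle-inequality calculation showing $\Snorm{v_i-v_{i'}}\leq 16\delta < 10C\delta$. One minor imprecision: $\Card{U}$ is lower-bounded, not upper-bounded, by $a/(C\delta)$ when the algorithm recurses; the correct upper bound is $\Card{U}\leq\Card{V^*}=O(\delta^{-2}\log n)$, which is still $\tilde{O}(1)$ and so does not affect your conclusion.
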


The subroutine of step 5 in \cref{algorithm:heavy-vertex-removal-procedure} is presented below.
\begin{algorithm}[ht]
   \caption{Subroutine of fast heavy vertex removal procedure}
   \label{algorithm:heavy-vertex-removal-last-iteration-subroutine}
\begin{algorithmic}
    \STATE {\bfseries Input:} A graph $G$ on $\ell\cdot n$ vertices, a candidate solution $X$ to \cref{eq:balanced-cut-sdp} on input $G$, the list of vertices $V^*$, parameters $a, \delta>0 \,.$
    \STATE
    \STATE Consider the graph $G^*(V^*, \emptyset)$.
    \FOR {$ij \in E(G)$}
    \IF {$\Snorm{v_i-v_j}\leq 200\delta$ and $f(i)\neq (*)\,, f(j)\neq (*)$}
    \STATE Connect $f(i)$ to $f(j)$ in $G^*$ (excluding self-loops).
    \ENDIF
    \ENDFOR
    \STATE Pick $r\overset{u.a.r.}{\sim}[1, 2]$ and for each connected component $U$ in $G^*$, remove all vertices at distance $\leq 2r\cdot \delta$ to some vertex $i\in U$. Index the resulting sets by arbitrary representative vertices in each component.
    \STATE {\bfseries Return} the resulting sets $U_i$'s.
\end{algorithmic}
\end{algorithm}

\begin{fact}\label{fact:running-time-heavy-vertex-removal-last-iteration-subroutine}
	\cref{algorithm:heavy-vertex-removal-last-iteration-subroutine} runs in time $\tilde{O}\Paren{\Card{V(G)}+\Card{E(G)}}\,.$
	\begin{proof}
		We use the mapping of \cref{lemma:procedure-find-heavy-vertices}. We can then construct the graph $G^*$ in time $O(\Card{E(G)})$. Moreover, notice that $\Card{E(G^*)}\leq \Card{E(G)}\,.$ We can find the connected components in $G^*$ in time $O(\Card{E(G^*)}+\Card{V(G^*)})$ and partition the vertices in $G$ according to such connected components in time $\tilde{O}(\Card{V(G)})$. The result follows.
	\end{proof}
\end{fact}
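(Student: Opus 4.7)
The plan is to account for the cost of each line of \cref{algorithm:heavy-vertex-removal-last-iteration-subroutine}, exploiting two facts already available before the subroutine is invoked: the mapping $f:V(G)\to V^*\cup\{(*)\}$ from \cref{lemma:procedure-find-heavy-vertices} is stored and can be queried in $O(1)$, and both the embedding dimension and $|V^*|$ are polylogarithmic in $n$, because the regime $\delta\geq \Omega(1/\log n)$ from \cref{eq:parameters} forces $|V^*|\leq |S|=O(\delta^{-2}\log n)$ and, by hypothesis, $v_1,\ldots,v_n\in\R^{O(\polylog n)}$. In particular, every squared distance $\Snorm{v_i-v_j}$ is computable in $\tilde{O}(1)$ time.

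For the edge-insertion loop I would iterate over $E(G)$ once; for each edge $ij$ compute one distance and two lookups of $f$ in $\tilde{O}(1)$ time, and append $(f(i),f(j))$ to an adjacency list of $G^*$ whenever both endpoints lie in $V^*$ and the distance condition holds, discarding duplicates and self-loops on the fly via a hash table. This costs $\tilde{O}(|E(G)|)$ overall and guarantees $|E(G^*)|\leq |E(G)|$. A standard BFS/DFS on $G^*$ then labels every $j\in V^*$ with its component index $c(j)$ in $O(|V^*|+|E(G^*)|)\leq O(|V(G)|+|E(G)|)$ time.

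Finally, one must assign each $i\in V(G)$ to the component that absorbs it, that is, identify some $j\in V^*$ with $\Snorm{v_i-v_j}\leq 2r\delta$ and attach $i$ to $U_{c(j)}$. Because $|V^*|$ is polylogarithmic, the naive loop that, for each $i\in V(G)$, scans every $j\in V^*$ costs $\tilde{O}(|V(G)|)$; indexing the resulting groups by arbitrary representatives is one further pass over $V(G)$. Summing the three contributions gives the claimed $\tilde{O}(|V(G)|+|E(G)|)$ bound.

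The only subtle step is the last one: a strict reading of ``remove all vertices at distance $\leq 2r\delta$ from some vertex of $U$'' would invite an $O(|V(G)|\cdot|V^*|)$ sweep per component, which is only efficient because $|V^*|=\tilde{O}(1)$ in the present parameter regime. Note that one cannot simply reuse $f$ with $\rho=2$ to shortcut this scan, since a vertex with $f(i)=(*)$ may still sit within squared distance $2r\delta\leq 4\delta$ of some element of $V^*$; however the direct sweep against $V^*$ handles this at no asymptotic cost. Were $|V^*|$ ever allowed to grow super-polylogarithmically, one would instead invoke an approximate nearest-neighbor structure tailored to the polylogarithmic embedding dimension, but that refinement is unnecessary here.
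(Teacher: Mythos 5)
Your proposal is correct and follows the same three-stage structure as the paper's proof (build $G^*$, find connected components, partition $V(G)$), while usefully making explicit the reason the final step is cheap—namely $|V^*|\leq |S|=O(\delta^{-2}\log n)=\polylog(n)$ in the regime $\delta\geq\Omega(1/\log n)$, so the per-vertex sweep over $V^*$ costs only $\tilde O(1)$—a point the paper leaves implicit in its terse ``partition the vertices in $\tilde O(|V(G)|)$''. Your caveat that the mapping $f$ (computed with $\rho=2$) cannot be reused as a shortcut, since the removal threshold $2r\delta$ can reach $4\delta$, is also correct and worth noting.
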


Next we bound the probability that an edge gets cut in \cref{algorithm:heavy-vertex-removal-procedure}.

\begin{lemma}\label{lemma:heavy-vertex-removal-procedure-probability-cut}
	Consider the settings of \cref{lemma:result-heavy-vertices-removal-procedure}. 
	At each iteration of steps 2-4 in \cref{algorithm:heavy-vertex-removal-procedure} as well as the one using \cref{algorithm:heavy-vertex-removal-last-iteration-subroutine} the following holds:
	\begin{align*}
		\forall i \, \text{ s. t. }\, f_i\neq (*)\quad \bbP \Paren{\exists U_k\in U\text{ s. t. } i\in U_k\,, j \notin U_k}\leq \frac{\Snorm{v_i-v_j}}{ 2\cdot \delta}\,.
	\end{align*}
	\begin{proof}
		Consider first an iteration of steps 1-3 in \cref{algorithm:heavy-vertex-removal-procedure}.
		By construction each vertex $i$ can be in at most one set $U_k$.
		Since $r$  is chosen uniformly at random in the interval $[1, 2]$ the claim follows.
		So consider  \cref{algorithm:heavy-vertex-removal-last-iteration-subroutine}.
		Again, by construction each vertex $i$ can be in at most one set $U_k$ so by choice of $r$ the inequality holds.
	\end{proof}
\end{lemma}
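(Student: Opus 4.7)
The plan is to exploit the fact that in both procedures the ball radius $2r\delta$ is chosen uniformly in an interval of length $2\delta$, combined with the triangle inequality that any feasible SDP solution satisfies. First I would observe that, by construction of both \cref{algorithm:heavy-vertex-removal-procedure} and \cref{algorithm:heavy-vertex-removal-last-iteration-subroutine}, the sets $U_k$ are pairwise disjoint: once a vertex is absorbed into some $U_k$ it is removed from the graph and cannot be absorbed again, and in the subroutine the connected components of $G^*$ induce disjoint balls around their representatives. Hence the event $\{\exists U_k : i \in U_k,\ j \notin U_k\}$ occurs only when $i$ is assigned to a unique $U_{k^*}$ with some representative $i_{k^*}$, and $j$ fails to land in that same $U_{k^*}$.

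Conditioning on this case, membership $i \in U_{k^*}$ means $\Snorm{v_i - v_{i_{k^*}}} \leq 2r\delta$, while $j \notin U_{k^*}$ forces $\Snorm{v_j - v_{i_{k^*}}} > 2r\delta$ (the alternative in which $j$ was removed in an earlier iteration can only reduce the probability we are bounding). Thus $2r\delta$ must lie in the interval with endpoints $\Snorm{v_i - v_{i_{k^*}}}$ and $\Snorm{v_j - v_{i_{k^*}}}$. By the triangle inequality constraint in \cref{eq:balanced-cut-sdp}, which $X$ satisfies by assumption, the length of this interval is at most $\Snorm{v_i - v_j}$. Since $2r\delta$ is uniform on an interval of length $2\delta$, the probability it lands in any fixed subinterval of length $\Snorm{v_i - v_j}$ is at most $\Snorm{v_i-v_j}/(2\delta)$, yielding the claim.

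For \cref{algorithm:heavy-vertex-removal-last-iteration-subroutine} the argument is essentially the same: each $U_k$ is obtained by dilating a connected component of $G^*$ by the random radius $2r\delta$, and disjointness of the $U_k$'s plus the triangle inequality reduce the analysis to the same one-dimensional uniform-radius computation. The main obstacle, such as it is, is purely bookkeeping — identifying the correct reference point $i_{k^*}$ against which to apply the triangle inequality, and checking that a vertex removed in a previous outer iteration (rather than in the current one) only makes the desired inequality easier.
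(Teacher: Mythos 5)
Your proposal is correct and fills in the same underlying idea as the paper's (very terse) argument: disjointness of the $U_k$'s isolates a unique candidate center, the triangle-inequality constraint of the SDP bounds the length of the "bad" radius interval by $\Snorm{v_i-v_j}$, and uniformity of $2r\delta$ over an interval of length $2\delta$ yields the claimed bound. The only point worth stating more precisely — which you flag as bookkeeping — is that in the subroutine the reference point $i_{k^*}$ should be taken as the vertex of the connected component closest to $i$ (a deterministic choice), so that $i\in U_{k^*}$ is equivalent to $\Snorm{v_i-v_{i_{k^*}}}\leq 2r\delta$ and $j\notin U_{k^*}$ still implies $\Snorm{v_j-v_{i_{k^*}}}>2r\delta$; with that clarification the one-dimensional uniform-radius computation goes through exactly as you describe.
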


Now \cref{lemma:result-heavy-vertices-removal-procedure} follows as a simple corollary.

\begin{proof}[Proof of \cref{lemma:result-heavy-vertices-removal-procedure}]
	By \cref{fact:running-time-heavy-vertex-removal-procedure}, steps 1-3 in \cref{algorithm:heavy-vertex-removal-procedure} are repeated at most $C\cdot \ell/\delta$ times while \cref{algorithm:heavy-vertex-removal-last-iteration-subroutine} runs only once. By \cref{lemma:heavy-vertex-removal-procedure-probability-cut} we then have $\forall \ij \in E(G)\setminus E^*$
	\begin{align*}
		\bbP\Paren{\ij\in E(P_1, P_2, V')\setminus E^*} \leq C^* \frac{\Snorm{v_i-v_j}}{ \delta}\cdot \Paren{1+C\cdot \ell/ \delta}\,,
	\end{align*} 
	for some $C^*>0\,.$
	Then the bound on $\E\Brac{\Card{E(P_1, P_2, V')\setminus E^*}}$ follows by linearity of expectation.
	By definition of $V^*$, the set $V'$ does not contain $(\delta, n)$-heavy vertices as well as no edges of length at least $\delta$. so it satisfies conditions (3), (4) in \cref{lemma:oracle-heavy-vertices} . Finally condition (2) follows by the spreadness condition of feasible solutions.
\end{proof}

\subsection{The  oracle}\label{section:bounded-oracle-heavy-vertices}

We prove here \cref{lemma:oracle-heavy-vertices}. 
To simplify the description of the oracle, as in \cite{DBLP:conf-stoc-AroraK07}, we consider the following modification of \cref{eq:canonical-balanced-cut-primal-sdp}, which contains additional constraints. The two programs are equivalent as these constraints are \textit{implied} by the ones in \cref{eq:canonical-balanced-cut-primal-sdp}. 

\begin{equation}\label{eq:extended-balanced-cut-primal-sdp}
	\Set{
		\begin{aligned}
			\min &\iprod{L,X}\\
			&X_{ii}=1&\forall i \in [n] & \quad(\text{unit norm})\\
			&\iprod{T_p, X}\geq 0&\forall \text{paths $p$} &\quad(\text{triangle inequality})\\
			&\iprod{K, X}\geq  4an^2&&\quad(\text{balance})
		\end{aligned}
	}
\end{equation}

\begin{remark}
	We remark that, as we need not to explicitly write down the program, but only efficiently find separating hyperplanes, the use of \cref{eq:extended-balanced-cut-primal-sdp} does not imply an increase in the running time of the overall algorithm.
\end{remark}

We consider  the dual program of \cref{eq:extended-balanced-cut-primal-sdp}, which has variables $x_1,\ldots,x_n$ for each vertex, $f_p$ for every path $p$ and an additional variable $z$ for the  set $[n]$ considered in the primal.
\begin{equation}\label{eq:extended-balanced-cut-dual-sdp}
	\Set{
		\begin{aligned}
			\max &\sum_{i\in [n]}x_i + an^2z\\
			&\diag(x)+\sum_p f_pT_p+zK\sle L\\
			&f_p, z\geq 0 &\forall \text{ paths $p$}
		\end{aligned}
	}
\end{equation} 

Now, given a candidate solution $X$, our starting point is the procedure of \cref{lemma:result-heavy-vertices-removal-procedure}, which we use to remove vertices that are heavy in the current embedding. 

\begin{proof}[Proof of  \cref{lemma:oracle-heavy-vertices}]
	Throughout the proof, whenever we write a feedback matrix, all the variables that are not specified are set to $0$.
	Let $X$ be the matrix denoting the current embedding.
	We may assume without loss of generality that both the oracles in \cref{lemma:basic-oracle} and \cref{lemma:flow-oracle} outputted \textbf{yes} on $X$.
	We claim there are at most $\bar{C}\cdot \alpha/\delta$ edges of length at least $\delta$ in the embedding for some large enough constant $\bar{C}>0$.
	Suppose this is not the case, consider the following procedure: 
	\begin{itemize}
			\item Pick uniformly at random an $a$-balanced bipartition $A,B$.
			\item Compute the max $d$-regular $A$-$B$ flow with $d=O(\alpha)/n\,.$
	\end{itemize}
	In expectation the flow is larger than $\bar{C}\alpha$ (if it is smaller we have found a cut).
	If the flow is larger than $\bar{C}\alpha$ then let $F$ be the Laplacian of the flow graph and let $D$ be the Laplacian of the complete weighted graph where only edges $\ij$ with $i \in A\,, j \in B$  have weight $f_\ij$, and the rest have $0$ weight .
	Then by definition $\sum_p f_p T_p= F-D\,.$
	Thus we may set $x_i=\alpha/\Card{V(G)}$ for all $i \in V(G)$, $f_p$ as in the computed flow for all $p$ and all other variables to $0$. The feedback matrix $Y$ becomes $\frac{\alpha}{\Card{V(G)}}\Id+F-D-F=\frac{\alpha}{\Card{V(G)}}\Id-D$ and we have
	\begin{align*}
		\iprod{\frac{\alpha}{\Card{V(G)}}\Id-D, X}\leq \alpha -(\bar{C}-1)\cdot \alpha<  -\alpha<0\,.
	\end{align*}
	Notice also that $\Norm{\frac{\alpha}{\Card{V(G)}}\Id-D}\leq O(\alpha/\Card{V(G)})\,.$
	We repeat this procedure $O(\log n)^{100}$ times, by Markov's inequality the claim follows with probability at least $1-O(\log n)^{-99}\,.$
	Let $E^*\subseteq E(G)$ be the set of  edges of length at least $\delta$ in the embedding. Notice that we must have $\Card{E^*}\leq O(\alpha/\delta)\,.$
	
	Now we run the heavy vertex removal procedure \cref{algorithm:heavy-vertex-removal-procedure}. Let $(P_1, P_2, V')$ be the resulting partition.
	If the partition satisfies $(1), (2), (3), (4)$ in \cref{lemma:oracle-heavy-vertices} the result follows. Else, since the oracles  in \cref{lemma:basic-oracle} and \cref{lemma:flow-oracle} outputted \textbf{yes} on $X$, since there are no edges longer than $\delta$ in $E(G)\setminus E^*$ and since by construction $V'$ does not contain $(\delta, n)$ heavy vertices,  it must be that
	\begin{align}\label{eq:heavy-vertices-cut-large}
		\Card{E(P_1, P_2, V')\cap \Set{\ij \in E(G)\suchthat \Snorm{v_i-v_j}\leq \delta}}> C^*\cdot \frac{\alpha}{\delta}\cdot \Paren{1+\frac{\ell}{\delta}}\,,
	\end{align}
	for some large enough  constant $C^*>10^{10}C$, where $C>0$ is the universal constant of \cref{lemma:result-heavy-vertices-removal-procedure}.
	
	Let $d=100\cdot C^*\cdot \Paren{\frac{\alpha}{\Card{V(G)}\cdot \delta}\Paren{1+\frac{\ell}{\delta}}}$.
	We compute the maximum $d$-regular flows for each of the partitions $(P_1\cup P_2\,, V')$, $(P_1\,, P_2\cup V')$, $(P_2\,, P_1\cup V')$ as described in \cref{section:background} using the algorithm in \cref{theorem:max-flow-linear-time}. By \cref{eq:heavy-vertices-cut-large} at least one of these cuts has flow $\frac{C^*}{3}\cdot \frac{\alpha}{\delta}\cdot \Paren{1+\frac{\ell}{\delta}}$ as otherwise by duality we have found a $(a/2)$-balanced cut of value at most $C^*\cdot \frac{\alpha}{\delta}\cdot \Paren{1+\frac{\ell}{\delta}}+\Card{E^*}\leq O\Paren{\frac{\alpha}{\delta}\Paren{1+\frac{\ell}{\delta}}}$ as desired.
	Without loss of generality we may always assume this is the partition $(P_1\cup P_2\,, V')$.
	We distinguish two cases:
	\begin{enumerate}
		\item 		$\sum_{\ij \in E(P_1\cup P_2\,, V')\setminus E^*}f_\ij\Snorm{v_i-v_j}\geq \frac{C^*}{10^9}\Paren{\alpha\Paren{1+\frac{\ell}{\delta}}}\,,$
		\item		$\sum_{\ij \in E(P_1\cup P_2\,, V')\setminus E^*}f_\ij\Snorm{v_i-v_j}< \frac{C^*}{10^9}\Paren{\alpha\Paren{1+\frac{\ell}{\delta}}}\,.$
	\end{enumerate}
	Suppose we are in case 1. Let $F$ be the Laplacian of the weighted graph corresponding to the flow and let $D$ be the Laplacian of the complete weighted graph where only edges $\ij$ with $i \in P_1\cup P_2$ and $j \in V'$ have weight $f_\ij$, and the rest have $0$ weight.
	Since we are in case 1 we have
	\begin{align*}
		\iprod{D, X} \geq \frac{C}{10^9}\Paren{\alpha\Paren{1+\frac{\ell}{\delta}}}\,.
	\end{align*}
	Moreover, by definition $\sum_p f_p T_p= F-D\,.$
	Thus we set $x_i=\alpha/\Card{V(G)}$ for all $i \in V(G)$, $f_p$ as in the computed flow for all $p$ and all other variables to $0$. The feedback matrix $Y$ becomes $\frac{\alpha}{\Card{V(G)}}\Id+F-D-F=\frac{\alpha}{\Card{V(G)}}\Id-D$ and we have
	\begin{align*}
		\iprod{\frac{\alpha}{\Card{V(G)}}\Id-D, X}\leq \alpha - \frac{C}{10^9}\cdot \alpha\Paren{1+\frac{\ell}{\delta}}<  -\alpha<0\,.
	\end{align*}
	Moreover notice that $\Norm{\frac{\alpha}{\Card{V(G)}}\Id-D}\leq O\Paren{\frac{\alpha}{\Card{V(G)}}}+d\leq \tilde{O}\Paren{\frac{\alpha}{\Card{V(G)}}}$ where in the last step we used the inequality $\delta\geq \Omega(1/\log n)$.
	In conclusion, in this case the \oracle finds a separating hyperplane and outputs \textbf{no}. 
	Notice also that by construction $D$  has at most $O(m+n)$ non zero entries so the feedback matrix can be computed in time $O(m+n)$.
	
	It remains to consider case 2. 
	By \cref{lemma:result-heavy-vertices-removal-procedure} and Markov's inequality, we know that for any feasible solution $X^*$ to \cref{eq:balanced-cut-sdp} with objective value at most $\alpha$, it holds with probability at least $1/2$:
	\begin{align*}
		\E \Brac{\Card{E(P_1, P_2, V')\cap \Set{\ij \in E(G)\suchthat \Snorm{v_i-v_j}\leq \delta}}}\leq C\cdot \frac{\alpha}{\delta}\cdot \Paren{1+\frac{\ell}{\delta}}\,,
	\end{align*}
	where $C<C^*/ 10^{10}$.
	Thus repeating the procedure $(\log n)^{100}$ times, we get that, for any feasible solution $X^*$ with objective value $\alpha$, with probability at least $1-O(\log n)^{-100}$, for at least one of the resulting partitions $(P_1, P_2\,, V')$
	\begin{align*}
		\Card{E(P_1, P_2\,, V')\cap \Set{\ij \in E(G)\suchthat \Snorm{v_i-v_j}\leq \delta}}\leq C \cdot \frac{\alpha}{\delta}\cdot \Paren{1+\frac{\ell}{\delta}}\,.
	\end{align*} 

	Now consider again our candidate solution $X$ satisfying \cref{eq:heavy-vertices-cut-large}. If after $(\log n)^{100}$ trials we still satisfy \cref{eq:heavy-vertices-cut-large} and are always in case 2, then with probability $1-O(\log n)^{-100}$   there exist at least $\frac{C^*}{10}\cdot \frac{\alpha}{\delta}\cdot \Paren{1+\frac{\alpha}{\delta}}$ edges of length at most $\delta/10^8$ crossing the cut $E(P_1\cup P_2\,, V')\setminus E^*$. 
	
	By design of \cref{algorithm:heavy-vertex-removal-procedure}, then it must be the case that there exists a set of size $\Omega(n)$ of triplets $\Set{i,j,k}\subseteq C$ with $\ij \in E(G)$  and $k\in V^*$  such that $\Snorm{v_i-v_j}\leq \delta/10^8$, $\Snorm{v_j-v_k}\leq \Snorm{v_i-v_k}$ but
	\begin{align*}
		\bbP_{r\overset{u.a.r.}{\sim} [1,2]} \Paren{\Snorm{v_k-v_j}\leq \delta(1+r) \textnormal{ and } \Snorm{v_k-v_i}> \delta(1+r)}\geq 10^4 \cdot \frac{\Snorm{v_i-v_j}}{\delta}\,.
	\end{align*}
	
	Indeed if this scenario does not apply then we would have seen a partition violating \cref{eq:heavy-vertices-cut-large} with probability at least $1-O(\log n)^{-100}$ by the  argument used in the proof of \cref{lemma:heavy-vertex-removal-procedure-probability-cut}.
	
	So suppose this scenario applies, and consider such a triplet $\Set{i,j,k}$. Then we must have
	\begin{align}\label{eq:triplet-violating-triangle-inequality}
		\Snorm{v_i-v_k}\geq \Snorm{v_j-v_k}+ 10^4\cdot \Snorm{v_i-v_j}
	\end{align}
	so we are violating the triangle inequality.
	Furthermore, we know that the sum over each such triplets must satisfy \begin{align*}
		\sum_{\Set{i,j,k} \text{ satisfying \cref{eq:triplet-violating-triangle-inequality}}}\Snorm{v_i-v_j}\geq \Omega\Paren{\frac{\alpha}{\delta}\Paren{1+\frac{\ell}{\delta}}}\,.
	\end{align*} as otherwise with probability $1-O(\log n)^{-100}$ we would have found a cut violating \cref{eq:heavy-vertices-cut-large}. 
	Notice now that we can find such triangle inequalities in linear time by looking at the edges being cut and the vertices being picked at each iteration of \cref{algorithm:heavy-vertex-removal-procedure}.
	
	Thus set $x_i = \alpha/\Card{V(G)}$ for all $i\in V(G)$ and $f_p = \frac{C^* \alpha}{n}$ for $\Theta(n)$ such violated triangle inequalities and a large enough constant $C^*>0$. We set $F=\mathbf{0}$ and 
	\begin{align*}
		\iprod{\frac{\alpha}{\Card{V(G)}}\Id + \sum f_p T_p, X} \leq \alpha -O\Paren{\frac{\alpha}{\delta}} \leq -\alpha < 0\,,
	\end{align*}
	where in the last step we used the assumption $\delta < 1\,.$
	The width of the feedback matrix is at most $O(\alpha/\Card{V(G)})$ and it has $O(m+n)$ entries, thus it can be computed in $O(m+n)$ time.

	Finally we remark that choosing $d=O(\log n)^{100}$ the oracle is $d$-robust by \cref{lemma:jl}.
\end{proof}


\section{The semi-random hierarchical stochastic model}\label{section:semi-random-hsm}
In this section we consider the semi-random hierarchical stochastic model (HSM) from \cite{cohen2019hierarchical} and develop a nearly linear time algorithm that estimates the Dasgupta's cost of the underlying hierarchical clustering model upto constant factor. The main idea is to recursively compute an $O(1)$-approximation to Balanced Cut which produces a graph with $O(1)$-approximation to the Dasgupta's cost~\cite{Dasgupta16}.  Essentially most of this section is directly cited from \cite{cohen2019hierarchical} and we only provide it for the completeness. However, note that using \cref{theorem:main} we can improve the running time of the algorithm to the nearly linear time.  In the following subsection, we formally define the Dasgupta's cost of the graph and the hierarchical stochastic model. 

\subsection{Related notions}
Let $G = (V, E,w)$ be an undirected weighted graph with weight function $w : E \rightarrow \mathbb{R}^{+}$, where $\mathbb{R}^{+}$ denotes non-negative real numbers. For simplicity we let $w(x,y) = w(y,x) = w(\{x,y\})$. For set $U\subseteq V$ we define $G[U]$ to be the subgraph induced by $U$. A hierarchical clustering $T$ of graph $G$ is a rooted binary tree with exactly $|V|$ leaves, such that each leaf is labeled by a unique vertex $x \in V$. 

For $G = (V, E)$ and a hierarchical-clustering tree $T$ we denote the lowest common ancestor of vertex $x$ and $y$ in $T$ by $\text{LCAT} (x,y)$. For any internal node $N$ of $T$, we let $T_N$ to be the subtree of $T$ rooted at $N$ and we define $V(N)$ to be the set of leaves of the subtree rooted at $N$. 
Finally, for a weighted graph $G = (V, E,w)$ and any subset of vertices $A \subseteq V$ we define $w(A) = \sum_{x,y \in A} w(x,y)$, and for any set of edges $E0$, we let $w(E0) =\sum_{e \in E0} w(e)$.For any sets of vertices $A, B \subseteq V$ , we also define $w(A, B) = \sum_{x \in A, y \in B} w(x,y)$. 

Equipped with these notation we define the Dasgupta's cost of a graph for a tree as follows:

\begin{definition}[(Dasgupta's cost\cite{Dasgupta16, cohen2019hierarchical})]
Dasgupta's cost of the tree $T$ for the graph $G = (V, E, w)$
is defined as \[\text{cost}(T; G) = \sum_{(x,y)\in E}\text{leaves}(T[LCA(x; y)])\cdot w(x,y)\text{.}\]
\end{definition}

\begin{definition}[Ultrametric \cite{cohen2019hierarchical}]
A metric space $(X,d)$ is an ultrametric if for every $x,y, z \in X$, $d(x,y) \leq \max\{d(x, z),d(y, z)\}$. 
\end{definition}

We say that a weighted graph $G = (V, E,w)$ is  generated from an ultrametric if there exists an ultrametric $(X,d)$, such that $V \subseteq X$, and for every $x,y \in V$, $x \neq y$, $e = \{x,y\}$ exists, and $w(e) = f (d(x,y))$, where $f : \mathbb{R}^+ \rightarrow \mathbb{R}^+$ is a non-increasing function. For a weighted undirected graph $G = (V, E,w)$ generated from an ultrametric, in general there may be several ultrametrics and
corresponding functions $f$ mapping distances in the ultrametric to weights on the edges, that generate the same graph. It is useful to introduce the notion of a minimal ultrametric that generates $G$. Let $(X,d)$ be an ultrametric that
generates $G = (V, E, w)$ and $f$ the corresponding function mapping distances to similarities. Then we consider the ultrametric $(V,\tilde{d})$ as follows: (i) $\tilde{d}(u,u)=0$ and (ii) for $u\neq v$
\[\tilde{d}(u,v)=\tilde{d}(v,u) =\max_{u', v'}  d(u',v') | f(d(u',v'))=f(d(u,v)) \]

\begin{definition} [Generating Tree \cite{cohen2019hierarchical}] 
\label{ref:gen-tree}
Let $G = (V, E,w)$ be a graph generated by a minimal ultrametric $(V,d)$. Let $T$ be a rooted binary tree with $|V|$ leaves and
$|V | - 1$ internal nodes; let $N$ denote the internal nodes and $L$ the set of leaves of $T$ and let $\sigma : L \rightarrow V$ denote a bijection between the leaves of $T$ and nodes of $V$ . We say that $T$ is a generating tree for $G$, if there exists a weight function $W :\mathcal{N} \rightarrow R^+$, such that for $N_1, N_2 \in \mathcal{N}$, if $N_1$ appears on the path from $N_2$ to the root, $W(N1) \leq W(N2)$. Moreover
for every $x,y \in V$ , $w({x,y}) =W (\text{LCA} T (\sigma^{-1} (x), \sigma^{-1} (y)))$. 
\end{definition}

We say that a graph $G$ is a ground-truth input if it is a graph generated from an ultrametric. Equivalently, there exists a tree $T$ that is generating for $G$.

Now we are ready to define Hierarchical Stochastic Model graphs as follows:
\begin{definition}[Hierarchical Stochastic Model (HSM) \cite{cohen2019hierarchical}].
\label{Definition51}
Let $\widetilde{T}$ be a generating tree for an $n$-vertex
graph $\bar{G}$, called the expected graph, such that all weights are in $[0,1]$. A hierarchical stochastic
model is a random graph $G$ such that for every two vertices $u$ and $v$, the edge $\{u,v\}$ is present independently with probability $w(\{u,v\})=W(\mathrm{LCA}_T (\sigma^{-1}(u), \sigma^{-1}(v)))$, where $w$ and $W$ are the weights functions associated with $\widetilde{T}$ as per Definition \ref{ref:gen-tree}.
\end{definition}

In other words, the probability of an edge being present is given by the weight of the lowest common ancestor of the corresponding vertices in $\widetilde{T}$.

\subsection{The algorithm for the semi-random hierarchical stochastic model}

We generate a random graph, $G = (V, E)$, according to HSM (Definition \ref{Definition51}). The semi-random model  considers a random HSM graph generated as above where an adversary is allowed to only remove edges from $G$. Note that the comparison is to the cost of the generating tree on the graph $\bar{G}$ (Definition \ref{Definition51}). In this section we present the proof of Theorem \ref{theo61}.

Proof of Theorem~\ref{theo61} is a variant of Theorem 6.1 from \cite{cohen2019hierarchical} with nearly-linear running time that uses our fast algorithm for finding Balanced-Cut.

Let $\bar{G}_n=(\bar{V}_n,\bar{E}_n,w)$ be a graph generated according to an ultrametric, where for each $e \in \bar{E}_n$, $w(e) \in (0,1)$ (Definition \ref{ref:gen-tree}). 
Let $G=(V,E)$ be an unweighted random graph with $\left | V \right | = \left | \bar{V}_n \right |=n$ generated from $\bar{G}$ as follows. For every $u,v \in \bar{V}_n$ the edge $(u,v)$ is added to $G$ with probability $w((u,v))$ (Definition \ref{Definition51}). 

We assume that $V = \bar{V}_n$ 
and let $T$ be a generating tree for $\bar{G}$. Let $U \subseteq  V$. Let $\widetilde{T}|_U$ denote the restriction of $T$ to leaves in $U$ 
Let $N(U) $ be the root of $\widetilde{T}|_U$. Consider the following procedure where the nodes appear as leaves in the left and
right subtrees of the root of $\widetilde{T}|_U$. Suppose we follow the convention that the left subtree is never any smaller than the right subtree in $\widetilde{T}|_U$. We say that the canonical node of $\widetilde{T}|_U$ is the first left node $N_L$ encountered in a top-down traversal starting from $N(U)$ such that $(1-b)\cdot |U|\geq V(N_L)\geq b\cdot |U|$, where, $0<b<1/2$ is a constant.
We define $U_L=V(N_L)$, and $U_R=U \setminus  {U_L}$. We say that $(U_L,U_R)$ is the \textit{canonical} cut of $U$. It is easy to see that such a cut always exists since the tree is binary and left subtrees are never smaller than right subtrees. Let $E_{rnd}=\{(u,v)\in E | u \in U_L,v \in U_R\}$.

\begin{lemma}[\cite{cohen2019hierarchical}]
\label{lemma65}
For a random graph $G$ generated as described in Theorem ~\ref{theo61}, with probability at least $1-o(1)$, for every subset $U$ of size at least $n^{2/3}\sqrt{\log n}$, the subgraph $(U, E_{rnd})$ is geometrically expanding up to scale $(1/\sqrt{D}, n, \alpha)$ where
\begin{equation}
\label{eq:eq15}
    \alpha=C.\max\{w(L, R), |U|\cdot D \cdot \log^2 D,|U|\cdot D\cdot \log n\},
\end{equation}
Furthermore, the result also applies in the semi-random setting where an adversary
may remove any subset of edges from the random graph $G$.
\end{lemma}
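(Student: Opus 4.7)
The plan is to reduce the statement to Theorem~\ref{theorem:geometric-expansion-random-graphs} (geometric expansion of random bipartite graphs) applied to each candidate subset $U$, and then to handle the adversarial deletions via Fact~\ref{fact:robustness-geometric-expansion}. Fix $U$ with $|U| \geq n^{2/3}\sqrt{\log n}$ and let $(U_L, U_R)$ be its canonical cut. By the canonical-cut construction both sides satisfy $\min\{|U_L|,|U_R|\} \geq b\cdot|U|$ for a constant $b>0$. Crucially, for every pair $(u,v)\in U_L\times U_R$ one has $\mathrm{LCA}_{\widetilde T}(\sigma^{-1}(u),\sigma^{-1}(v)) = N(U)$, so by the definition of the HSM all edges of $E_{rnd}$ are independently present with a single common probability $p_{LR} := W(N(U))$. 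In other words, $(U, E_{rnd})$ is exactly an instance of steps~(i)--(ii) of Model~\ref{model:main} on $|U|$ vertices with balance parameter $b$ and edge density $p_{LR}$.

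Now I apply Theorem~\ref{theorem:geometric-expansion-random-graphs} to $(U, E_{rnd})$ with parameter $t$ chosen so that $100^{-t} = 1/\sqrt{D}$, i.e.\ $t = \Theta(\log D)$. This yields geometric expansion up to scale $(1/\sqrt{D}, |U|, \alpha_U)$ where
\begin{equation*}
\alpha_U \;=\; \Theta\bigl(|U|^2 \cdot p_{LR} \;+\; 100^{t}\cdot |U|\cdot t^2\bigr) \;=\; \Theta\bigl(|U|^2 p_{LR} + \sqrt{D}\cdot |U|\cdot \log^2 D\bigr).
\end{equation*}
The first term is exactly (up to constants) $w(L,R)$, since $w(L,R) = |U_L|\cdot|U_R|\cdot p_{LR} = \Theta(|U|^2 p_{LR})$. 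The second term is dominated by $|U|\cdot D\cdot \log^2 D$ once $D\geq 1$. Thus each individual $U$ satisfies the claimed expansion bound.

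To upgrade this to the uniform statement over all $U$, I would union-bound over $U$'s. The loose failure probability $|U|^{-\Omega(1)}$ advertised in Theorem~\ref{theorem:geometric-expansion-random-graphs} is insufficient to absorb $\binom{n}{|U|}$ terms, so I would go back to the underlying probabilistic ingredient—spectral/Chernoff concentration of the centered adjacency matrix of the random bipartite graph—and sharpen it to a failure bound of the form $\exp(-\Omega(|U|\cdot \log n))$. In our regime $|U|\geq n^{2/3}\sqrt{\log n}$ and $p_{LR}\geq p_{\min}=\Omega(\log n / n^{2/3})$, so the expected degree inside $(U,E_{rnd})$ is $\Omega(\log^{3/2} n)$, comfortably large enough for such sharp concentration. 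Since $\binom{n}{|U|}\leq n^{|U|} = \exp(|U|\log n)$, a union bound then succeeds with room to spare over all sizes $|U|\geq n^{2/3}\sqrt{\log n}$.

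The semi-random extension is immediate: the adversary is restricted to removing edges of $G$, in particular edges of $E_{rnd}=E(U_L,U_R)$, which is exactly the monotone operation covered by Fact~\ref{fact:robustness-geometric-expansion}, so geometric expansion survives the deletion. The main obstacle is genuinely the union-bound step; the black-box statement of Theorem~\ref{theorem:geometric-expansion-random-graphs} is too weak and requires reopening the proof to extract the strong tail (or, alternatively, restricting attention to the polynomially many subsets that actually arise in the recursive algorithm of Theorem~\ref{theo61}, at the cost of a more intricate inductive argument on the algorithm's execution).
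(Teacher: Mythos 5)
There is a genuine gap at the very first reduction step. You assert that for every pair $(u,v)\in U_L\times U_R$ one has $\mathrm{LCA}_{\widetilde T}(\sigma^{-1}(u),\sigma^{-1}(v)) = N(U)$, hence a single common edge probability $p_{LR}=W(N(U))$ across the cut, so that $(U,E_{rnd})$ is literally an instance of steps~(i)--(ii) of \cref{model:main}. This is false in general. The canonical node $N_L$ is the \emph{first} node in a top-down, always-left traversal from $N(U)$ with $(1-b)|U|\ge|V(N_L)|\ge b|U|$; when the left child of $N(U)$ is still too large, the traversal descends further, and then $U_R=U\setminus V(N_L)$ is a union of the right subtrees of several nodes $N_0=N(U), N_1, \dots, N_{k-1}$ along that path. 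An edge from $U_L$ to the right subtree of $N_i$ has probability $W(N_i)$, and these weights are ordered $W(N_0)\le W(N_1)\le\cdots\le W(N_{k-1})$, but they are not equal. Consequently the identity $w(L,R)=|U_L|\,|U_R|\,p_{LR}$ you use to match $\alpha_U$ to the first term of \eqref{eq:eq15} is also wrong; $w(L,R)$ is genuinely a sum of heterogeneous probabilities. One can repair this by splitting $U_R$ into its layers $U_R^{(i)}$, applying a geometric-expansion statement to each pair $(U_L,U_R^{(i)})$ (each of which \emph{is} a uniform bipartite random cut), and using that geometric expansion is additive in $\alpha$ under edge-disjoint unions on a common vertex set; but this costs a factor of the number of layers, and you would then need to check that the total still fits inside the $\max\{\cdot\}$ in \eqref{eq:eq15} (this is plausibly why the lemma's $\alpha$ has extra slack of the form $|U|D\log^2 D$ rather than $\sqrt D\,|U|\log^2 D$, and a further $|U|D\log n$ term). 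As written, the reduction ``$(U,E_{rnd})$ is exactly Model~\ref{model:main}'' does not hold.

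Beyond that, you correctly identify but do not close the union-bound gap: \cref{theorem:geometric-expansion-random-graphs} only guarantees failure probability $|U|^{-\Omega(1)}$, which cannot absorb $\binom{n}{|U|}$ subsets, and the black-box statement does not give you the $\exp(-\Omega(|U|\log n))$ tail you need. This is precisely what the third term $|U|\cdot D\cdot\log n$ in \eqref{eq:eq15} is there to pay for, and your derived $\alpha_U$ contains no analogue of it, which is a sign the union bound hasn't actually been discharged. Finally, note the mismatch in the second scale parameter: applying \cref{theorem:geometric-expansion-random-graphs} to the $|U|$-vertex graph $(U,E_{rnd})$ yields scale $(1/\sqrt D, |U|, \cdot)$, whereas the lemma as stated claims scale $(1/\sqrt D, n, \cdot)$; geometric expansion at scale $(\delta,|U|,\alpha)$ does not imply it at scale $(\delta,n,\alpha)$ for $n>|U|$, since the latter quantifies over strictly more sets $V'$. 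You should either argue this implication in your regime or establish that the lemma's $n$ is a notational shorthand for $|U|$. The application of \cref{fact:robustness-geometric-expansion} for the semi-random extension is fine.
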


\begin{theorem}
\label{theorem4}
For any graph $G=(V,E)$, and weight function $w:E \rightarrow \mathbb{R}^{+}$, the $\phi$-\textit{sparsest-cut algorithm} from \cite{cohen2019hierarchical}
outputs a solution of cost at most $O(\phi \cdot OPT)$.
\end{theorem}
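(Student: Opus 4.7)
The plan is to analyze the $\phi$-sparsest-cut algorithm recursively. The algorithm, as described in \cite{cohen2019hierarchical}, builds a hierarchical clustering tree $T$ top-down: given the current vertex set $S$ (initially $V$), it invokes a $\phi$-approximate sparsest-cut subroutine to partition $S$ into $(S_1,S_2)$, creates an internal node whose children are the recursively built trees on $G[S_1]$ and $G[S_2]$, and stops at singletons. First I would write Dasgupta's cost in the standard split-summed decomposition $\text{cost}(T;G)=\sum_{v \text{ internal}} |V(v)|\cdot w(A_v,B_v)$, which yields the recursion
\[
\text{cost}(T;G)=|V|\cdot w(V_1,V_2)+\text{cost}(T[V_1];G[V_1])+\text{cost}(T[V_2];G[V_2]).
\]

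Next I would set up a strong induction on $|V|$ with hypothesis $\text{cost}(T;G)\leq c\phi\cdot OPT(G)$ for some universal constant $c$. Two ingredients drive the induction. First, for any partition $(V_1,V_2)$, one has $OPT(G[V_1])+OPT(G[V_2])\leq OPT(G)$: restricting an optimal tree for $G$ to each side gives a valid (possibly suboptimal) tree on $V_i$, and the leaf-counts below each LCA can only decrease under restriction, so the restricted cost on $G[V_i]$ is at most the contribution of edges inside $V_i$ to $\text{cost}(T^*;G)$; those contributions sum to at most $OPT(G)$. Second, Dasgupta's structural lower bound asserts that $OPT(G)\geq \Omega(|V|)\cdot w(V_1^*,V_2^*)$ when $(V_1^*,V_2^*)$ is the optimal (suitably balanced) sparsest cut, with the constant depending on the balance. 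Combining this with the subroutine's guarantee $w(V_1,V_2)/\min(|V_1|,|V_2|)\leq \phi\cdot w(V_1^*,V_2^*)/\min(|V_1^*|,|V_2^*|)$ produces $|V|\cdot w(V_1,V_2)\leq O(\phi)\cdot OPT(G)$, and the inductive hypothesis applied to the two sub-instances closes the induction.

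The main obstacle is the precise form of Dasgupta's structural lower bound and matching it to the guarantee that the sparsest-cut subroutine actually delivers. Naive lower bounds such as $OPT(G)\geq \sum_e w(e)$ are far too weak; instead, one needs to charge each level of the optimal tree to a cut of sparsity at least $\sigma^*(G)$, in the style of Charikar--Chatziafratis, so that the levels cumulatively contribute $\Omega(|V|\cdot \sigma^*(G)\cdot \min(|V_1^*|,|V_2^*|))$ to $OPT$. A secondary subtlety is ensuring that the cut $(V_1,V_2)$ returned by the subroutine is balanced within constants of the balance of $(V_1^*,V_2^*)$; otherwise the ratio $\min(|V_1|,|V_2|)/\min(|V_1^*|,|V_2^*|)$ can inflate the top-level term and break the clean $O(\phi)$ bound. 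This is precisely why the algorithm of \cite{cohen2019hierarchical} is phrased in terms of a balanced approximate sparsest cut (or a post-processing step producing one). Once these two pieces are in place, the inductive combination above yields a tree of cost at most $O(\phi)\cdot OPT$, as claimed.
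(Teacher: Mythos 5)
First, note that the paper itself does not prove this statement: Theorem \ref{theorem4} is imported verbatim from \cite{cohen2019hierarchical} as a black box, so there is no in-paper proof to compare against; your attempt has to be judged against the known argument of Charikar--Chatziafratis / Cohen-Addad et al.

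Judged that way, there is a genuine gap at the point where you claim the induction ``closes.'' Your two ingredients are correct --- subadditivity $OPT(G[V_1])+OPT(G[V_2])\le OPT(G)$, and the structural lower bound giving $|V|\cdot w(V_1,V_2)\le O(\phi)\cdot OPT(G)$ for the top-level cut --- but combining them with the inductive hypothesis yields $\mathrm{cost}(T;G)\le O(\phi)\cdot OPT(G)+c\phi\cdot OPT(G)$, which is \emph{not} $\le c\phi\cdot OPT(G)$ for any fixed $c$: the constant grows by an additive $O(1)$ at every level of the recursion. What this argument actually proves is a bound of $O(\phi\cdot\mathrm{depth})\cdot OPT$, i.e.\ at best Dasgupta's original $O(\phi\log n)$ guarantee, and even that only if the returned cuts are balanced (a $\phi$-approximate sparsest cut need not be, so the depth can be $\Omega(n)$). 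To get the constant-factor bound one needs a charging scheme in which the top-level term $|V|\cdot w(V_1,V_2)$ is charged to a portion of the optimal tree that is consumed and never charged again --- e.g.\ to the edges running between the maximal clusters of $T^*|_S$ of size at most $2|S|/3$, with an accounting showing each edge of $G$ is charged $O(1)$ times over the entire recursion --- rather than to all of $OPT(G)$ at every level. Equivalently, one would need a telescoping inequality of the form $|V|\cdot w(V_1,V_2)\le O(\phi)\bigl(OPT(G)-OPT(G[V_1])-OPT(G[V_2])\bigr)$, which does not follow from the two facts you establish. This missing global charging step is the entire content of the theorem; without it the proposal does not deliver $O(\phi\cdot OPT)$.
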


Now we show the proof of Theorem~\ref{theo61} which is a variant of Theorem 6.1 from \cite{cohen2019hierarchical} using our nearly-linear time agorithm for Balanced-Cut.

\begin{proof}[Proof of Theorem \ref{theo61}]
Let $b=1/3$. By Theorem \ref{theorem4}, the recursive sparsest cut algorithm approximates Dasgupta's cost up to factor $O(\phi)$ assuming that at every recursion step, the algorithm is provided with a $\phi$-approximation to the $b$-Balanced Cut problem (i.e., minimize cut subject to the constraint that both sides have at least $b$ fraction of vertices being cut)


Note that $cost(\widetilde{T};\bar{G})= \Omega(n^3 \cdot p_{min})=\Omega\left(n^{7/3} \cdot \log n\right)$. Therfore,  once we obtain sets $U$ of size $(n_0=n^{2/3} \cdot \log n)$, since there are at most $n/n_0$ of them, even if we use an arbitrary tree on any such $U$, together this can only add $O(\frac{n}{n_0}
\cdot n^3_0)=O(n^{13/9} \cdot (\log n)^2)=O\left(n^{7/3} \cdot \log n\right)$ to the cost. Thus, we only need to obtain suitable approximations during the recursive procedure as long as $|U| \geq n^{2/3} \cdot \log n$. This
is precisely given by using Lemma~\ref{lemma65}.  Let $D=O(\log n)$, $\delta=\frac{1}{\sqrt{D}}=O\left(\frac{1}{\sqrt{\log n}}\right)$,  let $\kappa\geq \Omega(\sqrt{\log n})$.  Observe that in Equation~\ref{eq:eq15}, $w(L,R)=\Omega(|U|^2 \cdot p_{min})=\Omega \left(n^{2/3}(\log n)^3\right)$, $|U|D \log^2 D = o(|U|D \log n)$, and $D|U|\log n = O\left(n^{2/3}(\log n)^3\right)$. Let $\alpha = O(w(L,R))$. Thus, by Theorem \ref{theorem:balanced-cut-technical} there exists an algorithm that runs in time $\tilde{O}\Paren{\Card{V(G)}+\Card{E(G)}}$ and returns a cut that is an approximation to the $\Omega(b)$-balanced partition $(S,T)$ with cut  of size $\Card{E(S, T)}\leq O(\alpha\cdot  (1+\delta \cdot \kappa\cdot \sqrt{\log n})) = O(\alpha)$ on the induced subgraph of $\bar{G}$ on the vertex set $U$. This observation together with 
the case where subgraphs have size less than $n^{2/3}\log n$ finishes the proof.
\end{proof}

\end{document}